\numberwithin{equation}{section}
\renewcommand\subsection{\@startsection{subsection}{2}%
  \z@{.5\linespacing\@plus.7\linespacing}{.5\linespacing}%
  {\normalfont\scshape}}
\renewcommand\subsubsection{\@startsection{subsubsection}{3}%
  \z@{.5\linespacing\@plus.7\linespacing}{-.5em}%
  {\normalfont\scshape}}
\newtheorem{theorem}{Theorem}
\newtheorem{corollary}{Corollary}[theorem]
\newtheorem{lemma}{Lemma}
\newtheorem{proposition}{Proposition}
\def\be {\begin{equation}}
\def\ee {\end{equation}}
\def\bea {\begin{eqnarray}}
\def\eea {\end{eqnarray}}
\renewcommand{\=}{\; = \;}
\date{}
\begin{document}
	
\vspace{5cm}	
\title{The immortal dyon index is positive}
\author{Mart\'i Rossell\'o}
\email{marti@gate.sinica.edu.tw}

\maketitle

\bigskip

\maketitle
\thispagestyle{empty}
\centerline{ \it ~Institute of Mathematics, Academia Sinica, Taipei, Taiwan.}

\bigskip

\begin{abstract}  
The microstates of supersymmetric black holes in asymptotically flat four-dimensional spacetime are expected to be bosonic due to the spherical symmetry of their horizons. This implies that the index counting the difference between bosonic and fermionic black hole microstates must be positive, as conjectured by Sen in \cite{Sen:2011ktd}. We show that the conjecture is satisfied by the index which counts $\frac{1}{4}$-BPS states in $\mathcal{N}=4$ string theory forming single centered black holes,  called immortal dyons. The proof relies on the relation between the indexed degeneracies and the coefficients of a family of mock modular forms%
, thus showing that black holes predict the sign of the Fourier coefficients of mock modular forms. %

\end{abstract}

\newpage

\tableofcontents

\newpage

\section{Introduction and statement of results}

Black holes have been a major source of insight into quantum gravity. The holographic principle and its concise realisation in the AdS/CFT correspondence can ultimately be seen as a consequence of the Bekenstein-Hawking area law for black hole entropy.

In this work, black holes are the source of a non-trivial mathematical statement. Namely, we show that black holes predict the sign of the Fourier coefficients of a meromorphic Siegel modular form, as conjectured by Sen in \cite{Sen:2011ktd}, or equivalently, the sign of the Fourier coefficients of a family of mock modular forms.%

 Such a precise statement originates from the study of supersymmetric black holes in string theory, for which the microscopic origin of the Bekenstein-Hawking entropy has long been understood as coming from the large degeneracy of bound states of strings and branes \cite{Sen:1995in, Strominger:1996sh}. The quantum generalization of the Bekenstein-Hawking-Wald entropy for extremal black holes, called the quantum entropy function, can be  defined as a path integral over string fields by making use of the AdS$_2$ factor in their near-horizon geometry \cite{Sen:2008vm}.
 This definition of quantum entropy allows for reproducing the counting of supersymmetric black hole microstates from the gravitational path integral by using supersymmetric localization in supergravity \cite{Banerjee:2009af,Dabholkar:2010uh,Dabholkar:2011ec,Gupta:2012cy,Dabholkar:2014ema,Gupta:2015gga,Murthy:2015yfa,Murthy:2015zzy,Jeon:2018kec,deWit:2018dix,Iliesiu:2022kny,LopesCardoso:2022hvc,Sen:2023dps,GonzalezLezcano:2023uar}.  The %
growing body of evidence points towards
the exact equivalence between the two complementary pictures of a black hole given by string theory: %
a bound state of strings and branes, %
and a solution to the supergravity equations of motion with an event horizon. %

The enumeration of states (or multiplets) at weak string coupling is performed through a supersymmetric index, which counts the difference between bosonic and fermionic BPS states annihilated by a fraction of the supercharges. The agreement between the number of indexed microstates in the limit of large charges and the semiclassical answer for the entropy indicates that at strong coupling, and in the large charge limit, the number of bosonic states must be much larger than the number of fermionic states.

For supersymmetric black holes in 4d asymptotically flat spacetime, a stronger argument can be made by exploiting the spherical symmetry of their horizons,\footnote{The heuristic argument for extremal black holes in 4d asymptotically flat spacetime is based on their solitonic behaviour. Being solutions to the equations of motion, the curvature of the AdS$_2$ factor in their near-horizon geometry can only be cancelled by an $S^2$ factor. 
The precise argument for the spherical symmetry of black holes preserving 4 supersymmetries follows by demanding the closure of the supersymmetry algebra \cite{Dabholkar:2010rm}, which implies that the symmetry group contains an $SU(2)$ factor which can be identified with a subgroup of spatial rotations.}
which implies that they carry zero angular momentum. Furthermore, the quantization of the fermion zero modes associated to the breaking of some supersymmetries by the black hole does not affect the spherical symmetry of the horizon because these zero modes live outside the horizon \cite{Banerjee:2009uk, Jatkar:2009yd}. Since extremal black holes possess an AdS$_2$ factor in their near horizon geometry, using AdS$_2$/CFT$_1$ arguments \cite{Sen:2008vm}, Sen postulated that supersymmetric black holes in four dimensions form an ensemble of states all of which carry zero angular momentum \cite{Sen:2009vz}, therefore providing a sharp prediction for the sign of the index for all charges \cite{Sen:2011ktd}.\footnote{Naively, one would expect the ensemble of black hole microstates to have zero average angular momentum, but the AdS$_2$ boundary conditions fix the ensemble to be the microcanonical one where the charges are fixed.}

The exact matching between the index counting states at weak string coupling and the index defined by the quantum entropy function, computed by supersymmetric localization of the path integral in supergravity, was obtained in \cite{Iliesiu:2022kny}. These results give additional support to the previous positivity conjecture by arguing that the change of boundary conditions from the gravitational index to the path integral computing the black hole microstate degeneracies leaves the answer invariant. 

The preceding discussion can be summarized in the following schematic form,
\begin{equation}
    \text{Tr}\,(-1)^{F} \= \text{Tr} \,1 \,,
\end{equation}
where the trace is taken over the black hole microstates with fixed charges and $F$ is the fermion number, which can be related to the third component of angular momentum $J$  through $(-1)^F = e^{2\pi i J}$.

The previous arguments apply to the index counting horizon degrees of freedom of a black hole. However,  the counting of microstates is typically done through a spacetime helicity supertrace \cite{Bachas:1996bp, Gregori:1997hi}. In the present work, we are interested in the index counting $\frac{1}{4}$-BPS states in unorbifolded $\mathcal{N}=4$ string theory compactified to four dimensions, meaning heterotic string theory compactified on $T^6$, or its dual type II compactified on $K3\times T^2$, for which a formula for the indexed degeneracies was first conjectured in \cite{Dijkgraaf:1996it}. For these states, which break $12$ of the $16$~supersymmetries, the pertinent index is the sixth helicity supertrace, which we introduce in the next section. The relevant point is that the trace computes the index of asymptotic states, and not just horizon states, therefore possibly receiving additional contributions external to the horizon. The additional contributions to the helicity supertrace can be of three types \cite{Sen:2011ktd} --  hair modes, multi-centered black holes and horizonless smooth solutions. We expand on the treatment for the first two cases and refer the reader to \cite{Sen:2011ktd} for the reasons why no contributions from horizonless solutions to the index are expected.

Hair modes are normalizable fluctuations which have support outside the horizon of the black hole. The quantization of the fermionic hair modes which are not the fermionic zero modes associated with broken supersymmetry can produce states which preserve supersymmetry and contribute to the index, therefore possibly spoiling the positivity argument. Since the hair modes studied in \cite{Banerjee:2009uk} arise from momentum along the compact directions, in a frame where all the charges of the black hole are pure D-brane charges there are no expected extra hair modes (as in the case of $\frac{1}{8}$-BPS states in~$4d $~$\mathcal{N}=8$ string theory). We assume that such frame exists for $\frac{1}{4}$-BPS black holes. For these black holes, hair modes were studied in the type IIB frame \cite{Banerjee:2009af, Chattopadhyaya:2020yzl} and shown to be bosonic thereby not spoiling the positivity conjecture.\footnote{In fact, in this frame they provide a more stringent conjecture than the one we prove in this work.}

The previous arguments, which dealt with the horizon of a single black hole, break down for multi-centered black holes since in this case the whole configuration can have angular momentum. Since the helicity trace index  at a generic point in the moduli space and for generic charges can receive contributions from multi-centered black holes, one must either subtract their contribution to the index or work at a point in the moduli space where there are no contributions from multi-centered black holes. For the index counting $\frac{1}{4}$-BPS states in $\mathcal{N}=4$ string theory, the multi-centered contributions come necessarily from bound states of two $\frac{1}{2}$-BPS states \cite{Dabholkar:2009dq}. These contributions are under control as there is a prescribed way to extract the single centered contributions from the index \cite{Cheng:2007ch}. Single centered black holes, which carry both electric and magnetic charges, exist at every point in the moduli space and are therefore called immortal dyons. We call the index counting their degeneracies the immortal dyon index.

Postponing the precise expression of the index to the next section, it now suffices to know that the charges carried by the $\frac{1}{4}$-BPS states can be encoded in three integers $(m,n,\ell)$ and we denote immortal dyon index, the indexed degeneracy for $\frac{1}{4}$-BPS states forming a single centered black hole, by $d_*(m,n,\ell)$. Sen's conjecture \cite{Sen:2011ktd} then can be stated as
\begin{equation} \label{eq:positivity-conjecture}
     d_*(m,n,\ell) \, >\, 0\,,
\end{equation}
for the values $(m,n,\ell)$ for which there exists a single centered black hole, which satisfy $m,n\geq 1$ and $4mn-\ell^2>0$. Evidence for the conjecture \eqref{eq:positivity-conjecture} was given in the limit of large charges in \cite{Sen:2007qy, Dabholkar:2010rm}, for the first few values of the charge bilinears in \cite{Sen:2011ktd}, and for $m=1,2$ in \cite{Bringmann:2012zr}. The quantity $d_*(m,n,\ell)$ is defined in \eqref{eq:immortal-dyon-index} as the Fourier coefficient of a meromorphic Siegel modular form at a particular point in the Siegel upper-half space which is charge dependent. For generic values of $(m,n,\ell)$, it is not straightforwardly computed. However, as shown in Section \ref{sec2}, one can define the following mock Jacobi form \cite{Dabholkar:2012nd} which captures part of the spectrum with fixed charge bilinear $m$,
\begin{equation} \label{eq:psimf-intro}
    \psi_m^F(\sigma,v) \= \sum_{\substack{n,\ell\in \mathbb{Z} \\ n\geq -1}} c_m^F(n,\ell) q^n y^\ell\,, 
\end{equation}
where $q = e^{2\pi i \sigma}$  and $ y=e^{2\pi i v}$.  This provides an alternative route for computing $d_*(m,n,\ell)$ for a restricted set of charge bilinears $(m,n,\ell)$,
since for the ranges~$0 \leq \ell < 2m, 2n$, the generating function $\psi_m^F(\sigma,v)$ encodes the indexed degeneracies of the single centered black hole microstates  $d_*(m,n,\ell)$,
\begin{equation}
    d_*(m,n,\ell) \=  (-1)^{\ell+1}\,c^F_m(n,\ell) \,.
\end{equation}
Therefore, the conjecture \eqref{eq:positivity-conjecture} becomes a prediction for the sign of the coefficients of a family of mock Jacobi forms. Our main result is the following.
\begin{theorem} \label{Main-Theorem}
    Let $m>0$, $n\geq m$ and $0\leq \ell\leq m$, then
\begin{equation}
    (-1)^{\ell+1}\,c^F_m(n,\ell) \,>\,0 \,.
\end{equation}
\end{theorem}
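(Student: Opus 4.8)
The plan is to realise $\psi_m^F(\sigma,v)$ explicitly as a mock Jacobi form and then decompose it into pieces whose positivity can be controlled separately. The starting point is the known structure of the $\frac14$-BPS counting function: the meromorphic Siegel form $1/\Phi_{10}$ has a Fourier--Jacobi expansion $\sum_m \psi_m(\sigma,v)\, p^m$ whose coefficients $\psi_m$ are meromorphic Jacobi forms of weight $-10$ with a double pole on the torsion divisor $v=0$. Following Dabholkar--Murthy--Zagier, one splits $\psi_m = \psi_m^P + \psi_m^F$ into a \emph{polar} (Appell--Lerch) part, which packages the two-centered contributions, and a \emph{finite} part $\psi_m^F$, which is a genuine (mock) Jacobi form of weight $-10$ and index $m$ and whose coefficients $c_m^F(n,\ell)$ are the immortal dyon degeneracies up to the sign $(-1)^{\ell+1}$. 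The target inequality is thus positivity of $(-1)^{\ell+1}c_m^F(n,\ell)$ for $n\ge m$, $0\le\ell\le m$.

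The main idea for the positivity is to trade the mock object for an honest one. Multiply $\psi_m^F$ by a suitable power of $\eta$ or, better, use the fact that the shadow of $\psi_m^F$ is a fixed unary theta series times a cusp form, so that $\psi_m^F$ differs from a meromorphic Jacobi form of \emph{positive} index by completion terms that one can track. Concretely, I would pass through the theta decomposition $\psi_m^F(\sigma,v)=\sum_{\ell \bmod 2m} h_{m,\ell}(\sigma)\,\vartheta_{m,\ell}(\sigma,v)$, where the $h_{m,\ell}$ are the mock modular components, and then express each $h_{m,\ell}$ in terms of the Rademacher-type expansion or a Hardy--Littlewood circle-method formula. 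Since the relevant Siegel form $1/\Phi_{10}$ has a pole structure dominated by a single leading term (the $\frac12$-BPS $\times$ $\frac12$-BPS channel), the circle method gives $c_m^F(n,\ell)$ as a leading Bessel-function term with a controlled sign, plus an error that is exponentially subleading and can be bounded uniformly once $n\ge m$; the constraint $n\ge m$, i.e. $4mn-\ell^2\ge 4m^2-\ell^2\ge 3m^2>0$, is exactly what keeps the discriminant bounded below so the leading term dominates.

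An alternative, and probably cleaner, route is to exploit the relation to a concrete family of weakly holomorphic (or mock) modular forms of a single variable whose positivity can be established by an explicit generating-function identity. For small $\ell$ the component functions $h_{m,\ell}$ should be expressible through products and quotients of $\eta$, $\vartheta$ and the weight-$\frac12$ Appell sum $\mu(\sigma,v)$; writing $(-1)^{\ell+1}c_m^F(n,\ell)$ as the coefficient of a manifestly non-negative $q$-series (a product of series with non-negative coefficients, after clearing the known negative contributions coming only from the polar part) would finish the proof. I would look for such an identity first in the boundary cases $\ell=0$ and $\ell=m$, where the theta decomposition degenerates, and then interpolate.

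The hard part will be step two: controlling the mock (non-holomorphic completion) contributions uniformly enough to conclude strict positivity for \emph{all} $n\ge m$ rather than asymptotically. The circle method gives the right heuristic, but turning the subleading Kloosterman/Bessel sums into a rigorous bound that never overwhelms the main term — especially near the edge $n=m$ and for $\ell$ close to $m$, where the main term is smallest — is the crux. I expect this to require either a careful effective version of the Rademacher expansion for $1/\Phi_{10}$ with explicit constants, or an exact combinatorial identity of the type indicated in the third paragraph that sidesteps analytic estimates altogether; finding that identity is the real obstacle, and it is presumably where the paper's argument does its work.
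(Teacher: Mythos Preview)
Your high-level strategy matches the paper's: use an effective Rademacher/circle-method expansion for the mock Jacobi form $\psi_m^F$, isolate a dominant positive Bessel term, and bound everything else against it, with the inequality $\Delta\ge 3m^2$ (from $n\ge m$, $0\le\ell\le m$) playing exactly the role you anticipate. The paper indeed uses the explicit Rademacher expansion of Ferrari--Murthy and Cardoso et al.\ for $c_m^F(n,\ell)$, not a combinatorial $q$-series identity; your ``alternative, cleaner'' route via manifestly non-negative products is not what is done and no such identity appears.

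That said, your outline has two genuine gaps relative to what actually makes the argument go through. First, you never say \emph{which} term is the positive leading one or why its sign is controlled. The mechanism is specific: at $\gamma=1$ the generalized Kloosterman sum collapses to $e^{2\pi i \ell\tilde\ell/2m}/\sqrt{2mi}$, so precisely the polar terms with $\tilde\ell=m$ contribute with the sign $(-1)^\ell$, cancelling the $(-1)^\ell$ prefactor. The paper then shows, separately for each $\tilde n$, that this $\tilde\ell=m$ contribution dominates the sum of the $0\le\tilde\ell<m$ contributions of the \emph{same} $\tilde n$. Treating it as a single ``leading term plus error'' is not enough; the $\tilde n$-by-$\tilde n$ stratification is what keeps the estimates uniform. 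Second, the comparison requires quantitative control of the polar coefficients $c_m^F(\tilde n,\tilde\ell)$ for $\tilde\Delta<0$ as $\tilde\ell$ moves away from $m$. The paper uses an explicit closed formula for these, expressed as a finite sum over $SL(2,\mathbb Z)$ matrices coming from the continued fraction of $\tilde\ell/2m$, and proves Lemmas bounding $c_m^F(\tilde n,m-k)$ against the first term $\hat c_m^F(\tilde n,m-k)$ by factors linear in $k$ or $\sqrt{k}$. Without this input, the Rademacher tail cannot be bounded in the borderline regime you yourself flag ($n=m$, $\ell$ near $m$). Your proposal correctly names the obstacle but does not supply the tool; the continued-fraction polar-coefficient formula is that tool.
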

The cases $m=1,2$ of this theorem were proved in \cite{Bringmann:2012zr}, so in this work we restrict ourselves to $m\geq 3$.  

As we explain in Section \ref{sec2}, as a consequence of the $S$-duality invariance of the single centered degeneracies, we have
\begin{corollary} \label{Main-Corollary}
    Let $m,n,\ell\in\mathbb{Z}$ with $m,n\geq1$ and $\Delta = 4mn-\ell^2>0$. Then,
\begin{equation}
    d_*(m,n,\ell) \,>\,0 \,.
\end{equation}
\end{corollary}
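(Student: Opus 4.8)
Since $d_*(m,n,\ell)$ is invariant under the $S$-duality action of $SL(2,\mathbb{Z})$ on the quadratic form $m x^{2}+\ell x y+n y^{2}$, and under $\ell\mapsto-\ell$ and $m\leftrightarrow n$, the reduction theory of positive-definite binary quadratic forms brings any triple with $m,n\ge 1$ and $\Delta=4mn-\ell^{2}>0$ into the wedge $0\le\ell\le m\le n$, in which $d_*(m,n,\ell)=(-1)^{\ell+1}c^F_m(n,\ell)$. Hence Corollary \ref{Main-Corollary} follows from Theorem \ref{Main-Theorem} (for $m\ge 3$) together with \cite{Bringmann:2012zr} (for $m=1,2$); it is Theorem \ref{Main-Theorem} that has to be proved, and the plan below extends the method of \cite{Bringmann:2012zr} to all $m\ge 3$.

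The idea is to compute $c^F_m(n,\ell)$ by a Hardy--Ramanujan--Rademacher circle method. Recall from Section \ref{sec2} that $\psi_m$ is the $m$-th Fourier--Jacobi coefficient of the meromorphic Siegel form $1/\Phi_{10}$ of weight $-10$, and that $\psi_m=\psi_m^P+\psi_m^F$, where $\psi_m^P$ is an average of Appell--Lerch sums accounting for all poles of $\psi_m$ in $v$ and has explicit Fourier coefficients, while $\psi_m^F$ is a mock Jacobi form of weight $-10$ and index $m$ with an explicit shadow. Passing to the vector-valued mock modular forms $h_{m,\ell}$ of the theta decomposition $\psi_m^F(\sigma,v)=\sum_{\ell\bmod 2m}h_{m,\ell}(\sigma)\,\vartheta_{m,\ell}(\sigma,v)$, whose non-holomorphic completions $\widehat h_{m,\ell}$ form a vector-valued harmonic Maass form of weight $-\tfrac{21}{2}$, one applies the theory of Poincar\'e series for such forms to obtain a convergent exact formula of the shape
\begin{equation}
 c^F_m(n,\ell)\=\sum_{c\ge 1} K_c(m,n,\ell)\, I_{23/2}\!\Bigl(\frac{\pi\sqrt{\Delta}}{c}\Bigr)\,,
\end{equation}
with $\Delta=4mn-\ell^{2}$ and $K_c$ a normalized Kloosterman-type sum built from the principal parts of the $\widehat h_{m,\ell'}$ at all cusps (those away from $\infty$ being governed by the shadow). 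This is the Rademacher expansion of the Fourier coefficients of $1/\Phi_{10}$ with the two-centered, i.e.\ $\psi_m^P$, contribution subtracted, and its $c=1$ term reproduces the single-centered black hole entropy $\pi\sqrt{\Delta}$.

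The remaining steps are: (i) $K_1(m,n,\ell)$ equals $(-1)^{\ell+1}$ times a strictly positive constant -- the positivity coming from the leading polar coefficient of $\psi_m^F$ and the sign from the theta--Kloosterman phase -- so that $(-1)^{\ell+1}$ times the $c=1$ term is strictly positive. (ii) The $c\ge 2$ terms, whose Kloosterman sums are bounded trivially by $O(c)$ and more sharply by Weil/Sali\'e-type estimates, have total absolute value strictly below the $c=1$ term. (iii) The hypotheses $n\ge m$, $0\le\ell\le m$, $m\ge 3$ force $\Delta\ge 3m^{2}\ge 27$; the bound $\Delta\ge 27$ is needed because the estimate in (ii) only closes once $\Delta$ exceeds an absolute constant, while $\Delta\ge 3m^{2}$ is needed so that the exponential $e^{\pi\sqrt{\Delta}}$ in the $c=1$ term outpaces the polynomial-in-$m$ bound on the remainder, uniformly in $m$; the assumptions also keep $\ell$ in the range where the relevant component of the theta decomposition reconstructs $d_*$. (iv) The finitely many triples left uncovered by the asymptotic estimate -- possibly all $(m,n,\ell)$ with $m$ below an explicit bound -- are checked by computing $c^F_m(n,\ell)$ directly from the $q$-expansion of $\psi_m^F$. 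Combining (i)--(iv) gives $(-1)^{\ell+1}c^F_m(n,\ell)>0$.

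The main obstacle is making the estimates in (ii) effective and uniform in $m$. First, the inequality must hold down to $\Delta\approx 27$, where the order $\tfrac{23}{2}$ of the Bessel function is comparable to its argument $\pi\sqrt{\Delta}\approx 16$: one sits near the Airy transition regime, the naive asymptotic $I_\nu(x)\sim e^{x}/\sqrt{2\pi x}$ is not yet accurate, and one needs sharp two-sided bounds on $I_\nu$ with the constants tracked precisely enough that the strict inequality survives. Second, the principal parts of the completed mock forms at the cusps other than $\infty$ -- the imprint of the shadow, which has no counterpart in the classical Rademacher story for holomorphic forms -- must be shown to contribute subleadingly throughout the wedge $n\ge m$; this is exactly why the theorem is stated away from the polar region $\Delta<0$, and it is here, together with the uniform control of the Kloosterman tail, that the bulk of the work lies.
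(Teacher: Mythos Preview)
Your overall plan matches the paper's: reduce to the wedge $0\le\ell\le m\le n$ by $S$-duality, use the Rademacher expansion of $\psi_m^F$, show the dominant piece is positive, bound the tail in $\gamma$ and the mock corrections (the $I_{12}$ and $I_{25/2}$ terms), and exploit $\Delta\ge 3m^2$ throughout.

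The genuine gap is in your step (i). The $\gamma=1$ contribution is not a single term $K_1\,I_{23/2}(\pi\sqrt{\Delta})$ of definite sign: it is a sum over \emph{all} polar coefficients $c_m^F(\tilde n,\tilde\ell)$ with $\tilde\Delta<0$, each multiplied by $I_{23/2}\bigl(\tfrac{\pi}{m}\sqrt{\Delta|\tilde\Delta|}\bigr)$ and carrying the Kloosterman phase $e^{\pi i\ell\tilde\ell/m}$. Only the summands with $\tilde\ell=m$ pick up the sign $(-1)^\ell$ and become manifestly positive once the global $(-1)^\ell$ is stripped off; the many summands with $\tilde\ell\ne m$ have uncontrolled phase and must be shown to be dominated, in absolute value, by the $\tilde\ell=m$ ones. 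Establishing this is the bulk of the paper: it requires effective bounds, uniform in $m$, on the ratios $c_m^F(\tilde n,\tilde\ell)/c_m^F(\tilde n,m)$, obtained from an explicit formula for the polar coefficients as a sum over continued-fraction convergents of $\tilde\ell/2m$ built out of the Fourier coefficients of $\eta^{-24}$. Your sketch omits this entirely, and without it the $c=1$ term is not known to be positive. A secondary issue is your step (iv): for each small $m$ there are infinitely many $(n,\ell)$, so direct $q$-expansion is not an option; what works instead is to compute the finitely many polar coefficients for $m=3,4,5$ and rerun the Rademacher estimates with those explicit numbers.
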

Or, as stated in the title, 

\begin{center}
    \textit{The index computing the single centered degeneracies of $\frac{1}{4}$-BPS dyons \\ in type II string theory compactified on $K3\times T^2$ is positive.}
\end{center}

The main idea behind the proof of the positivity conjecture for the single centered $\frac{1}{4}$-BPS states comes from the following physical intuition. The matching between the microscopic and macroscopic answers relies, so far, on the modularity properties of the index counting the BPS degeneracies, which allows for an analytic expression of the indexed degeneracies. This analytic expression, known as the Rademacher expansion, takes the form of a convergent series with a leading term and infinitely many suppressed corrections whose sum yields an integer. This expansion is then matched to the localization computation, where the path integral takes the form of a sum over saddles with the leading term corresponding to the near-horizon geometry of the black hole and the infinitely many exponentially suppressed contributions come from orbifolds of the near-horizon geometry \cite{Banerjee:2008ky,Banerjee:2009af,Murthy:2009dq,Dabholkar:2014ema,LopesCardoso:2022hvc} having the same boundary conditions as the unorbifolded one. The intuition is then that the saddle corresponding to the near-horizon geometry gives the dominant, positive contribution to the degeneracy which is not overshadowed by the remaining ones.

The Rademacher expansion for the coefficients $(-1)^{\ell+1}c_m^F(n,\ell)$ \cite{Bringmann:2010sd, Ferrari:2017msn, Chowdhury:2019mnb, LopesCardoso:2020pmp, Cardoso:2021gfg} is given in \eqref{eq:Rademacher-formula}. Following the previous discussion, the proof of Theorem \ref{Main-Theorem} simply consists in identifying the leading terms with a positive contribution and bounding the remaining infinitely many contributions against them.

We stress that the result in Corollary \ref{Main-Corollary} does not imply that all microstates are bosonic. It implies that, for each charge, the index receives more contributions from bosonic states than fermionic ones. To show that all microstates are bosonic, and furthermore that each of them carries zero angular momentum, one has to follow a different route, possibly adapting the results in \cite{Chowdhury:2014yca, Chowdhury:2015gbk, Maji:2023ims} for $\frac{1}{8}$-BPS states in $\mathcal{N}=8$ string theory to the case of $\frac{1}{4}$-BPS states in $\mathcal{N}=4$ string theory. 

The results of this work provide further evidence \cite{Sen:2011ktd} for the description of $\frac{1}{4}$-BPS states at strong coupling as black holes even for small charges. A further check would be to find a proof of the conjecture in \cite{Sen:2011ktd} for CHL models \cite{Chaudhuri:1995fk}, for which one could use the results about the sign of the index and the hair modes \cite{Chattopadhyaya:2017ews, Chattopadhyaya:2018xvg, Chattopadhyaya:2020yzl, Govindarajan:2022nzb, Chattopadhyaya:2023uov}.

\bigskip

The organization of the paper is as follows. In Section \ref{sec2} we define the index and give its expression in terms of the Fourier coefficients of a meromorphic Siegel modular form. We review the connection to mock modular forms and provide the exact expression for the indexed degeneracies with all the relevant details for the proof, including some unknown aspects of the formulae. In Section \ref{sec3 proof} we prove several small lemmas and combine them for the proof of Theorem \ref{Main-Theorem} for the cases $m\geq 6$. The proof of the cases $m=3,4,5$ is given in Appendix \ref{appendixA}. %
 
\section{Black hole degeneracies in \texorpdfstring{$\mathcal{N} = 4$}{N = 4}  string theory} \label{sec2}

\bigskip

The study of the BPS spectrum in string theory vacua with extended supersymmetry has uncovered rich mathematical structures. These provide non-trivial tests for dualities, as well as powerful tools to study black hole entropy at zero temperature. For the case of type II string theory compactified on $K3\times T^2$ (and supersymmetry preserving orbifolds, known as CHL models \cite{Chaudhuri:1995fk}), the study of its BPS spectrum has led to new findings in mathematics related to mock modularity \cite{Dabholkar:2012nd} and is deeply related to modern forms of moonshine (see \cite{Harrison:2022zee} for a recent review). In the next section we introduce the definition of the index counting $\frac{1}{4}$-BPS states for type II string theory compactified on $K3\times T^2$ and refer the reader to the reviews \cite{Sen:2007qy, Cheng:2008gx, Mandal:2010cj, Dabholkar:2012zz} for more details.

\subsection{Index definition}

Heterotic string theory compactified on $T^6$ (or, equivalently, Type II string theory compactified on $K3\times T^2$) has 28 gauge fields, 12 of them arising from the components of the metric and 2-form field on $T^6$ and 16 from the Cartan generators of the heterotic gauge group. The electric and magnetic charges of a generic state are described by the 28-dimensional vectors $\vec{Q}$ and $\vec{P}$, respectively. 

The vectors $\vec{Q},\vec{P}$ transform in the vector representation of the $T$-duality group of the theory $O(6,22;\mathbb{Z})$ and therefore  with an $O(6,22;\mathbb{Z})$-invariant matrix one can define the charge bilinears $\vec{Q}^2, \vec{P}^2, \vec{Q}\cdot \vec{P}$, which are invariant under $T$-duality transformations. The following integer 
\begin{equation}\label{eq:torsion}
   I \= \gcd \{ Q_i P_j - Q_j P_i\, , \quad 1\leq i,j \leq 28 \} \,,
\end{equation}
is called the torsion of the state carrying charges $(\vec{Q},\vec{P})$. A state with unit torsion,  $I=1$, 
is completely specified by the charge bilinears, $\vec{Q}^2, \vec{P}^2, \vec{Q}\cdot \vec{P}$ \cite{Banerjee:2007sr}. From now on we focus on this class of states, since the degeneracies of dyons with torsion greater than one are related to the degeneracies of unit torsion dyons \cite{Banerjee:2008ri, Banerjee:2008pu, Dabholkar:2008zy}.

The $U$-duality group also includes the $SL(2,\mathbb{Z})$ group which acts on the doublet $(\vec{Q},\vec{P})$ by matrix multiplication. This group is the electromagnetic duality group on the heterotic frame, and the modular group of the $T^2$ factor in the type II frame, so that the full duality group contains the following group
\begin{equation} \label{eq:duality-group}
    SL(2,\mathbb{Z})\times O(6,22;\mathbb{Z}) \,.
\end{equation}
To simplify the notation, it is standard to define
\begin{equation}
    m \= \frac{\vec{P}^2}{2}\,, \qquad\qquad n \= \frac{\vec{Q}^2}{2}\,, \qquad\qquad \ell \= \vec{Q}\cdot \vec{P}\,,
\end{equation}
which are integers since $\vec{Q}^2, \vec{P}^2 \in 2\mathbb{Z}$ with the appropriate choice of inner product, and their associated discriminant is given by
\begin{equation}
    \Delta \= \vec{Q}^2 \vec{P}^2 - (\vec{Q}\cdot \vec{P})^2 \= 4mn-\ell^2 \,,
\end{equation}
which is invariant under the duality transformations \eqref{eq:duality-group}.

The $\frac{1}{2}$-BPS states have charge vectors $\vec{Q}, \vec{P}$ which are parallel and therefore there is a duality frame in which one of them, say $\vec{P}$, can be set to zero. The partition function for the degeneracies of $\frac{1}{2}$-BPS states with charge bilinear $n$ is given by \cite{Dabholkar:1989jt},
\begin{equation} \label{eq:half-bps}
    \frac{1}{\eta^{24}(\sigma)} \= q^{-1}\prod_{n=1}^{\infty} (1-q^n)^{-24} \= \sum_{n=-1}^{\infty} d(n) q^n\,,
\end{equation}
where $q= e^{2\pi i \sigma}$ with Im$(\sigma)>0$.

The indexed degeneracies of $\frac{1}{4}$-BPS states in heterotic string theory compactified on $T^6$ are computed through the sixth helicity trace index, defined as the following trace  \cite{Bachas:1996bp, Gregori:1997hi,Sen:2009vz}
\begin{equation} \label{eq:sixth-hel}
    B_6(\vec{Q}, \vec{P}) \= \frac{1}{6!} \mathrm{Tr}\, (-1)^F(2h)^6 \,,
\end{equation}
over states carrying charges $\vec{Q}, \vec{P}$, where the helicity $h$ is the third component of angular momentum in the rest frame. The $(2h)^6$ factor is introduced to soak up the fermion zero modes associated to the 12 broken supersymmetry generators. This factor guarantees that only states breaking 12 of the 16 supersymmetries contribute to the index (see \cite{Sen:2009vz, Dabholkar:2010rm} for details). The helicity trace index then takes the form
\begin{equation}
    B_6(\vec{Q},\vec{P}) \= \frac{1}{6!} \mathrm{Tr}\left( (-1)^F(2h)^6 \right) \= -d(m,n,\ell)   \,,
\end{equation}
where $d(m,n,\ell)$ is the Witten index over the remaining degrees of freedom once the zero modes have been factored out, producing the overall minus sign.  Assuming that there are no extra hair modes, $d(m,n,\ell)$ captures the indexed degeneracies of $\frac{1}{4}$-BPS states with charge bilinears $(m,n,\ell)$ which form single centered or two-centered black holes \cite{Dabholkar:2009dq}.

This index was conjectured \cite{Dijkgraaf:1996it} and derived and rederived \cite{David:2006yn, Shih:2005uc, Gaiotto:2005hc} to be equal to
\begin{equation} \label{eq:degen-igusa}
 d(m,n,\ell) \= \int_\mathcal{C} d\rho d\sigma d v  \, \frac{e^{-2\pi i \left(m \rho + n\sigma +\ell \left(v +\frac{1}{2}\right) \right) }}{\Phi_{10}(\rho,\sigma,v)} \,,
\end{equation}
where the integration contour $\mathcal{C}$ is taken to be 
\begin{equation} \label{eq:contour-real-part}
     0 \,\leq \,\rho_1, \sigma_1, v_1 \,\leq 1 \,, \qquad (\rho_1,\sigma_1,v_1) \= \mathrm{Re}(\rho,\sigma,v)\,, %
\end{equation}
with the imaginary parts $\rho_2,\sigma_2 >0$ and $v_2$ kept fixed and satisfying
\begin{equation} \label{eq:contour-imag-part}
    \rho_2\sigma_2 -v_2^2 \, \gg 0 \,.
\end{equation}
The function $\Phi_{10}(\rho,\sigma,v)$ is the Igusa cusp form, which is a Siegel modular form which transforms under $Sp(4,\mathbb{Z})$ transformations with weight 10, i.e., satisfying
\begin{equation}
    \Phi_{10}\left((A\Omega+B)(C\Omega+D)^{-1}\right) \= \det(C\Omega+D)^{10}\Phi_{10}\left(\Omega\right) \,,
\end{equation}
where 
\begin{equation}
    \Omega \= \begin{pmatrix}
        \rho & v \\ v & \sigma
    \end{pmatrix}\, \quad \text{ and } \quad \begin{pmatrix}
        A & B \\ C & D
    \end{pmatrix} \in Sp(4,\mathbb{Z}) \,,
\end{equation}
(see Appendix A in \cite{Bossard:2018rlt} for a useful compendium on Siegel modular forms). In order to define the Igusa cusp form, we first introduce the Jacobi form $\varphi_{0,1}(\sigma,v)$ of weight $0$ and index 1 under $SL(2,\mathbb{Z})\ltimes \mathbb{Z}^2$ transformations (see \cite{eichler2013theory} for an introduction to Jacobi forms)
\begin{equation} \label{eq:def-phi01-thetas}
    \varphi_{0,1}(\sigma,v) \= 4 \left( \frac{\vartheta_2^2(\sigma,v)}{\vartheta_2^2(\sigma,0)} +\frac{\vartheta_3^2(\sigma,v)}{\vartheta_3^2(\sigma,0)} + \frac{\vartheta_4^2(\sigma,v)}{\vartheta_4^2(\sigma,0)} \right)\,. %
\end{equation}
The functions ${\vartheta_i(\sigma,v)}$ for $i=1,2,3,4$ are the classical Jacobi theta functions,
\begin{equation*}
    \vartheta_{1}(\sigma,v) \= \sum_{n\in \mathbb{Z}} (-1)^n q^{\frac{1}{2}\left(n-\frac{1}{2} \right)^2}y^{n-\frac{1}{2}}\, , \qquad 
    \vartheta_{2}(\sigma,v) \= \sum_{n\in \mathbb{Z}}  q^{\frac{1}{2}\left(n-\frac{1}{2} \right)^2}y^{n-\frac{1}{2}}\,,
\end{equation*}
\begin{equation*}
    \vartheta_{3}(\sigma,v) \= \sum_{n\in \mathbb{Z}}  q^{\frac{n^2}{2}}y^{n}\, , \qquad 
    \vartheta_{4}(\sigma,v) \= \sum_{n\in \mathbb{Z}} (-1)^n  q^{\frac{n^2}{2}}y^{n} \,,
\end{equation*}
where
\begin{equation}
        q = e^{2\pi i \sigma} \, , \quad y=e^{2\pi i v} \,.
\end{equation}
The Jacobi form $\varphi_{0,1}(\sigma,v)$ has the following Fourier expansion,
\begin{equation}
    \varphi_{0,1}(\sigma,v) \= \sum_{\substack{n,\ell\in \mathbb{Z} \\ 4n-\ell^2\geq -1}} C_0(4n-\ell^2) q^n y^\ell\,,
\end{equation}
where the Fourier coefficients depend only on the combination $4n-\ell^2$ due to $\varphi_{0,1}(\sigma,v)$ being a Jacobi form of index 1. %
The Igusa cusp form can then be defined as the multiplicative (or Borcherds) lift of $2\varphi_{0,1}(\sigma,v)$ in the following way \cite{Gritsenko:1996kt,Gritsenko:1996tm},
\begin{equation} \label{eq:def-igusa-product}
    \Phi_{10}(\rho,\sigma,v) \= pqy\prod_{\substack{m,n,\ell \in \mathbb{Z} \\(m,n,\ell)>0}}(1-p^m q^n y^\ell)^{2C_0(4mn-\ell^2)} \,,
\end{equation}
where $(m,n,\ell)>0$ means that either $m > 0$ or $m = 0$ and $n > 0$, or $m = n = 0$ and $\ell < 0$. %

From the definition \eqref{eq:def-phi01-thetas} one can see that $\varphi_{0,1}(\sigma,0)=12$ and therefore
\begin{equation}
    \sum_{\ell} 2C_0(4mn-\ell^2) \= \begin{cases}
        24 \qquad \mathrm{ if } \quad m = 0 \quad \mathrm{ or } \quad n = 0 \,, \\
        0 \qquad \mathrm{ otherwise.}
    \end{cases}
\end{equation}
This, together with the value $C_0(-1)=1$ implies the following behaviour for $ \Phi_{10}(\rho,\sigma,v)$ near $v = 0$,
\begin{equation} \label{eq:phi10-igusa-v-near-0}
  {\Phi_{10}(\rho,\sigma,v)} \={(2\pi i v)^2}{\eta^{24} (\rho)}{ \eta^{24}(\sigma)}+\mathcal{O}(v^4)\,,
\end{equation}
which means that near $v=0$ the function  $ {\Phi_{10}(\rho,\sigma,v)} ^{-1}$ splits into a product of two $\frac{1}{2}$-BPS partition functions \eqref{eq:half-bps}.

The Igusa cusp form vanishes at $v=0$ and all its images under the group of $Sp(4,\mathbb{Z})$ transformations, implying that its reciprocal is a meromorphic function with infinitely many poles. The locus of the poles is parametrized by five integers $(m^2,m^1,j,n_2,n_1)$ \cite{Borcherds1995} and is given by an algebraic equation where the coefficients satisfy a constraint given by the norm of the vector of integers,
\begin{equation}
    m^2 - m^1 \rho + n_1 \sigma + n_2 ( \rho\sigma- v^2) + j v \=0\, , \qquad j^2 + 4 (m^1 n_1 + m^2 n_2) \= 1\,.
\end{equation}
The linear poles are the poles with $n_2=0$. They are the only ones compatible with the defining conditions \eqref{eq:contour-real-part} and \eqref{eq:contour-imag-part} of the contour $\mathcal{C}$. The imaginary part of the locus of these poles is parametrized by $SL(2,\mathbb{Z})$ matrices, and can be written as
\begin{equation} \label{eq:linear-pole}
    pq\sigma_2 + rs\rho_2 + (ps+qr)v_2 \= 0\, ,\quad \mathrm{ where }\quad \begin{pmatrix}
        p & q \\r & s
    \end{pmatrix}\in SL(2,\mathbb{Z})\,.
\end{equation}
By changing the imaginary parts of the variables in \eqref{eq:degen-igusa} one can cross one of these poles and therefore the value of the indexed degeneracies \eqref{eq:degen-igusa} depends on the integration contour $\mathcal{C}$ through the value of the imaginary parts of the integration variables $\rho,\sigma,v$. This phenomenon is an instance of wall-crossing. The indexed degeneracies jump discontinuously as one crosses a pole, with the jump being equal to the residue at the pole \cite{Sen:2007vb, Cheng:2007ch, Banerjee:2008yu}. The poles \eqref{eq:linear-pole} define planes in the space $(\rho_2,\sigma_2,v_2)$ which divide it in chambers where the indexed degeneracy is constant. The residues for the linear poles can be computed from the behaviour of $\Phi_{10}^{-1}$ near $v=0$, given in \eqref{eq:phi10-igusa-v-near-0}, 
together with the $SL(2,\mathbb{Z})$ invariance of $\Phi_{10}(\rho,\sigma,v)$ under the diagonal $SL(2,\mathbb{Z})$ subgroup of  $Sp(4,\mathbb{Z})$. For the linear pole \eqref{eq:linear-pole}, the residue of the integrand in \eqref{eq:degen-igusa} is
\begin{equation}
        \\
      (-1)^{\ell+1}( - 2 r s  {n}\,- 2 p q m\,+( p s + q r ){\ell} )
	 \, d(r^2  {n} +\, p^2m - p r {\ell})d( s^2 {n} + q^2m - q s {\ell}) \,,
\end{equation}
which is consistent with the change in the indexed degeneracies being due to the (dis)appearance of bound states of two $\frac{1}{2}$-BPS states \cite{Sen:2007vb, Cheng:2007ch}.

For each set of charges $m,n,\ell$ there is a special integration contour \cite{Cheng:2007ch} defined by
\begin{equation} \label{eq:immortal-contour}
    \mathcal{C}_* = \{ \, 0 \,\leq \,\rho_1, \sigma_1, v_1 \,\leq 1 \,, \quad  \rho_2 \= 2n\Lambda, \, \sigma_2 \= 2m\Lambda, \, v_2 \= -\ell\Lambda \, \}\,,
\end{equation}
where $\Lambda \gg 0$, such that no linear pole contributes to the index. This contour is called the attractor contour since it coincides with the one given by the values that the moduli take at the black hole horizon through the attractor mechanism \cite{Ferrara:1995ih}. With this choice of moduli, the degeneracy captured by the index is the single centered one. The immortal dyon index is then defined as
\begin{equation} \label{eq:immortal-dyon-index}
    d_*(m,n,\ell) \= \int_{\mathcal{C}_*} d\rho d\sigma d v  \, \frac{e^{-2\pi i \left(m \rho + n\sigma +\ell \left(v +\frac{1}{2}\right) \right) }}{\Phi_{10}(\rho,\sigma,v)} \,,
\end{equation}
which, following the discussion in the introduction, is conjectured to coincide with the degeneracy of states forming the black hole horizon and therefore it must be positive as summarized in \eqref{eq:positivity-conjecture}. 

It follows from the definition of the contour $\mathcal{C}_*$ \eqref{eq:immortal-contour} that the number of BPS indexed degeneracies associated to a single centered black hole given by \eqref{eq:immortal-dyon-index} is invariant under the duality group \eqref{eq:duality-group}. The discriminant $\Delta$ is also invariant under the duality transformations \eqref{eq:duality-group}, which leads to the following observations. The value of the single centered degeneracies depends only on the $SL(2,\mathbb{Z})$ equivalence class of the discriminant $\Delta$, while the inequivalent number of solutions is given by the class number associated to the discriminant of $(m,n,\ell)$ \cite{Moore:1998pn}. Roughly, the class number grows like $\sqrt{\Delta}$, which is the same growth for the area, and therefore the entropy, of the single centered black hole.

The coefficients of the Fourier expansion in $p \= e^{2\pi i \rho}$ of the Siegel modular form are two-variable functions which transform as Jacobi forms due to the inherited symmetries from the symplectic transformations preserving the Fourier expansion. That is, the coefficients in the Fourier expansion
\begin{equation}
    \frac{1}{\Phi_{10}(\rho,\sigma,v)} \= \sum_{m=-1}^\infty  \psi_m(\sigma,v)  p^m \,, 
\end{equation}
associated to $p^m$, which are the generating functions for the degeneracies for fixed charge bilinear $m$, are Jacobi forms of weight $-10$ and index $m$. This means that they satisfy the following transformations,
\begin{equation} \label{eq:jacobi-form-trans}
\begin{split}
       \psi_m\left( \frac{a\sigma+b}{c\sigma+d}, \frac{v}{c\sigma+d} \right)  & \= (c\sigma+d)^{-10} e^{2\pi i m \frac{c^2 v}{c\sigma+d}} \, \psi_m(\sigma,v) \, , \quad 
       \begin{pmatrix}
           a & b \\ c & d 
       \end{pmatrix}\in SL(2,\mathbb{Z})
       \, , \\
   \psi_m\left( \sigma, v+\lambda\sigma+\mu \right) & \= e^{-2\pi i m (\lambda^2\sigma + 2\lambda  v)}  \,\psi_m(\sigma,v) \, , \quad \lambda, \mu \in \mathbb{Z} \, .
\end{split}
\end{equation}
The relation between $\Phi_{10}^{-1}$ and mock Jacobi forms arises by writing the Jacobi forms in the Fourier expansion of $\Phi_{10}^{-1}$ as a sum of a meromorphic function and a non-meromorphic one in the following canonical form \cite{Dabholkar:2012nd},
\begin{equation}
    \frac{1}{\Phi_{10}(\rho,\sigma,v)} \= \sum_{m=-1}^\infty \left( \psi_m^F(\sigma,v)+ \psi_m^{P}(\sigma,v) \right) p^m \,,
\end{equation}
where the function $\psi_m^{P}(\sigma,v)$, called the polar, is built out of the poles of $\Phi_{10}(\rho,\sigma,v)^{-1}$ in the $(\sigma,v)$ plane,
\begin{equation} \label{eq:polarpart-Appell}
    \psi_m^{P}(\sigma,v) \= \frac{d(m)}{\eta^{24}(\sigma)}\mathcal{A}_{2,m}(\sigma,v) \= \frac{d(m)}{\eta^{24}(\sigma)}\sum_{s\in\mathbb{Z}}\frac{q^{ms^2+s}\,y^{2ms+1}}{(1-q^s y)^2} \,.
\end{equation}
The Appell-Lerch sum $\mathcal{A}_{2,m}(\sigma,v)$ is a mock Jabobi form \cite{zwegers2008mock}, which means that it transforms properly under modular transformations only after the addition of the non-holomorphic Eichler integral of another modular object, which is called the shadow of the mock Jacobi form. More precisely,
\begin{equation} \label{eq:completion-Appell}
    \widehat{\mathcal{A}}_{2,m}(\sigma,v) \= \mathcal{A}_{2,m}(\sigma,v) + \sqrt{\frac{m}{4\pi}} \sum_{\ell \in \mathbb{Z}/2m\mathbb{Z}}\left( \vartheta_{m,\ell}^0\right)^*(\sigma) \vartheta_{m,\ell} (\sigma,v) \,,
\end{equation}
transforms as a Jacobi form of weight 2 and index $m$, where $ \vartheta_{m,\ell} (\sigma,v)$ are the index $m$ theta functions
\begin{equation} \label{eq:general-theta}
    \vartheta_{m,\ell} (\sigma,v) \= \sum_{\substack{r \in \mathbb{Z}\\ r = \ell \mod 2m}} q^{\frac{r^2}{4m}}y^r \,,
\end{equation}
and $\left(\vartheta_{m,\ell}^0\right)^*(\sigma)$ denotes the non-holomorphic Eichler integral of the shadow $\vartheta^0_{m,\ell} (\sigma) = \vartheta_{m,\ell} (\sigma,0)$,
\begin{equation}
    \left(\vartheta_{m,\ell}^0\right)^*(\sigma) \= \left( \frac{i}{2\pi} \right)^{\frac{1}{2}}\int_{-\overline{\sigma}}^\infty (z+ \sigma)^{-\frac{3}{2}}\,\overline{\vartheta_{m,\ell}^0(-\overline{z})} \,dz\,.
\end{equation}

Since the sum $\psi_m^F(\sigma,v)+ \psi_m^{P}(\sigma,v)$ transforms as a Jacobi form, the function $\psi_m^F(\sigma,v)$ is also a mock Jacobi\footnote{The function $\psi_m^{P}(\sigma,v)$ is the product of a modular form and a mock Jacobi form and therefore being precise the functions $\psi_m^P(\sigma,v)$ and $\psi_m^F(\sigma,v)$ are mixed mock Jacobi forms.} form with shadow having the opposite sign that for $\psi_m^{P}(\sigma,v)$. The mock Jacobi form $\psi_m^F(\sigma,v)$ is called the finite part and since it is holomorphic, it has the following well-defined Fourier expansion
\begin{equation} \label{eq:finite-part}
    \psi_m^F(\sigma,v) \= \sum_{\substack{n,\ell\in \mathbb{Z} \\ n\geq -1}} c_m^F(n,\ell) q^n y^\ell\,,
\end{equation}
having no wall-crossing in the $(\sigma,v)$ plane. Since $\psi_m^F(\sigma,v)$ is a holomorphic function which satisfies the second transformation in \eqref{eq:jacobi-form-trans}, it has the following theta expansion,
\begin{equation}
	\psi_m^F(\sigma,v) \= \sum_{\ell\in \mathbb{Z}/2m\mathbb{Z}} f_{m,\ell}(\sigma) \vartheta_{m,\ell} (\sigma,v)\,,
\end{equation}
where $f_{m,\ell}(\sigma)$ are (mixed) mock modular forms. Thus, the coefficients $c_m^F(n,\ell)$ can be seen as the Fourier coefficients of the mixed mock modular forms $f_{m,\ell}(\sigma)$ \cite{Dabholkar:2012nd}.

\subsection{Single centered degeneracies}

The finite part $\psi_m^F$ as written in \eqref{eq:finite-part} is the generating function with fixed magnetic bilinear $m$ for the single centered (or immortal) indexed degeneracy of states carrying charge bilinears $m,n,\ell$ which satisfy $0\leq \ell < 2m,2n$ \cite{Dabholkar:2012nd}.

To see it, consider the chamber in the Siegel upper-half space bounded by the walls $v_2 =0$, $v_2 = -\rho_2$ and $v_2 = -\sigma_2$, known as the $\mathcal{R}$ chamber \cite{Sen:2011mh}. This chamber coincides with the attractor chamber for the charge bilinears which satisfy $0\leq \ell < 2m,2n$ \cite{Sen:2011ktd}. By performing the Fourier expansion of $\Phi_{10}(\rho,\sigma,v)^{-1}$ in $p=e^{2\pi i \rho}$, we have implicitly assumed that $\rho_2 \gg \sigma_2, \abs{v_2}$.
Expanding the Appell-Lerch sum appearing in \eqref{eq:polarpart-Appell},
\begin{equation} \label{eq:Appell-lerch-fourier}
    \mathcal{A}_{2,m}(\sigma,v)   \=  \frac{1}{2}\sum_{s,\ell\in\mathbb{Z}} \ell\, \left[ \mathrm{sign}(s+\frac{v_2}{\sigma_2}) + \mathrm{sign}\, \ell \right]\, q^{ms^2+\ell s}\,
 y^{2ms+\ell}\, ,
\end{equation}
we see that for $0 < -v_2/\sigma_2<1$, the Fourier coefficients of $\mathcal{A}_{2,m}(\sigma,v)$ with $0< \ell <2m$ vanish, and the $\ell=0$ Fourier coefficient is zero. Thus, in the $\mathcal{R}$ chamber, where $0 < -v_2/\sigma_2<1$, the indexed degeneracies \eqref{eq:degen-igusa} with $0\leq \ell <2m$ are given by the Fourier coefficients $c_m^F(n,\ell)$ of the finite part $\psi_m^F(\sigma,v)$. These coincide with the immortal degeneracies as long as $\ell < 2n$.

To further illustrate the necessity of these constraints, we provide an example with $\ell \geq 2n$ where the contribution of multi-centered black holes spoils the positivity of the degeneracies counted by $\psi_m^F(\sigma,v)$ (this was emphasized recently in \cite{Bhand:2023rhm} for the case $\ell\geq 2m$). Consider the set of charges $(m,n,\ell) = (10,1,6)$ with $\Delta = 4$ which satisfies $0\leq \ell<2m$ but does not satisfy the condition $\ell < 2n$. The corresponding Fourier coefficient of $\psi_m^F(\sigma,v)$ is
\begin{equation}
    c_{10}^F(1,6) \= 1393564656 \,,
\end{equation}
and satisfies $(-1)^{6+1}c_{10}^F(1,6)<0$. So by not imposing the condition $\ell < 2n$, the positivity conjecture in this case is not satisfied due to the contribution of multi-centered black holes to the index. The positivity is recovered by subtracting the contributions from multi-centered configurations. This can be done by subtracting the residues of $\Phi_{10}^{-1}$ at the poles one crosses when going from the attractor chamber for the charges $(m,n,\ell) = (10,1,6)$ to the $\mathcal{R}$ chamber. We obtain
\begin{equation}
    d_*(10,1,6) \= (-1)^{7}\left(1393564656 - 4 d(1) d(5) - 2d(1)d(2)\right) \= 50064\,,
\end{equation}
which is equal to value for the degeneracy of the representative of the equivalence class for which the attractor chamber is the $\mathcal{R}$ chamber, which is given by $(m,n,\ell) = (1, 1,0)$ and satisfies $(-1)c_1^F(1,0) = d_*(1, 1,0)=d_*(10,1,6)$.

\subsection{Exact expressions for the degeneracies}

The Fourier coefficients of the finite part $\psi_m^F(\sigma,v)$ possess an exact expression in the form of a generalized Rademacher expansion \cite{Ferrari:2017msn, Cardoso:2021gfg} due to the mixed-mock nature of the function $\psi_m^F(\sigma,v)$, 
\begin{equation} \label{eq:Rademacher-formula}
\begin{split}
    	& (-1)^{\ell+1}c_m^F(n,\ell) \=  \\
		& (-1)^{\ell} \, i^{1/2} \,  \sum_{\gamma=1}^{+\infty}\sum_{\tilde{\ell}\in\mathbb{Z}/2m\mathbb{Z}} \left( 2\pi \sum_{\substack{\tilde{n}\geq -1,\\
		\tilde{\Delta}<0}} c^F_m(\tilde{n},\tilde{\ell})\frac{{\rm Kl}(\frac{\Delta}{4m},\frac{\tilde{\Delta}}{4m},\gamma,\psi)_{\ell\tilde{\ell}}}{\gamma}
	  \left(\frac{ \vert\tilde{\Delta} \vert}{\Delta} \right)^{23/4} I_{23/2}\left(\frac{\pi}{\gamma m}\sqrt{\Delta\vert\tilde{\Delta}\vert} \right) \right.
	 \\
	&\qquad \qquad   -\delta_{\tilde{\ell},0}\sqrt{2m}\, d(m) 
	\frac{{\rm Kl} ( \frac{\Delta}{4m}, -1 ;\gamma,\psi)_{\ell 0}}{\sqrt{\gamma}}\left(\frac{4m}{\Delta} \right)^6 I_{12}\left(\frac{2\pi}{\gamma}\sqrt{\frac{\Delta}{m}} \right) \\
	&\qquad \qquad +\frac{1}{2\pi} d(m) \sum_{\substack{ g \in \mathbb{Z}/2m\gamma \mathbb{Z} \\ g = \tilde{\ell} \text{ mod }2m}}
	\frac{{\rm Kl }( \frac{\Delta}{4m}, -1-\frac{g^2}{4m} ;\gamma,\psi)_{\ell\tilde{\ell}}}{\gamma^2} \\
& \qquad \qquad \left.   \left(\frac{4m}{\Delta} \right)^{25/4}\int_{-1/\sqrt{m}}^{1/\sqrt{m}}dx'  f_{\gamma,g,m}(x')  (1-mx'^2)^{25/4}
	I_{25/2}\left(\frac{2\pi}{\gamma\sqrt{m}}\sqrt{\Delta(1-mx'^2)}\right) \right), 
 \end{split}
\end{equation}
where the discriminants $\Delta, \tilde{\Delta}$ are given by
\begin{equation}
    \Delta \= 4mn-\ell^2 \, , \qquad \tilde{\Delta} \= 4m\tilde{n}-\tilde{\ell}^2\,,
\end{equation}
and $d(m)$ is the $m$-th coefficient of $\eta^{-24}(\sigma)$,
\begin{equation} \label{eq:dedekindeta24}
	\frac{1}{\eta^{24}(\sigma)} \= \frac{1}{q\prod\limits_{n=1}^{\infty}(1-q^n)^{24}}\ = \sum_{n=-1}^{+\infty} d(n) \, e^{2\pi i \sigma n}, \;\;\; q\=e^{2\pi i \sigma} \,.
\end{equation}
The generalized Kloosterman sum has the following expression,
\begin{equation}
	{\rm Kl}(\frac{\Delta}{4m},\frac{\tilde{\Delta}}{4m},\gamma,\psi)_{\ell\tilde{\ell}} \= 
 \sum_{\substack{0\leq -\delta <\gamma\\ (\delta,\gamma)=1, \alpha\delta = 1 \text{ mod } \gamma}}e^{2\pi i \left( \frac{\alpha}{\gamma}\frac{\tilde{\Delta}}{4m} +\frac{\delta}{\gamma}\frac{\Delta}{4m}\right)} \, {\psi} (\Gamma)_{\tilde{\ell}\ell} \,, 
\end{equation}
where $\Gamma$ denotes the $SL(2,\mathbb{Z})$ matrix associated to $\gamma$,
\begin{equation}
    \Gamma \= \begin{pmatrix}
	      \alpha & \beta \\ \gamma & \delta
	  \end{pmatrix} \in SL(2,\mathbb{Z})\,,
\end{equation}
and the multiplier system $ {\psi} (\Gamma)_{{\tilde \ell} \ell} $ is the one of the theta functions $\vartheta_{m,\ell}(\sigma,v)$  defined in \eqref{eq:general-theta}, which for $\gamma\geq1$ it has the following form \cite{Zagier1989},
\begin{equation}
	  {\psi} (\Gamma)_{{\tilde \ell} \ell} \= \frac{1}{\sqrt{2m\gamma i }}\sum_{T\in\mathbb{Z}/\gamma\mathbb{Z}} e^{2\pi i \left(\frac{\alpha}{\gamma}\frac{({\tilde \ell }-2mT)^2}{4m}-\frac{\ell ({\tilde \ell} -2mT)}{2m\gamma} +\frac{\delta}{\gamma}\frac{\ell^2}{4m}   \right)} \,.
\end{equation}
Finally, the function $f_{\gamma,g,m}(x') $ inside the integral is given by
\begin{equation}
	f_{\gamma,g,m}(x') \,= \sum_{\substack{p\in\mathbb{Z}\\2m\gamma p+ g\neq 0}}	 \frac{\gamma^2}{\left( x'-i  \gamma p -i\frac{g}{2m} \right)^2} = \left\{\begin{aligned}
	 	\frac{\pi^2}{\sinh^2\left( \frac{\pi x'}{\gamma} - \frac{\pi i g}{2m \gamma} \right)} &\;\; \text{ if}\;\;\; g \neq 0 \text{ mod } 2m\gamma\,, \\
	 	\frac{\pi^2}{\sinh^2\left( \frac{\pi x'}{\gamma} \right)} - \frac{\gamma^2}{x'^2} & \;\;\text{ if}\;\;\; g = 0 \text{ mod } 2m\gamma \,.
	 \end{aligned}\right.
\end{equation}
Two different proofs of this formula can be found in \cite{Ferrari:2017msn} and in \cite{Cardoso:2021gfg}. In the derivation of the latter, the following explicit formula is found for the coefficients $c^F_m(\tilde{n},\tilde{\ell} )$ with $\tilde{\Delta}<0$,
\begin{equation} \label{polarcoef}
	c^F_m(\tilde{n},\tilde{\ell} )=
			\sum_{\substack{r,s>0    \\ q \in \mathbb{Z}/ s \mathbb{Z}, \\ 
		ps - qr = 1 \\  0 \leq \frac{\tilde \ell}{2m} -\frac{q}{s} <  \frac{1}{rs }  }} 
	( - 2 r s  \tilde{n}\,- 2 p q m\,+( p s + q r )\tilde{\ell} )
	 \, d(r^2  \tilde{n} +\, p^2m - p r \tilde{\ell})d( s^2 \tilde{n} + q^2m - q s \tilde{\ell}) \,.
\end{equation}
We expand on the meaning and implications of this formula in the following subsection, which reviews known results and expands on some unknown aspects of the formula which are used later for the proof of the positivity conjecture.

\subsection{Polar coefficients}
\label{sec:pol-coef}
The coefficients $c^F_m(\tilde{n},\tilde{\ell})$ with $\tilde{\Delta} = 4m\tilde{n}-\tilde{\ell}^2<0$ appearing in \eqref{eq:Rademacher-formula} are called polar coefficients. From the expression for the Rademacher expansion \eqref{eq:Rademacher-formula}, one can see that, together with the $m$-th coefficient of $\eta^{-24}(\sigma)$, they completely determine the coefficients $c^F_m({n},{\ell})$ with ${\Delta} = 4m{n}-{\ell}^2>0$ of the mock Jacobi form $\psi_m^F(\sigma,v)$.

A weakly holomorphic modular form $f(\sigma)$ for $SL(2,\mathbb{Z})$ (see, e.g., \cite{Dabholkar:2012nd}) has by definition a Fourier expansion of the form
\begin{equation}
    f(\sigma) \= \sum_{n=n_0}^\infty a_n q^n \, , \qquad q = e^{2\pi i \sigma},
\end{equation}
where the finite number of coefficients $a_n$ with $n<0$ are called polar coefficients, since they are responsible for the exponential growth of $f(\sigma)$ as $\sigma \to i \infty$. That is, for the pole at the cusp $i \infty$. The Rademacher expansion for forms of negative weight consists in exploiting this behaviour to express the rest of the coefficients in terms of the polar ones (see, e.g., \cite{rademacher, apostol2012modular}). Thus, after finding the polar coefficients by some means, the remaining infinitely-many coefficients are determined through the Rademacher expansion. In general, given a set of weakly holomorphic modular forms, one has to compute the polar coefficients on a case-by-case basis. The case at hand is different in that the mock Jacobi forms indexed by $m$ arise from a single Siegel modular form, and therefore one might expect to find a general formula for the polar coefficients of this family. This is indeed the case.

The polar coefficients $c^F_m(\tilde{n},\tilde{\ell})$ were systematically studied in \cite{Chowdhury:2019mnb} following \cite{Sen:2011mh}, where it was shown that they are algorithmically computable and given by a finite sum involving $SL(2,\mathbb{Z})$ matrices and the Fourier coefficients of $\eta^{-24}(\sigma)$. The structure underlying the sum was elucidated in \cite{LopesCardoso:2020pmp}, where it was found that the $SL(2,\mathbb{Z})$ matrices appearing in the sum are the ones that are built out of the continued fraction expansion of $\frac{\tilde{\ell}}{2m}$. The same continued fraction structure emerged in \cite{Cardoso:2021gfg} in the computation of the degeneracies as an infinite sum over residues of $\Phi_{10}^{-1}$, yielding the expression \eqref{polarcoef}. This equation can be rewritten as 
\begin{equation} \label{polar-coef-formula}
	c_m^F(\tilde{n},\tilde{\ell})_{\tilde{\Delta}<0} \= \sum_{G \in W(m,\tilde{n},\tilde{\ell}) } \ell_G\, d(m_G)d(n_G)\,,
\end{equation}
where  $(m_G, n_G, \ell_G)$ are the images under $SL(2,\mathbb{Z})$ of the triplet $(m,\tilde{n},\tilde{\ell})$,
\begin{equation}
\begin{split}
 n_G & \= s^2 \tilde{n} + q^2m - q s \tilde{\ell} \, ,\\
 m_G & \= r^2  \tilde{n} +\, p^2m - p r \tilde{\ell} \,, \\
\ell_G & \=- 2 r s  \tilde{n}\,- 2 p q m\,+( p s + q r )\tilde{\ell} \,,
\end{split}
\end{equation}
and $W(m,\tilde{n},\tilde{\ell})$ is the subset of $SL(2,\mathbb{Z})$ consisting of matrices which satisfy the continued fraction condition for $\frac{\tilde{\ell}}{2m}$ with a non-zero contribution to $c_m^F(\tilde{n},\tilde{\ell})$,
\begin{equation} \label{eq:Wset-def}
\begin{split}
    & W(m,\tilde{n},\tilde{\ell}) \= \\
    & \left\{ G= \begin{pmatrix}
		p & q \\
		r & s
	\end{pmatrix} \in SL(2,\mathbb{Z}) : r,s>0, \, 0 \leq \frac{\tilde{\ell}}{2m} - \frac{q}{s} < \frac{1}{rs}, \; d(m_G),d(n_G) \geq -1   \right\}\,,
\end{split}
\end{equation}
where the continued fraction condition is given by the inequalities
\begin{equation} \label{eq:cont-fract}
     0 \,\leq \, \frac{\tilde{\ell}}{2m} - \frac{q}{s} \,< \,\frac{1}{rs} \,.
\end{equation}
We call these inequalities the continued fraction condition since they are satisfied by all the matrices built out of the successive convergents and semiconvergents of the continued fraction of the rational number $\frac{\tilde{\ell}}{2m}$. Given the continued fraction of $\frac{\tilde{\ell}}{2m}$,
\begin{equation}
    \frac{\tilde{\ell}}{2m} \= a_0+ \cfrac{1}{a_1 + \cfrac{1}{a_2+\cfrac{1}{\ddots + a_N} }} \= [a_0;a_1,\dots,a_N] \,,
\end{equation}
where $a_I \geq1$ for $I\geq 1$, the convergents are the rational numbers $[a_0;a_1,\dots,a_K]$ for $K\leq N$, while the semiconvergents are the numbers of the form $[a_0;a_1,\dots,a_{K-1},J]$ with $J\leq a_K$.

The computation of the polar coefficients is reduced to the characterization of the set $W(m,\tilde{n},\tilde{\ell})$. In the Rademacher formula \eqref{eq:Rademacher-formula} one is instructed to sum $\tilde{\ell}$ over a set of representatives of $\mathbb{Z}/2m\mathbb{Z}$. For simplicity we can take it to be $-m<\tilde{\ell}\leq m$. Since the mock Jacobi form has even weight, it satisfies the following identity
\begin{equation}
    \psi_m^F(\sigma,-v) \= \psi_m^F(\sigma,v)\,,
\end{equation}
which implies that
\begin{equation}
    c_m^F(n,-\ell) \= c_m^F(n,\ell)\,.
\end{equation}
Thus, we can reduce the range of $\tilde{\ell}$ in the polar coefficients $c_m^F(\tilde{n},\tilde{\ell})$ to $0\leq \tilde{\ell}\leq m$. For $\tilde{\ell}$ in this range, the set $W(m,\tilde{n},\tilde{\ell})$ contains the matrix
\begin{equation} \label{eq:umatrix}
    U \= \begin{pmatrix}
        1 & 0 \\ 1 & 1
    \end{pmatrix} \,,
\end{equation}
since the entries of the matrix $U$ satisfy the conditions \eqref{eq:cont-fract} and the contribution of this matrix to the polar coefficient $c_m^F(\tilde{n},\tilde{\ell})$ is given by
\begin{equation}
    \ell_U\, d(m_U)d(n_U) \= (\tilde{\ell}-2\tilde{n})d(m+\tilde{n}-\tilde{\ell})d(\tilde{n})\,,
\end{equation}
which is non-zero as long as $\tilde{n}\geq -1$ and $0\leq \tilde{\ell}\leq m$. Therefore, the set $W(m,\tilde{n},\tilde{\ell})$ is non-empty, as it always contains the $U$ matrix. We define the first contribution to the polar coefficient as
\begin{equation} \label{eq:first-contr}
    \hat{c}^F_m(\tilde{n},\tilde{\ell}) \= (\tilde{\ell}-2\tilde{n})d(m+\tilde{n}-\tilde{\ell})d(\tilde{n}) \,.
\end{equation}
This corresponds to the contribution from the leading pole to the degeneracies \cite{LopesCardoso:2004law, David:2006yn, Banerjee:2008ky, Cardoso:2021gfg}, or, equivalently, to the contribution from the AdS$_2\times S^2$ saddle to the gravitational path integral \cite{LopesCardoso:2022hvc}, as first shown in \cite{Murthy:2015zzy}. The other saddles, which are subleading, correspond to orbifolds of the near-horizon geometry \cite{Banerjee:2008ky, Murthy:2009dq, LopesCardoso:2022hvc}. 

The contribution from the other terms in $W(m,\tilde{n},\tilde{\ell})$ follows from the continued fraction description. Since $0\leq \tilde{\ell}\leq m$, the first coefficient of the continued fraction of $\frac{\tilde\ell}{2m}$ vanishes, $a_0 =0$. Ignoring the $a_0$ coefficient,\footnote{The matrix associated to $a_0$ is $T^{a_0}$, which does not satisfy the conditions in \eqref{eq:Wset-def}.} from the remaining coefficients $a_I$ one can build the following ordered set of $SL(2,\mathbb{Z})$ matrices
\begin{equation} \label{eq:building-matrices-contfrac}
    \begin{pmatrix}
		1 & 0 \\
		1 & 1
	\end{pmatrix},\, \dots,\, \begin{pmatrix}
		1 & 0 \\
		1 & 1
	\end{pmatrix}^{a_1},\, \begin{pmatrix}
		1 & 0 \\
		1 & 1
	\end{pmatrix}^{a_1}\begin{pmatrix}
		1 & 1 \\
		0 & 1
	\end{pmatrix}, \dots ,\,
 \begin{pmatrix}
		1 & 0 \\
		1 & 1
	\end{pmatrix}^{a_1}
 \begin{pmatrix}
		1 & 1 \\
		0 & 1
	\end{pmatrix}^{a_2}, \dots
\end{equation}
Note that the first matrix is the $U$ matrix \eqref{eq:umatrix}. In general, a matrix in the set \eqref{eq:building-matrices-contfrac} has the form
\begin{equation} \label{eq:matrices-cont-fract}
 \begin{pmatrix}
		1 & 0 \\
		1 & 1
	\end{pmatrix}^{a_1}
 \begin{pmatrix}
		1 & 1 \\
		0 & 1
	\end{pmatrix}^{a_2}
 \cdots
 \begin{pmatrix}
		1 & 1 \\
		0 & 1
	\end{pmatrix}^{a_K}
 \begin{pmatrix}
		1 & 0 \\
		1 & 1
	\end{pmatrix}^{J}
\end{equation}
with $K$ even,~$K \leq N$ and~$J\leq a_{K+1}$ (with~$a_{N+1} = 0$). There are~$\sum_{i=1}^N a_i \leq 2m$ matrices of this form, and all of them satisfy the continued fraction condition \eqref{eq:cont-fract}, furnishing all possible elements of the set $W(m,\tilde{n},\tilde{\ell})$ (see \cite{LopesCardoso:2020pmp} for a detailed description and possible subtleties when $\tilde{\ell}=m$, which do not affect the later analysis).

From this point of view, the reason why the saddles associated to the matrices which are not the $U$ matrix are subleading is that the other possible elements in the set $W(m,\tilde{n},\tilde{\ell})$ have smaller arguments in the coefficients of $\eta^{-24}(\sigma)$. Therefore, their contribution to the polar coefficient is exponentially suppressed with respect to the first one \eqref{eq:first-contr}. To see that the arguments of the coefficients of $\eta^{-24}(\sigma)$ decrease, first notice that we can write
\begin{equation}
    \frac{n_G}{s^2} \= \frac{\tilde{\Delta}}{4m} + m\left( \frac{\tilde{\ell}}{2m}-\frac{q}{s} \right)^2\, , \qquad
    \frac{m_G}{r^2} \= \frac{\tilde{\Delta}}{4m} + m\left( \frac{p}{r}-\frac{\tilde{\ell}}{2m} \right)^2\,.
\end{equation}
At each step in building the matrices of the form \eqref{eq:matrices-cont-fract}, either $r$ or $s$ increases, making the difference $\frac{p}{r}-\frac{\tilde{\ell}}{2m}$ or $\frac{\tilde{\ell}}{2m}-\frac{q}{s}$ smaller by the properties of the continued fraction. Since $\tilde{\Delta}<0$, when $r$ increases $m_G$ becomes smaller and when $s$ increases $n_G$ becomes smaller.
In other words, writing the pairs $(m_G, n_G)$ by $(m_i, n_i)$ with $i=1,\dots,i_\mathrm{max}$, where they are ordered by their appearance from the continued fraction, either $m_{i+1}$ or $n_{i+1}$ is strictly smaller than $m_i$ or $n_i$, respectively. Furthermore, the finiteness of the sum in \eqref{polarcoef} follows from the fact that the Fourier coefficients $d(r)$ vanish for $r<-1$. %

There is a more geometric way to see the decrease of the quantities $m_G, n_G$ at each step in the continued fraction algorithm. One can check that treating $n_G$ ($m_G$) as a function of $q,s$ (resp. $p,r$),
\begin{equation} \label{eq:derivatives-mg-ng}
   \left(p \frac{\partial}{\partial q}+ r\frac{\partial}{\partial s}  \right) n_G \= -\ell_G\,, \qquad
   \left( q \frac{\partial}{\partial p}+ s\frac{\partial}{\partial r}  \right) m_G \= -\ell_G\,.
\end{equation}
The first equation can be written as $(p,r)\cdot \nabla n_G(q,s) =-\ell_G$, so that the derivative of $n_G$ in the $(q,s)$ plane in the direction $(p,r)$ is given by $-\ell_G$.  For the matrices satisfying the continued fraction condition, one has
\begin{equation} \label{eq:lg-positive}
    0 \,\leq \, \frac{\tilde{\ell}}{2m} - \frac{q}{s} \,< \,\frac{1}{rs} \, \Rightarrow \, \ell_G >0\,.
\end{equation}
Thus, for the pairs $(p,r)$ and $(q,s)$ forming an $SL(2,\mathbb{Z})$ matrix and satisfying the continued fraction condition, $(p,r)\cdot \nabla n_G(q,s)$ and $(q,s)\cdot \nabla m_G(p,r)$  are always negative. Given the semiconvergents $\frac{p}{r}$ and $\frac{q}{s}$ of $\frac{\tilde{\ell}}{2m}$, the next semiconvergent in the continued fraction algorithm, which substitutes either $\frac{p}{r}$ or $\frac{q}{s}$ in the successive $SL(2,\mathbb{Z})$ matrix, is given by Farey addition
\begin{equation} \label{eq:farey-addition}
   \frac{p}{r} \oplus  \frac{q}{s} \= \frac{p+q}{r+s}\,.
\end{equation}
 Therefore, $(p,q)$ changes in the $(q,s)$ direction and $(p,r)$ changes in the $(q,s)$ direction, along which $m_G$ and $n_G$ are decreasing by equations \eqref{eq:derivatives-mg-ng} and \eqref{eq:lg-positive}.\footnote{There is  an even more visual way to see the decrease of $m_G$ and $n_G$ by using the topograph of a quadratic form, a visual representation of a quadratic form introduced by Conway in \cite{7afd335e-bde4-3897-9875-1f8b53717c0d}.  
 One can view the triplet $(m,\tilde{n},\tilde{\ell})$ with $\tilde{\Delta} = 4m\tilde{n}-\tilde{\ell}^2<0$ as defining the indefinite binary quadratic form $Q(x,y) = m x^2 -\tilde{\ell}xy+ \tilde{n}y^2$, since $(m,\tilde{n},\tilde{\ell})$ and $Q(x,y)$ transform in the same way under $SL(2,\mathbb{Z})$. It is shown in \cite{7afd335e-bde4-3897-9875-1f8b53717c0d} that there is a \textit{river} in the topograph connecting the zeros of $Q(x,y)$, given by $\frac{x}{y} = \frac{\tilde{\ell}\pm \sqrt{\vert\tilde{\Delta}\vert}}{2m}$, along which $Q(x,y)$ is periodic. Furthermore, there is a result, called the \textit{Climbing Lemma}, which states that as one moves away from the river, $Q(x,y)$ is strictly increasing or decreasing. The continued fraction of $\frac{\tilde{\ell}}{2m}$  defines a path in the topograph transverse to the river, from the positive to the negative side. Therefore, by the Climbing Lemma the values that the quadratic form takes along the path, which correspond to $m_G$ or $n_G$, are decreasing. In other words, the path along the topograph defined by the continued fraction of $\frac{\tilde{\ell}}{2m}$ goes downhill.}

We conclude this section by characterizing the possible values of the discriminant $\tilde{\Delta}<0$ and the coefficients with the most negative one. The Fourier coefficients $c_m^F(\tilde{n},\tilde{\ell})$ of the mock Jacobi form $\psi_m^F$, are only non-zero for $\tilde{n}\geq -1$. Since they also satisfy the elliptic symmetry
\begin{equation}
	c_m^F(\tilde{n},\tilde{\ell}) \= c_m^F(\tilde{n}+mT^2+T,\tilde{\ell}+2mT) \,,
\end{equation}
for any $T\in\mathbb{Z}$, this implies that they are non-zero only for
\begin{equation}
	\tilde{\Delta} \= 4m\tilde{n}-\tilde{\ell}^2 \,\geq\, -4m-m^2 \, .
\end{equation}
The polar coefficients $c_m^F(\tilde{n},\tilde{\ell})$ with $\tilde{\Delta}<0$ saturating the previous bound are the extreme polar coefficients. Their contribution to the Rademacher expansion is the one with the largest argument for the modified Bessel function of the first kind with index $23/2$. In the Rademacher formula \eqref{eq:Rademacher-formula}, one is instructed to sum $\tilde{\ell}$ over the equivalence class $\mathbb{Z}/2m\mathbb{Z}$. As earlier, for simplicity we can take it to be $-m<\tilde{\ell}\leq m$, and in this case the extreme polar coefficient is
\begin{equation} \label{eq:extremepolarvalue}
	c_m^F(-1,m) \= \hat{c}_m^F(-1,m) \= m+2\,,
\end{equation}
which indeed has discriminant $\tilde{\Delta} = -4m-m^2$. The value \eqref{eq:extremepolarvalue} is easily computed by noting that $W(m,-1,m)$ has only one element, $U$, since any further contribution must contain $d(r)$ with $r<-1$.

\section{Proof of positivity of the index}
\label{sec3 proof}
\subsection{Main idea}

 Since the single centered degeneracies are $S$-duality invariant, it is enough to prove the positivity for the representatives of one $S$-duality equivalence class. That is, it is enough proving it for states satisfying $n\geq m$, $0\leq \ell < 2m$, and then it is automatically proved for all the $SL(2,\mathbb{Z})$ images. Furthermore, since the invariance group can be enlarged to $GL(2,\mathbb{Z})$ (see, e.g. \cite{Bossard:2018rlt}, or notice that the Igusa cusp form has even weight), it is enough to prove it for charge configurations satisfying
\begin{equation} \label{eq:representatives}
	n \geq m, \hspace{8mm} 0 \leq \ell \leq m,
\end{equation}
which for $m\geq 1$ automatically satisfy $\Delta > 0$.

Thus, to prove that the indexed degeneracies of single centered $\frac{1}{4}$ BPS states in $\mathcal{N}=4$ string theory (given by \eqref{eq:immortal-dyon-index}) are positive, we use the Rademacher expansion \eqref{eq:Rademacher-formula} for the set of representatives defined by \eqref{eq:representatives}. The Rademacher expansion takes the form of a leading term plus exponentially suppressed corrections given by the sum over $\gamma$, which converge to an integer. Therefore, the crux of the proof consists in showing that the leading term is positive.%
\begin{figure}[h!]
    \centering
    \includegraphics[scale=0.8]{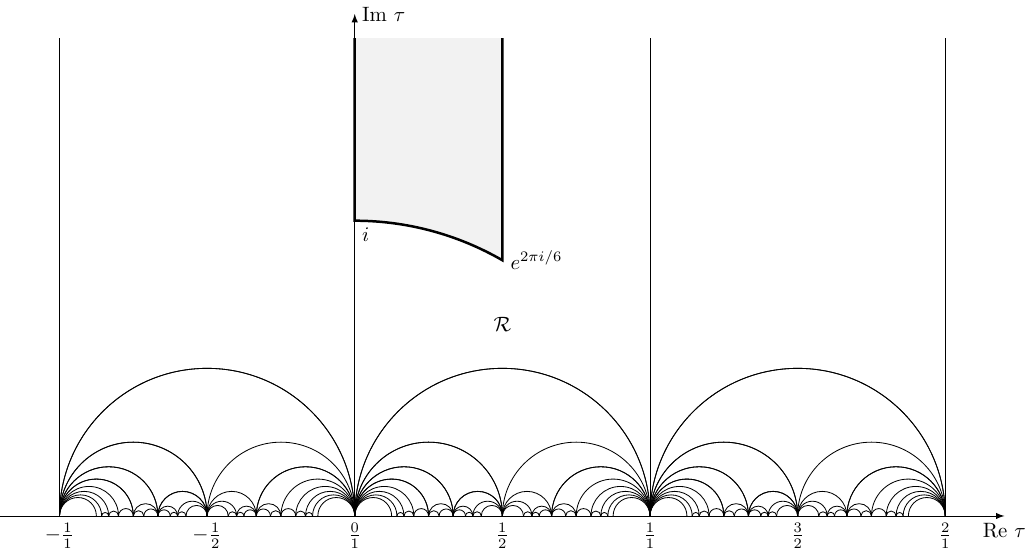}
    \caption{Walls of marginal stability in the upper half plane parametrized by $\tau =-\frac{v_2}{\sigma_2}+i \frac{\sqrt{\rho_2\sigma_2-v_2^2}}{\sigma_2}$, with $\rho_2\sigma_2-v_2^2\gg1$, which for the attractor values becomes $\tau = \frac{\ell}{2m}+i\frac{\sqrt{\Delta}}{2m}$. The $\mathcal{R}$ chamber is the one delimited by the walls connecting $0$, $1$ and $i\infty$. The delimited area in grey is a fundamental domain for $GL(2,\mathbb{Z})$, which lies inside the $\mathcal{R}$ chamber and contains the attractor values allowed by our choice of representatives \eqref{eq:representatives}.}
    \label{fig:enter-label}
\end{figure}

Our proof is similar to the proof of Bringmann-Murthy \cite{Bringmann:2012zr} for the case $m=2$. They show that the term given by the most polar state in the \textit{non-mock part} of the Rademacher expansion (the first line in the right hand side of \eqref{eq:Rademacher-formula}, as opposed to the second and third lines which are the \textit{mock part} since they originate from the anomalous modular transformation) gives the leading contribution to the degeneracy, being always positive and much larger than the rest of the terms combined. Instead, we identify several contributions which are positive, and their sum is greater than the rest combined. First, inspecting the non-mock part
\begin{equation*}
	 (-1)^{\ell} \, i^{1/2} \,2\pi  \sum_{\gamma=1}^{+\infty} \sum_{\substack{\tilde{\ell}\in\mathbb{Z}/2m\mathbb{Z}\\ \tilde{n}\geq -1,\\
		\tilde{\Delta}<0}} c^F_m(\tilde{n},\tilde{\ell})\frac{{\rm Kl}(\frac{\Delta}{4m},\frac{\tilde{\Delta}}{4m},\gamma,\psi)_{\ell\tilde{\ell}}}{\gamma}
	  \left(\frac{ \vert\tilde{\Delta} \vert}{\Delta} \right)^{23/4} I_{23/2}\left(\frac{\pi}{\gamma m}\sqrt{\Delta\vert\tilde{\Delta}\vert} \right) \,,
\end{equation*}
the leading term of the first sum is given by the $\gamma=1$ term, since all the other contributions are exponentially suppressed with respect to it. When $\gamma=1$, the associated $SL(2,\mathbb{Z})$ matrix is
\begin{equation}
	\begin{pmatrix}
		0 & -1 \\
		1 & 0
	\end{pmatrix}
\end{equation}
and the generalized Kloosterman sum gets simplified to
\begin{equation}
	{\rm Kl}(\frac{\Delta}{4m},\frac{\tilde{\Delta}}{4m},1,\psi)_{\ell \tilde{\ell}} \= \frac{e^{2\pi i \frac{\ell \tilde{\ell}}{2m}}}{\sqrt{2m i}} \,.
\end{equation} 
When $\tilde{\ell} = m$, we have
\begin{equation}
	{\rm Kl}(\frac{\Delta}{4m},\frac{\tilde{\Delta}}{4m},1,\psi)_{\ell m} \=  \frac{(-1)^{\ell}}{\sqrt{2m i}} \,.
\end{equation} 
Therefore, the $\gamma=1$ contribution to the Rademacher expansion for each of the terms with $\tilde{\ell} = m$ is positive. We write it as
\begin{equation} \label{eq:def-positive-contr}
   C_{\mathrm{pos.}}(\Delta, m, \tilde{n})\= \frac{2\pi}{\sqrt{2m}}  c^F_m(\tilde{n},m)
	  \left(\frac{  \vert\tilde{\Delta}_{m}\vert}{\Delta} \right)^{23/4} I_{23/2}\left(\frac{\pi}{ m}\sqrt{\Delta \vert\tilde{\Delta}_{m}\vert} \right) \,>\, 0\,,
\end{equation}
where we introduced the notation for the negative discriminant with $\tilde{\ell}=m$,
\begin{equation}
    \vert\tilde{\Delta}_{m}\vert \= 4m\tilde{n}-m^2\,,
\end{equation}
leaving the dependence on $\tilde{n}$ implicit.
From \eqref{eq:extremepolarvalue}, the expression \eqref{eq:def-positive-contr} for the extreme polar coefficient is
\begin{equation}
	 C_{\mathrm{pos.}}(\Delta, m, -1)\= 2\pi \, \frac{m+2}{\sqrt{2m}}
	  \left(\frac{ 4m+m^2}{\Delta} \right)^{23/4} I_{23/2}\left(2{\pi}\sqrt{\frac{\Delta}{m}\left(1+ \frac{m}{4} \right)} \right)\,.
\end{equation}
The argument of the Bessel function is larger than the arguments in any other Bessel function appearing in \eqref{eq:Rademacher-formula}. This term is responsible for the leading growth of $d(m,n,\ell)$ for $\Delta \gg 1$, correctly reproducing the Bekenstein-Hawking entropy $d(m,n,\ell)\sim e^{\pi\sqrt{\Delta}}$ since the area of the black holes with charge discriminant $\Delta$ is given by $4\pi\sqrt{\Delta}$ \cite{Cvetic:1995bj, Cvetic:1995uj}. In particular, over the course of the proof this term is shown to dominate over the mock-part of \eqref{eq:Rademacher-formula} given by terms with modified Bessel functions of the first kind of indices 12 and 25/2. %

An advantage of the choice of representatives given by \eqref{eq:representatives} is the following inequality for the discriminant,
\begin{equation} \label{eq:discriminant-bound-m2}
	\Delta \= 4mn-\ell^2 \,\geq\, 4m^2 - \ell^2 \geq 3m^2,
\end{equation} 
That is, the conditions $n\geq m$ and $0\leq \ell \leq m$ imply that $\Delta$ grows at least quadratically with $m$. This fact turns out to be crucial and will be used several times along the proof.

After bounding the mock part versus the non-mock part, the main part of the proof consists in showing that the estimates $e_{23/2}(\Delta,m,\tilde{n})$ defined by the following ratios
\begin{equation}
    e_{23/2}(\Delta,m,\tilde{n}) \=  \frac{1}{ C_{\mathrm{pos.}}(\Delta, m, \tilde{n})}\sum_{\substack{0\leq \tilde{\ell}< m \\ \tilde{\Delta}<0}} \frac{4\pi}{\sqrt{2m}}\, {c}_m^F(\tilde{n},\tilde{\ell}) \left(\frac{\vert\tilde{\Delta}\vert}{\Delta} \right)^{23/4}\, {I}_{23/2}\left(\frac{\pi}{m}\sqrt{\Delta \vert\tilde{\Delta}\vert}\right) \,,
\end{equation}
or, more explicitly, by,
\begin{equation}
    e_{23/2}(\Delta,m,\tilde{n}) \=  \sum_{\substack{0\leq \tilde{\ell}< m \\ \tilde{\Delta}<0}}\frac{2\, {c}_m^F(\tilde{n},\tilde{\ell}) \vert\tilde{\Delta}\vert^{23/4}\, {I}_{23/2}\left(\frac{\pi}{m}\sqrt{\Delta \vert\tilde{\Delta}\vert}\right)}
	{{c}_m^F(\tilde{n},m) \vert\tilde{\Delta}_{{m}}\vert^{23/4}\, {I}_{23/2}\left(\frac{\pi}{m}\sqrt{\Delta  \vert\tilde{\Delta}_{{m}}\vert}\right)} \,,
\end{equation}
are smaller than one.

\subsection{Notation}

To simplify the exposition we introduce the following notation
\begin{equation*} %
\begin{split}
   & C_{23/2}(m,n,\ell,\gamma) \= \\ & (-1)^{\ell} \, i^{1/2}  2\pi \sum_{\substack{\tilde{n}\geq -1,\\
		\tilde{\Delta}<0\\  \tilde{\ell}\in\mathbb{Z}/2m\mathbb{Z}}} c^F_m(\tilde{n},\tilde{\ell})
	  \left(\frac{ \vert\tilde{\Delta} \vert}{\Delta} \right)^{23/4}  
   \frac{{\rm Kl}(\frac{\Delta}{4m},\frac{\tilde{\Delta}}{4m},\gamma,\psi)_{\ell\tilde{\ell}}}{\gamma}
	  I_{23/2}\left(\frac{\pi}{\gamma m}\sqrt{\Delta\vert\tilde{\Delta}\vert} \right) \,,
\end{split}
\end{equation*}
\begin{equation*} %
	C_{12}(m,n,\ell,\gamma) \= (-1)^{\ell+1} \, i^{1/2} \sqrt{2m}\, d(m) 
	\frac{{\rm Kl} ( \frac{\Delta}{4m}, -1 ;\gamma,\psi)_{\ell 0}}{\sqrt{\gamma}}\left(\frac{4m}{\Delta} \right)^6 I_{12}\left(\frac{2\pi}{\gamma}\sqrt{\frac{\Delta}{m}} \right)\,,
\end{equation*}
\begin{equation*} %
\begin{split}
	C_{25/2}(m,n,\ell,\gamma) \= &  (-1)^{\ell} \, i^{1/2} \frac{1}{2\pi} d(m) \sum_{\substack{ g \in \mathbb{Z}/2m\gamma \mathbb{Z} \\ g = \tilde{\ell} \text{ mod }2m\\  \tilde{\ell}\in\mathbb{Z}/2m\mathbb{Z}}}
	\frac{{\rm Kl }( \frac{\Delta}{4m}, -1-\frac{g^2}{4m} ;\gamma,\psi)_{\ell\tilde{\ell}}}{\gamma^2}  \left(\frac{4m}{\Delta} \right)^{25/4}\\
&  \int_{-1/\sqrt{m}}^{1/\sqrt{m}}dx'  f_{\gamma,g,m}(x')  (1-mx'^2)^{25/4}
	I_{25/2}\left(\frac{2\pi}{\gamma\sqrt{m}}\sqrt{\Delta(1-mx'^2)}\right)\,,
\end{split}
	\end{equation*}
so that
\begin{equation}
	c_m^F(n,\ell) \= \sum_{\gamma =1}^{+\infty} \left( C_{23/2}(m,n,\ell,\gamma)+C_{12}(m,n,\ell,\gamma)+C_{25/2}(m,n,\ell,\gamma)\right) \,.
\end{equation}
Furthermore, in what follows we will use the following definitions
    \begin{equation}
        \Delta \= 4m n - \ell^2 \,, \qquad \tilde{\Delta } \= 4m\tilde{n}-\tilde{\ell}^2 \, , \qquad \tilde{\Delta}_{{m}} \= 4 m \tilde{n}-m^2 \,.
    \end{equation}

\subsection{Preliminaries}

In this subsection we state results that are used several times in the following. The first one is a bound for ratios of modified Bessel functions of the first kind. 

\begin{proposition}
    For $0<x<y$ and $\nu \geq \frac{1}{2}$, we have
\begin{equation} \label{eq:ratio-bessels-bound}
	\frac{I_\nu(x)}{I_\nu(y)} \, < \, e^{x-y}\left(\frac{x}{y} \right)^{1/2}\,.
\end{equation}
\end{proposition}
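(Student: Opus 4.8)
The plan is to reduce the bound to the strict monotonicity of one auxiliary function. Since $I_\nu$ is positive on $(0,\infty)$, dividing the claimed inequality by $I_\nu(y)$ and taking logarithms shows it is equivalent to $G(x)<G(y)$, where $G(t):=t^{1/2}e^{-t}I_\nu(t)$. So it suffices to prove that $G$ is strictly increasing on $(0,\infty)$ for every $\nu\ge\tfrac12$; this is also consistent with the classical asymptotics $G(t)\to(2\pi)^{-1/2}$ as $t\to\infty$ and $G(t)\to 0$ as $t\to 0^+$.

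The main step uses Poisson's integral representation
\begin{equation*}
  I_\nu(t) \= \frac{(t/2)^{\nu}}{\sqrt{\pi}\,\Gamma(\nu+\tfrac12)}\int_{-1}^{1}(1-s^2)^{\nu-1/2}\,e^{ts}\,ds\,,
\end{equation*}
valid for $\mathrm{Re}\,\nu>-\tfrac12$. Writing $e^{ts}=e^{t}e^{-t(1-s)}$ and substituting $u=t(1-s)$, one finds, after collecting the powers of $t$, the manifestly positive expression
\begin{equation*}
  G(t) \= \frac{1}{\sqrt{2\pi}\,\Gamma(\nu+\tfrac12)}\int_{0}^{2t} u^{\nu-1/2}\Bigl(1-\frac{u}{2t}\Bigr)^{\nu-1/2}e^{-u}\,du\,.
\end{equation*}
For $\nu\ge\tfrac12$ the exponent $\nu-\tfrac12$ is nonnegative, so for each fixed $u$ the factor $(1-u/2t)^{\nu-1/2}$ is nondecreasing in $t$, while the interval of integration $[0,2t]$ grows with $t$. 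Splitting $G(t_2)-G(t_1)$ for $0<t_1<t_2$ into the integral over $[0,2t_1]$ (whose integrand is $\ge 0$ by the first observation) plus the integral over $[2t_1,2t_2]$ (whose integrand is strictly positive) gives $G(t_2)>G(t_1)$. Unwinding the reduction yields the Proposition, with strict inequality.

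As a cross-check and an alternative, one can work with the logarithmic derivative: $G'(t)/G(t)=\tfrac{1}{2t}-1+I_\nu'(t)/I_\nu(t)$, so the statement is equivalent to $I_\nu'(t)/I_\nu(t)>1-\tfrac{1}{2t}$. For $\nu=\tfrac12$ this holds with room to spare, since $I_{1/2}'(t)/I_{1/2}(t)=\coth t-\tfrac{1}{2t}$ and $\coth t>1$; to reach general $\nu\ge\tfrac12$ one would then invoke the known monotonicity of $\nu\mapsto I_\nu'(t)/I_\nu(t)$. I expect the only delicate point in either route to be bookkeeping rather than a genuine obstacle: in the first, tracking the powers of $t$ through the substitution and observing that $\nu\ge\tfrac12$ is precisely the condition making the integrand monotone in $t$; in the second, the monotonicity-in-$\nu$ input is the less elementary ingredient, which is why I would favour the self-contained integral-representation argument.
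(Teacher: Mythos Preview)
Your opening reduction has a sign slip. Rearranging $\frac{I_\nu(x)}{I_\nu(y)}<e^{x-y}(x/y)^{1/2}$ gives $x^{-1/2}e^{-x}I_\nu(x)<y^{-1/2}e^{-y}I_\nu(y)$, i.e.\ strict monotonicity of $H(t):=t^{-1/2}e^{-t}I_\nu(t)$, \emph{not} of your $G(t)=t^{+1/2}e^{-t}I_\nu(t)$. And $H$ is not increasing on all of $(0,\infty)$: for $\nu>\tfrac12$ one has $H(t)\to 0$ at both ends (near $0$ like $c\,t^{\nu-1/2}$, near $\infty$ like $(\sqrt{2\pi}\,t)^{-1}$), so $H$ must rise and then fall. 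A concrete failure of the stated inequality: with $\nu=\tfrac12$, $x=1$, $y=2$, using $I_{1/2}(t)=\sqrt{2/(\pi t)}\,\sinh t$ one finds $I_{1/2}(1)/I_{1/2}(2)\approx 0.458$ while $e^{-1}\sqrt{1/2}\approx 0.260$. Thus the Proposition \emph{as literally written} is false, and no argument can establish it.

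What your Poisson-integral computation does prove, correctly and cleanly, is that $G(t)=t^{1/2}e^{-t}I_\nu(t)$ is strictly increasing for every $\nu\ge\tfrac12$, which is equivalent to
\[
\frac{I_\nu(x)}{I_\nu(y)}\;<\;e^{x-y}\Bigl(\frac{y}{x}\Bigr)^{1/2}\qquad(0<x<y,\ \nu\ge\tfrac12).
\]
This is the standard bound from the cited source and is almost certainly what the paper intends; the factor $(x/y)^{1/2}$ in the display appears to be a typo for $(y/x)^{1/2}$. The paper itself gives no proof beyond the citation, so there is no author's argument to compare with; your integral-representation argument is a good self-contained replacement for the reference once the statement is corrected.
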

The proof can be found in \cite{baricz_2010}. The following proposition is used to estimate the growth of the polar coefficients. 

The second result is a rough bound for the Fourier coefficients of $\eta^{-24}(\tau)$ which is used for studying the growth of the polar coefficients of $\psi_m^F(\sigma,v)$.
\begin{proposition} \label{prop:eta-bound}
    Let $n\in\mathbb{N}_0$, the Fourier coefficients $d(n)$ of $\eta^{-24}(\tau)$ satisfy
\begin{equation} \label{eq:dedekindeta-bounds}
    \left(1-\frac{1}{12} \right)2\pi n^{-13/2}I_{13}(4\pi\sqrt{n}) \,<\, d(n) \,<\, \left(1+\frac{1}{12} \right)2\pi n^{-13/2}I_{13}(4\pi\sqrt{n})\,.
\end{equation}
\end{proposition}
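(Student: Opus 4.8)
The plan is to obtain \eqref{eq:dedekindeta-bounds} by reading $d(n)$ off the classical Rademacher expansion of $1/\eta^{24}$ and then showing that everything beyond its leading term is negligible. Concretely, $1/\eta^{24}(\tau)=1/\Delta(\tau)$ is a weakly holomorphic modular form of weight $-12$ for $SL(2,\mathbb{Z})$ with trivial multiplier whose only pole is at the cusp $i\infty$, of order one (the $q^{-1}$ term), so $d(n)$ admits the convergent expansion — the $m$-independent, purely modular instance of the mechanism behind \eqref{eq:Rademacher-formula} —
\begin{equation*}
  d(n) \= 2\pi \sum_{c=1}^{\infty} \frac{K_c(n)}{c}\, n^{-13/2}\, I_{13}\!\left(\frac{4\pi\sqrt{n}}{c}\right),
\end{equation*}
with $K_c(n)=\sum_{(d,c)=1} e^{2\pi i(-\overline{d}+nd)/c}$ the Kloosterman sum at the cusp, so that $K_1(n)=1$ and, trivially, $|K_c(n)|\le\varphi(c)\le c$.

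Isolating the $c=1$ term gives exactly $2\pi\,n^{-13/2} I_{13}(4\pi\sqrt{n})$, the quantity appearing on both sides of \eqref{eq:dedekindeta-bounds}. The proposition is therefore equivalent to the bound $\bigl|d(n)-2\pi\,n^{-13/2}I_{13}(4\pi\sqrt{n})\bigr|<\tfrac{1}{12}\cdot 2\pi\,n^{-13/2}I_{13}(4\pi\sqrt{n})$ on the tail $\sum_{c\ge 2}$. To control it I would use the elementary observation that $z\mapsto z^{-\nu}I_\nu(z)$ is increasing on $(0,\infty)$ — immediate from $I_\nu(z)=\sum_{k\ge0}\frac{(z/2)^{2k+\nu}}{k!\,\Gamma(k+\nu+1)}$, all of whose coefficients are positive — which with $\nu=13$ and $z=4\pi\sqrt{n}/c\le 4\pi\sqrt{n}$ gives $I_{13}(4\pi\sqrt{n}/c)\le c^{-13}I_{13}(4\pi\sqrt{n})$ for all $c\ge1$. (For $n=0$ this is read as the limit $n\to0^+$, each side being $O(n^{13/2})$, and it recovers $d(0)=24$.)

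Combining this with $|K_c(n)/c|\le1$ yields
\begin{equation*}
  \bigl| d(n) - 2\pi\,n^{-13/2} I_{13}(4\pi\sqrt{n}) \bigr| \;\le\; 2\pi\,n^{-13/2} I_{13}(4\pi\sqrt{n}) \sum_{c\ge 2} c^{-13} \= \bigl(\zeta(13)-1\bigr)\,2\pi\,n^{-13/2} I_{13}(4\pi\sqrt{n}),
\end{equation*}
and since $\zeta(13)-1=2^{-13}+3^{-13}+\cdots<2^{-12}<\tfrac{1}{12}$, the two-sided estimate \eqref{eq:dedekindeta-bounds} follows at once. If one would rather not invoke the monotonicity of $z^{-\nu}I_\nu$, an equivalent route is to dispose of the first few values of $c$ with the ratio bound \eqref{eq:ratio-bessels-bound} and of the remaining ones with the small-argument estimate $I_{13}(z)\le (z/2)^{13}e^{z}/13!$, which again makes the tail geometrically (indeed super-exponentially in $c$) small.

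I do not foresee a real obstacle. The only points needing a little care are invoking the exact Rademacher formula in weight $-12$ — convergence is automatic for negative weight, but one must check that the principal part is precisely $q^{-1}$ and $K_1(n)=1$ — and recording how crude the final estimate is: the constant $\tfrac{1}{12}$ in \eqref{eq:dedekindeta-bounds} is nowhere near sharp and may be replaced by any number exceeding $\zeta(13)-1\approx1.2\times10^{-4}$, which is already far more than the later arguments require.
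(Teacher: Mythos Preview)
Your proposal is correct and follows essentially the same approach as the paper: Rademacher expansion of $1/\eta^{24}$, isolation of the $c=1$ term, trivial bound $|K_c(n)|\le c$, and the monotonicity of $z^{-13}I_{13}(z)$ to control the tail. The only cosmetic difference is that the paper bounds $\sum_{c\ge 2}c^{-13}$ via the integral $\int_1^\infty c^{-13}\,dc=\tfrac{1}{12}$, whereas you evaluate it directly as $\zeta(13)-1$, obtaining a (much) sharper constant.
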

\begin{proof}
    The coefficients $d(n)$ with $n\geq 1$ have the following Rademacher expansion,
    \begin{equation} \label{eq:rademachereta-prop}
        d(n) \= 2\pi \sum_{c=1}^\infty \frac{\mathrm{Kl}(n,-1,c)}{c} n^{-13/2}I_{13}\left(\frac{4\pi}{c}\sqrt{n}\right)\,,
    \end{equation}
    where
    \begin{equation}
        \mathrm{Kl}(a,b,c) \= \sum_{\substack{ 0\leq x \leq c-1 \\ \gcd(c,x)=1}} e^{\frac{2\pi i}{c} \left( ax+bx^* \right)}\,,
    \end{equation}
    is the classical Kloosterman sum, with $x^*$ being the inverse of $x$ modulo $c$. The expression \eqref{eq:rademachereta-prop} is also defined for $n=0$ by taking the limit $n\to 0$ from above. The value of the Kloosterman sum for $c=1$ is simply $1$, while for $c\geq 1$ it  satisfies the following bound
    \begin{equation}
        \left\vert\frac{\mathrm{Kl}(n,-1,c)}{c}\right\vert \,\leq\, 1\,.
    \end{equation}
    Thus, we can split
    \begin{equation}
        d(n) \= 2\pi n^{-13/2}I_{13}\left({4\pi}\sqrt{n}\right) + 2\pi \sum_{c=2}^\infty \frac{\mathrm{Kl}(n,-1,c)}{c} n^{-13/2}I_{13}\left(\frac{4\pi}{c}\sqrt{n}\right)\,,
    \end{equation}
    and bound the absolute value of the series by
    \begin{equation}
       \left\vert2\pi \sum_{c=2}^\infty \frac{\mathrm{Kl}(n,-1,c)}{c} n^{-13/2}I_{13}\left(\frac{4\pi}{c}\sqrt{n}\right)\right\vert \,\leq\, 2\pi \sum_{c=2}^\infty  n^{-13/2}I_{13}\left(\frac{4\pi}{c}\sqrt{n}\right)\,.
    \end{equation}
    Since ${I_{\nu}(x)}$ and $\frac{I_{\nu}(x)}{x^{\nu}}$ are monotonically increasing functions of $x$ for $x, \nu>0$, we can bound
    \begin{equation}
    \begin{split}
       & 2\pi \sum_{c=2}^\infty  n^{-13/2}I_{13}\left(\frac{4\pi}{c}\sqrt{n}\right) \,\leq\, 2\pi \int_{1}^\infty  n^{-13/2}I_{13}\left(\frac{4\pi}{c}\sqrt{n}\right)dc \\
       & \,\leq\, 2\pi n^{-13/2}I_{13}\left({4\pi}\sqrt{n}\right) \int_{1}^\infty  c^{-13}dc \= \frac{1}{12}2\pi n^{-13/2}I_{13}\left({4\pi}\sqrt{n}\right). 
    \end{split}
    \end{equation}
\end{proof}

\subsection{Mock and higher \texorpdfstring{$\gamma$}{gamma} terms}

We start by bounding the higher $\gamma$ terms in the Rademacher expansion with respect to the $\gamma=1$ counterparts and finding bounds for the terms coming from the anomalous transformation due to the mock modular properties of the Jacobi forms, which as stated earlier are the last two lines in the right hand side of \eqref{eq:Rademacher-formula} and we call them the mock terms.

\begin{lemma} The following quantities appearing in the Rademacher expansion \eqref{eq:Rademacher-formula} satisfy the  bounds \label{lemma:bounds-besselterms-general}
\begin{equation} \label{eq:bound-I23-highergamma}
    \left\vert\sum_{\gamma=2}^\infty \frac{{\rm Kl}(\frac{\Delta}{4m},\frac{\tilde{\Delta}}{4m},\gamma,\psi)_{\ell\tilde{\ell}}}{\gamma}
	  I_{23/2}\left(\frac{\pi}{\gamma m}\sqrt{\Delta\vert\tilde{\Delta}\vert} \right) \right\vert \, < \, 
	  \frac{2}{21}I_{23/2}\left(\frac{\pi}{ m}\sqrt{\Delta\vert\tilde{\Delta}\vert} \right)\,,
\end{equation}
\begin{equation} \label{eq:bound-I12-highergamma}
	 	\left\vert \sum_{\gamma=2}^{+\infty} \frac{{\rm Kl} ( \frac{\Delta}{4m}, -1 ;\gamma,\psi)_{\ell 0}}{\sqrt{\gamma}}I_{12}\left(\frac{2\pi}{\gamma}\sqrt{\frac{\Delta}{m}} \right) \right\vert  \,<\, \frac{2}{21} I_{12}\left({2\pi}{}\sqrt{\frac{\Delta}{m}} \right) \,,
	 \end{equation}
  and
  \begin{equation} \label{eq:C25/2-proposition}
        \begin{split}
            &\left\vert \frac{1}{2\pi} d(m) \sum_{\gamma=1}^{+\infty}\sum_{\tilde{\ell}\in\mathbb{Z}/2m\mathbb{Z}} \sum_{\substack{ g \in \mathbb{Z}/2m\gamma \mathbb{Z} \\ g = \tilde{\ell} \text{ mod }2m}}\frac{{\rm Kl }( \frac{\Delta}{4m}, -1-\frac{g^2}{4m} ;\gamma,\psi)_{\ell\tilde{\ell}}}{\gamma^2} \left(\frac{4m}{\Delta} \right)^{25/4} \right.
	  \\
&  \left.  \int_{\frac{-1}{\sqrt{m}}}^{\frac{1}{\sqrt{m}}}dx'  f_{\gamma,g,m}(x')  (1-mx'^2)^{25/4}
	I_{25/2}\left(\frac{2\pi}{\gamma\sqrt{m}}\sqrt{\Delta(1-mx'^2)}\right) \right\vert  \\
 & \leq \, \left(1+\frac{2}{21}\right){d(m)}   \left(\frac{4m}{\Delta} \right)^{25/4} m^{3/2}\pi  \left(\frac{\pi^2}{3}+\frac{1}{3}\right) I_{25/2}\left(\frac{2\pi}{\sqrt{m}}\sqrt{\Delta}\right) \,.
\end{split}
\end{equation}
\end{lemma}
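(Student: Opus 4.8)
The plan is to treat the three bounds \eqref{eq:bound-I23-highergamma}, \eqref{eq:bound-I12-highergamma} and \eqref{eq:C25/2-proposition} by the same mechanism: replace the arithmetic data (Kloosterman sums, multiplier systems, the $x'$-integral, the function $f_{\gamma,g,m}$) by crude absolute-value estimates, and then sum the resulting geometric-type series in $\gamma$ using the ratio bound for Bessel functions from \eqref{eq:ratio-bessels-bound}. For the first bound, I would start from the observation — already recorded in Proposition \ref{prop:eta-bound}'s proof for the classical Kloosterman sum, and which holds here too — that the generalized Kloosterman sum satisfies $\bigl|\mathrm{Kl}(\tfrac{\Delta}{4m},\tfrac{\tilde\Delta}{4m},\gamma,\psi)_{\ell\tilde\ell}\bigr|\le 1$, since it is an average of phases weighted by the unitary theta multiplier $\psi(\Gamma)_{\tilde\ell\ell}$ (a normalized finite exponential sum of modulus $\le 1$). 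Hence the left side of \eqref{eq:bound-I23-highergamma} is at most $\sum_{\gamma\ge 2}\frac1\gamma I_{23/2}\!\bigl(\tfrac{\pi}{\gamma m}\sqrt{\Delta|\tilde\Delta|}\bigr)$. Writing $x_\gamma=\tfrac{\pi}{\gamma m}\sqrt{\Delta|\tilde\Delta|}$ and $x_1$ for the $\gamma=1$ value, \eqref{eq:ratio-bessels-bound} with $\nu=23/2\ge\tfrac12$ gives $I_{23/2}(x_\gamma)/I_{23/2}(x_1) < e^{x_\gamma-x_1}(x_\gamma/x_1)^{1/2}\le \gamma^{-1/2}$ (using $x_\gamma\le x_1$ and $x_\gamma-x_1\le 0$). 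Therefore the sum is bounded by $I_{23/2}(x_1)\sum_{\gamma\ge 2}\gamma^{-3/2}$, and one checks $\sum_{\gamma\ge 2}\gamma^{-3/2}<\tfrac{2}{21}$ — for instance by comparison with $\int_1^\infty t^{-3/2}\,dt=2$ minus the $\gamma=1$ term, then sharpening with a couple of explicit terms; one should verify numerically that the resulting constant is indeed below $2/21\approx 0.095$, which may force using a slightly finer tail estimate such as $\sum_{\gamma\ge 2}\gamma^{-3/2}\le \zeta(3/2)-1 \approx 1.612$ — so in fact the decay must be faster, and one should use $x_\gamma - x_1 = x_1(\tfrac1\gamma-1)$ together with $x_1$ being large (it grows like $\sqrt{\Delta}$), making $e^{x_\gamma-x_1}$ genuinely tiny for $\gamma\ge 2$; this exponential gain, not merely the power $\gamma^{-3/2}$, is what produces the constant $2/21$.

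For the second bound \eqref{eq:bound-I12-highergamma}, the argument is identical with $\nu=12$ and the one-variable Kloosterman sum $\mathrm{Kl}(\tfrac{\Delta}{4m},-1;\gamma,\psi)_{\ell 0}$, again of modulus $\le 1$ after dividing by $\sqrt\gamma$; here the $\gamma$-dependence is $\gamma^{-1/2}$ outside times $I_{12}(\tfrac{2\pi}{\gamma}\sqrt{\Delta/m})$, and the same combination of \eqref{eq:ratio-bessels-bound} and the exponential suppression $e^{(1/\gamma-1)\cdot 2\pi\sqrt{\Delta/m}}$ delivers the tail bound $\tfrac{2}{21}I_{12}(2\pi\sqrt{\Delta/m})$. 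The third bound \eqref{eq:C25/2-proposition} is the most involved: here I would (i) bound $|\mathrm{Kl}(\cdots)_{\ell\tilde\ell}|\le 1$ and collapse the $\tilde\ell$ and $g$ sums, (ii) bound the integral $\int_{-1/\sqrt m}^{1/\sqrt m} |f_{\gamma,g,m}(x')|\,(1-mx'^2)^{25/4}\,I_{25/2}\!\bigl(\tfrac{2\pi}{\gamma\sqrt m}\sqrt{\Delta(1-mx'^2)}\bigr)\,dx'$ by pulling the Bessel factor out at its maximum $x'=0$ (since $I_{25/2}$ is increasing and $1-mx'^2\le 1$) and estimating $\int_{-1/\sqrt m}^{1/\sqrt m}|f_{\gamma,g,m}(x')|(1-mx'^2)^{25/4}\,dx'$; the explicit closed forms $\pi^2/\sinh^2(\cdots)$ resp. $\pi^2/\sinh^2(\pi x'/\gamma)-\gamma^2/x'^2$ given in the excerpt make this a convergent elementary integral, and after the substitution $x'=u/\sqrt m$ one gets a factor $m^{-1/2}$ and an absolute constant; summing over $g\in\mathbb Z/2m\gamma\mathbb Z$ restores a factor of order $m$, so the $m$-dependence assembles into $m^{3/2}$, and (iii) sum over $\gamma$ exactly as before, the $\gamma=1$ term giving the $1$ and the $\gamma\ge 2$ tail the extra $\tfrac{2}{21}$, with the constant $\pi^2/3+1/3$ emerging from the $x'$-integral (recognizable as $\int_0^\infty(\pi^2/\sinh^2(\pi u)-1/u^2)\,du$-type evaluations plus the $(1-u^2)^{25/4}$ weight being bounded by $1$). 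Finally I would reinstate the prefactors $\tfrac{1}{2\pi}d(m)(4m/\Delta)^{25/4}$ to obtain the stated right-hand side.

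The main obstacle I anticipate is entirely in \eqref{eq:C25/2-proposition}: controlling the double structure of the $g$-sum (over $\mathbb Z/2m\gamma\mathbb Z$) against the integral of $f_{\gamma,g,m}$, and getting the $m$- and $\gamma$-powers to land exactly as $m^{3/2}$ and $\gamma^{-2}$ (before the Bessel ratio) so that the final constant is the clean $(1+\tfrac{2}{21})(\pi^2/3+1/3)$. In particular one must be careful that the $g\equiv 0\bmod 2m\gamma$ term, where $f_{\gamma,g,m}$ has the subtracted $-\gamma^2/x'^2$ singularity removal, is handled so that the integrand stays integrable near $x'=0$ and the bound is uniform in $g$; the first two bounds, by contrast, are routine once the $|\mathrm{Kl}|\le 1$ estimate and \eqref{eq:ratio-bessels-bound} are in hand. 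Throughout, the inequality \eqref{eq:discriminant-bound-m2} ($\Delta\ge 3m^2$) is what guarantees that $x_1=\tfrac{\pi}{m}\sqrt{\Delta|\tilde\Delta|}$ is large enough for the exponential suppression $e^{(1/\gamma-1)x_1}$ to beat any fixed polynomial in $\gamma$, so I would invoke it to justify that the $\gamma\ge2$ tails are dominated by a convergent series with the claimed small constant.
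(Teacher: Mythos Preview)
Your proposal has two concrete errors that break the argument as written. First, the Kloosterman bound: the generalized Kloosterman sum is \emph{not} an average but a sum over $\phi(\gamma)\le\gamma$ residue classes, each term being a phase times a matrix entry of the unitary multiplier $\psi(\Gamma)$. Hence the correct estimate is $\bigl|\mathrm{Kl}(\tfrac{\Delta}{4m},\tfrac{\tilde\Delta}{4m},\gamma,\psi)_{\ell\tilde\ell}\bigr|\le\gamma$, exactly as recorded in \eqref{eq:gen-kloosterman-bound}, not $\le 1$. With the correct bound your first sum is $\sum_{\gamma\ge 2} I_{23/2}(x_1/\gamma)$ (no extra $1/\gamma$), and applying \eqref{eq:ratio-bessels-bound} yields only a $\gamma^{-1/2}$ power, whose sum diverges; the exponential factor $e^{x_1(1/\gamma-1)}$ cannot rescue the large-$\gamma$ tail because it saturates at $e^{-x_1}$ while $\sum\gamma^{-1/2}$ still diverges.

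Second, you are using the wrong Bessel inequality. The paper does \emph{not} use \eqref{eq:ratio-bessels-bound} here; it uses the monotonicity of $I_\nu(x)/x^\nu$ (stated in the proof of Proposition~\ref{prop:eta-bound}), which gives directly $I_{\nu}(x_1/\gamma)\le\gamma^{-\nu}I_\nu(x_1)$. With $\nu=23/2$ this yields $\sum_{\gamma\ge 2}I_{23/2}(x_1/\gamma)\le I_{23/2}(x_1)\int_1^\infty\gamma^{-23/2}\,d\gamma=\tfrac{2}{21}\,I_{23/2}(x_1)$, which is exactly where the constant $2/21$ comes from. The same mechanism handles \eqref{eq:bound-I12-highergamma}: $|\mathrm{Kl}|/\sqrt\gamma\le\sqrt\gamma$ and $I_{12}(y_1/\gamma)\le\gamma^{-12}I_{12}(y_1)$ combine to $\gamma^{-23/2}$ again. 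No lower bound on $\Delta$ is needed or used; your appeal to \eqref{eq:discriminant-bound-m2} is a red herring for this lemma. For \eqref{eq:C25/2-proposition} your outline is in the right direction, but the paper's clean constant arises by first bounding $|f_{\gamma,g,m}(x')|\le\pi^2 h_{\gamma,g,m}$ pointwise (with $h_{\gamma,g,m}=\gamma^2m^2/g^2$ for $g\neq 0$ and $1/3$ for $g=0$), then trivially estimating the integral by $\tfrac{2}{\sqrt m}\pi^2 h_{\gamma,g,m}I_{25/2}(\cdot)$, and finally summing $\sum_g h_{\gamma,g,m}<(\tfrac{\pi^2}{3}+\tfrac13)\gamma^2 m^2$; the resulting $\gamma$-sum is then controlled by the same $I_\nu(x)/x^\nu$ monotonicity, giving the $(1+\tfrac{2}{21})$ factor.
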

\begin{proof}
    Since ${ \psi} (\Gamma)_{{\tilde \ell} \ell} $ is a unitary matrix, it satisfies $\left\vert  { \psi} (\Gamma)_{{\tilde \ell} \ell} \right\vert \leq 1$, from which 
    \begin{equation} \label{eq:gen-kloosterman-bound}
	\left\vert {\rm Kl}(\frac{\Delta}{4m},\frac{\tilde{\Delta}}{4m},\gamma,\psi)_{\ell\tilde{\ell}} \right\vert \,\leq\, \gamma\,,
\end{equation}
follows. The proof of the first two bounds \eqref{eq:bound-I23-highergamma} and \eqref{eq:bound-I12-highergamma} is completed by following the same steps as in Proposition \ref{prop:eta-bound}. 

For the last bound \eqref{eq:C25/2-proposition}, we begin by finding a bound for the integral. Adapting the bounds in Lemma 3.2 of \cite{Bringmann:2010sd} to our function $f_{\gamma,g,m}(x')$, we have 
	  \begin{equation}
	  	\left\vert \sinh \left( \frac{\pi x'}{\gamma} - \frac{\pi i g}{2m \gamma} \right) \right\vert \= 	\left\vert \cosh \left( \frac{\pi x'}{\gamma} - \pi i\left(\frac{ g}{2m \gamma}+\frac{1}{2}\right) \right) \right\vert \,\geq\, 	\left\vert \sin \left(  \frac{\pi g}{2m \gamma} \right) \right\vert \,, 
	  \end{equation}
	  so for $-m\gamma <g \leq m\gamma$ and $g\neq 0$,
	 \begin{equation}
	 	\left\vert \frac{1}{ \sin \left(  \frac{\pi g}{2m \gamma} \right) }\right\vert \,\leq\, \frac{\pi^2}{4\left(\frac{\pi g}{2m \gamma} \right)^2 } \= \frac{\gamma^2 m^2}{g^2}\,.
	 \end{equation}   
	 For $g=0$ we can simply use the bound
	 \begin{equation}
	 	\left\vert \frac{1}{ \sinh^2(x) } - \frac{1}{x^2}\right\vert \,\leq\, \frac{1}{3}\,,
	 \end{equation}  
	 to obtain
	 \begin{equation}
	 	\left\vert  f_{\gamma,0,m}(x') \right\vert \,\leq\, \frac{\pi^2}{3} \, .
	 \end{equation}
	 Defining
	 \begin{equation} 
	 	h_{\gamma,g,m} \= \left\{\begin{aligned}
	 	\frac{ \gamma^2 m^2}{g^2} &\;\; \text{ if}\;\;\; g \neq 0 \text{ mod } 2m\gamma \\
	 	\frac{1}{3} & \;\;\text{ if}\;\;\; g = 0 \text{ mod } 2m\gamma \,,
	 \end{aligned}\right.
	 \end{equation}
	 we have shown
	 \begin{equation} \label{eq:bound-f-function}
	 	f_{\gamma,g,m}(x') \,\leq\, \pi^2 h_{\gamma,g,m} \, .
	 \end{equation}
	 The function $(1- m x'^2)$ is non-negative and bounded by 1 in the interval $[\frac{-1}{\sqrt{m}},\frac{1}{\sqrt{m}}]$, therefore
	 \begin{equation}
	 \begin{split}
	 		 &	\left\vert \left. \int_{-1/\sqrt{m}}^{1/\sqrt{m}}dx'  f_{\gamma,g,m}(x')  (1-mx'^2)^{25/4}
	  I_{25/2}\left(\frac{2\pi}{\gamma\sqrt{m}}\sqrt{\Delta(1-mx'^2)}\right) \right)  \right\vert  \,\leq\, \\
	  & \frac{2}{\sqrt{m}}\pi^2h_{\gamma,g,m}I_{25/2}\left(\frac{2\pi}{\gamma\sqrt{m}}\sqrt{\Delta}\right)\,.
	 \end{split}
	 \end{equation}

  Using the bound \eqref{eq:gen-kloosterman-bound}, we can combine the sums over $\tilde{\ell}$ and $g$ in \eqref{eq:C25/2-proposition} into a sum over $g \in \mathbb{Z}/2m\gamma \mathbb{Z}$. Moreover, from
	 \begin{equation} \label{eq:bound-sum-h-function}
	 	\sum_{g=1}^{m\gamma} h_{\gamma,g,m} \= \gamma^2m^2\sum_{g=1}^{m\gamma}\frac{1}{g^2} \,<\, \frac{\pi^2}{6}\gamma^2m^2\,,
	 \end{equation}
	 the sum over the remaining $g$ dependent part satisfies the bound
	 \begin{equation}
	 	\sum_{g = -m\gamma+1}^{m\gamma} h_{\gamma,g,m}  \,<\, \left(\frac{\pi^2}{3}+\frac{1}{3}\right)\gamma^2m^2 \,.
	 \end{equation}
	 The last part of the proof consists in bounding the sum over $\gamma \geq 2$, which can be done as in the previous cases.
\end{proof}

The following Lemma shows that the mock terms are suppressed with respect to the largest positive non-mock contribution---the one with $\tilde{n}=-1$.
\begin{lemma} \label{prop:I12-I25-total-bound}
Let $m\geq 6$, $n\geq m$ and $0\leq \ell\leq m$, $\Delta = 4mn-\ell^2$, then
\begin{equation}
     e_{12}(\Delta,m) \= \frac{\left\vert \sum\limits_{\gamma=1}^{+\infty} C_{12}(m,n,\ell,\gamma) \right\vert}{ C_{\mathrm{pos.}}(\Delta, m, -1)}
     \,< \,
     5\cdot 10^{-4}\,,
\end{equation}
and 
\begin{equation}
     e_{25/2}(\Delta,m) \= \frac{\left\vert \sum\limits_{\gamma=1}^{+\infty} C_{25/2}(m,n,\ell,\gamma) \right\vert}{C_{\mathrm{pos.}}(\Delta, m, -1)} 
     \,<\,
     0.015\,.
\end{equation}
\end{lemma}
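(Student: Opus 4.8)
The plan is to bound each of the two mock contributions $\sum_{\gamma\ge1}C_{12}(m,n,\ell,\gamma)$ and $\sum_{\gamma\ge1}C_{25/2}(m,n,\ell,\gamma)$ by the single positive term $C_{\mathrm{pos.}}(\Delta,m,-1)$, reducing everything to quotients of modified Bessel functions handled by the ratio bound \eqref{eq:ratio-bessels-bound}. For $e_{12}$ I would first split off the $\gamma=1$ summand, where the generalized Kloosterman sum simplifies to $\mathrm{Kl}(\tfrac{\Delta}{4m},-1;1,\psi)_{\ell 0}=1/\sqrt{2mi}$, and estimate the tail $\gamma\ge2$ with \eqref{eq:bound-I12-highergamma}. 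This yields
\begin{equation*}
\Bigl|\sum_{\gamma\ge1}C_{12}(m,n,\ell,\gamma)\Bigr|\ \le\ \sqrt{2m}\,d(m)\,\Bigl(\tfrac{4m}{\Delta}\Bigr)^{6}\Bigl(\tfrac{1}{\sqrt{2m}}+\tfrac{2}{21}\Bigr)\,I_{12}\!\Bigl(2\pi\sqrt{\tfrac{\Delta}{m}}\Bigr).
\end{equation*}

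\textbf{Reduction to a function of $m$.} Dividing by the explicit formula for $C_{\mathrm{pos.}}(\Delta,m,-1)$ and collecting powers of $m$ and $\Delta$, the $(m,\Delta)$-prefactor together with the Bessel quotient is decreasing in $\Delta$, so by \eqref{eq:discriminant-bound-m2} it suffices to treat the worst case $\Delta=3m^2$. For the Bessel quotient I would proceed in two steps: first $I_{12}(x)\le I_{23/2}(x)$, since $I_\nu(x)$ is decreasing in $\nu$ and $12>\tfrac{23}{2}$; then, as $2\pi\sqrt{\Delta/m}<2\pi\sqrt{(\Delta/m)(1+m/4)}$, the bound \eqref{eq:ratio-bessels-bound} produces the factor $e^{2\pi\sqrt{\Delta/m}\,(1-\sqrt{1+m/4})}(1+m/4)^{-1/4}$. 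Finally $d(m)$ is estimated by Proposition \ref{prop:eta-bound}. For $m\ge7$ I would also bound $I_{13}(4\pi\sqrt m)<e^{4\pi\sqrt m}/\sqrt{8\pi^2\sqrt m}$ (which follows from $I_\nu\le I_{1/2}$ together with $\sinh t<\tfrac12 e^{t}$), ending with a closed-form bound of the shape $\mathrm{const}\cdot m^{-11/2}(m+2)^{-1}(m+4)^{-23/4}\Delta^{-1/4}\,e^{4\pi\sqrt m+2\pi\sqrt{3m}(1-\sqrt{1+m/4})}$ at $\Delta=3m^2$; one then checks that this is decreasing in $m$ — the algebraic part manifestly so, and the exponent has negative derivative for $m\ge7$ because the term $-2\pi\sqrt{3m(1+m/4)}$ eventually grows like $-\pi\sqrt3\,m$ — so it is enough to verify $e_{12}<5\cdot10^{-4}$ at $m=7$. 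The remaining case $m=6$ is done directly, keeping the exact value of $d(6)$ (its asymptotic bound from Proposition \ref{prop:eta-bound} overshoots by roughly an order of magnitude there) and handling the leftover Bessel quotient through \eqref{eq:ratio-bessels-bound}.

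\textbf{The $e_{25/2}$ bound.} Here I would take as input the ready-made estimate \eqref{eq:C25/2-proposition} of Lemma \ref{lemma:bounds-besselterms-general}, which already absorbs the sum over $\gamma$, the $x'$-integral, and the $g$-sum, reducing $\bigl|\sum_{\gamma\ge1}C_{25/2}\bigr|$ to $(1+\tfrac{2}{21})d(m)(4m/\Delta)^{25/4}m^{3/2}\pi(\tfrac{\pi^2}{3}+\tfrac13)I_{25/2}(2\pi\sqrt{\Delta/m})$. Dividing by $C_{\mathrm{pos.}}(\Delta,m,-1)$, using $I_{25/2}(x)\le I_{23/2}(x)$ (again $\nu$-monotonicity, now $\tfrac{25}{2}>\tfrac{23}{2}$) followed by \eqref{eq:ratio-bessels-bound}, the bound $\Delta\ge3m^2$, Proposition \ref{prop:eta-bound} for $d(m)$, monotonicity in $m$, and the base case $m=6$, one obtains $e_{25/2}<0.015$; the larger threshold simply reflects the extra $m^{3/2}\pi(\tfrac{\pi^2}{3}+\tfrac13)$ and $(4m/\Delta)^{25/4}$ factors inherited from \eqref{eq:C25/2-proposition}.

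\textbf{Main obstacle.} The difficulty is quantitative, not conceptual. Near $m=6$ the suppressing factor $e^{2\pi\sqrt{\Delta/m}(1-\sqrt{1+m/4})}$ does not yet defeat the growth $d(m)\sim e^{4\pi\sqrt m}$: the net exponent $4\pi\sqrt m+2\pi\sqrt{3m}(1-\sqrt{1+m/4})$ is still positive, of order $15$ at $m=6$. Consequently the inequalities survive only because of the polynomial powers of $m$ appearing in $d(m)$ and in $C_{\mathrm{pos.}}$, and — at $m=6$ — because the exact $d(6)$ is used rather than its asymptotic bound. One must therefore track the constants with care: replace $I_\nu(x)$ by $e^{x}/\sqrt{2\pi x}$ rather than by $e^{x}$, keep $I_{13}(4\pi\sqrt m)$ unevaluated for small $m$, and actually verify the monotonicity in $m$ instead of merely asserting it.
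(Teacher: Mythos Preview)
Your proposal is correct and follows essentially the same strategy as the paper: bound the mock sums via Lemma~\ref{lemma:bounds-besselterms-general}, divide by $C_{\mathrm{pos.}}(\Delta,m,-1)$, reduce to the worst case $\Delta=3m^2$, argue monotonicity in $m$, and evaluate at $m=6$. The paper is terser: for $e_{12}$ it uses the cruder $|\mathrm{Kl}(\cdot;1,\psi)_{\ell 0}|\le 1$ to get a single prefactor $\tfrac{23}{21}$ (rather than your sharper $\tfrac{1}{\sqrt{2m}}+\tfrac{2}{21}$), then simply asserts that the resulting one-variable expression in $m$ is decreasing and evaluates it numerically at $m=6$ with exact $d(6)$ and exact Bessel values; likewise for $e_{25/2}$. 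You instead make the monotonicity step explicit by passing to closed-form bounds (via $I_{12}\le I_{23/2}$ and $I_{25/2}\le I_{23/2}$ from $\nu$-monotonicity, then \eqref{eq:ratio-bessels-bound}, Proposition~\ref{prop:eta-bound}, and the $I_{1/2}$ asymptotic), treating $m=6$ separately with the exact $d(6)$. This buys a cleaner monotonicity argument at the cost of more bookkeeping; the paper's version is shorter but leaves ``decreasing in $m$'' as a numerical observation.
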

\begin{proof}
    From \eqref{eq:bound-I12-highergamma}, we have
    \begin{equation}
         {\left\vert \sum\limits_{\gamma=1}^{+\infty} C_{12}(m,n,\ell,\gamma) \right\vert} \,<\, {\frac{23}{21}\sqrt{2m}\, d(m) 
	\left(\frac{4m}{\Delta} \right)^6 I_{12}\left( 2\pi \sqrt{\frac{\Delta}{m}} \right)}\,.
    \end{equation}
    Since $\Delta \geq 3m^2$ and $m\geq 6$, we have
    \begin{equation}
    \begin{split}
     &   \frac{{\frac{23}{21}\sqrt{2m}\, d(m) 
	\left(\frac{4m}{\Delta} \right)^6 I_{12}\left(2\pi \sqrt{\frac{\Delta}{m}}  \right)}}{ C_{\mathrm{pos.}}(\Delta, m, -1)} \,\leq\, 
    \frac{\frac{23}{21}\sqrt{2m}\, d(m) 
	\left(\frac{4m}{3m^2} \right)^6 I_{12}\left( 2\sqrt{3}\pi \sqrt{m} \right)}{C_{\mathrm{pos.}}(3m^2, m, -1)} 
   \\
  & \=\, 
    \frac{\frac{23}{21}\sqrt{2m}\, d(m) 
	\left(\frac{4m}{3m^2} \right)^6 I_{12}\left( 2\sqrt{3}\pi \sqrt{m} \right)}{2\pi \, (m+2)\frac{1}{\sqrt{2m}}
	  \left(\frac{  4m+m^2 }{3m^2} \right)^{23/4} I_{23/2}\left(\sqrt{3}{\pi}\sqrt{4m+m^2} \right)}\,.
    \end{split}
    \end{equation}
    The last expression is a decreasing function of $m$, being bounded by its value at $m=6$, which is smaller than $5\cdot 10^{-4}$. 
    
    The other bound is proved in the same way. From \eqref{eq:C25/2-proposition} and $\Delta \geq 3m^2$, we have
    \begin{equation*}
    \frac{\left\vert \sum\limits_{\gamma=1}^{+\infty} C_{25/2}(m,n,\ell,\gamma) \right\vert}{ C_{\mathrm{pos.}}(3m^2, m, -1)}  
   \,\leq\,      \frac{\left(1+\frac{2}{21}\right){d(m)}   \left(\frac{4m}{3m^2} \right)^{25/4} m^{3/2}\pi  \left(\frac{\pi^2}{3}+\frac{1}{3}\right) I_{25/2}\left({2\sqrt{3}\pi}{\sqrt{m}}\right)}{2\pi \, (m+2)\frac{1}{\sqrt{2m}}
	  \left(\frac{ 4m+m^2}{3m^2} \right)^{23/4} I_{23/2}\left(\sqrt{3}{\pi}\sqrt{4m+m^2} \right)}\,.
    \end{equation*}
    The last expression is a decreasing function of $m$, being bounded by its value at $m=6$ which is smaller than $0.015$.      
\end{proof}

\subsection{Polar coefficients bounds}

In this subsection we bound the polar coefficients $c_m^F(\tilde{n},\tilde{\ell})$ with $\tilde{\Delta} = 4m\tilde{n}-\tilde{\ell}^2<0$ with respect to the first contribution defined in \eqref{eq:first-contr} coming from the $U$ matrix. We start with the case $\tilde{n}=0$ for which there is an exact expression for the polar coefficients. This expression can be derived by noticing that the matrices of the form $U^r$ with $r\geq 1$ are enough to find a path from the $\mathcal{R}$ chamber to a zero degeneracy chamber. Alternatively, one can also compute the coefficients with $\tilde{n}=0$ as the $p^m y^{\tilde{\ell}}$ coefficient of the function $\frac{d(0)}{\eta^{24}(\rho)}\mathcal{A}_{2,0}(\rho,v)$. The explicit form is,
\begin{equation}
    c^F_m(0,\tilde{\ell})\=\tilde{\ell}\,d(0)\sum_{r=1}^{\lfloor \frac{m+1}{\tilde{\ell}} \rfloor}d(m-r\tilde{\ell}) \,,
\end{equation}
which is valid for $0<\tilde{\ell}\leq m$. From this expression we can prove the following. 
\begin{lemma} \label{lemma polarcoef ntilde0}
    Let $m\geq 6$ and $1\leq k < m$. Then
    \begin{equation}  \label{eq:polarbounds-ntilde0}
    {c}^F_m(0,m-k) \,\leq\, \left(1 + \frac{25}{d(k)} + \kappa_1 k^{1/2} \right)\hat{c}^F_m(0,m-k) \,,
    \end{equation}
    where $\kappa_1 = \frac{91}{88\pi}$.
\end{lemma}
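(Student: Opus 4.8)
The plan is to work directly from the closed form for $c^F_m(0,\tilde\ell)$ displayed just above the statement, specialised to $\tilde\ell = m-k$, and to compare it termwise with the leading contribution $\hat c^F_m(0,m-k)=(m-k)\,d(k)\,d(0)$ read off from \eqref{eq:first-contr}. Writing $\ell=m-k$ and $R=\lfloor (m+1)/\ell\rfloor$, and dividing the target inequality by the positive number $(m-k)\,d(0)$, the statement becomes equivalent to
\begin{equation*}
    \sum_{r=1}^{R} d(m-r\ell) \;\le\; d(k) + 25 + \kappa_1\,k^{1/2}\,d(k).
\end{equation*}
Since the $r=1$ summand is exactly $d(m-\ell)=d(k)$, everything reduces to proving $\sum_{r=2}^{R} d(m-r\ell)\le 25 + \kappa_1 k^{1/2}d(k)$.

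I would then split the indices $r\in\{2,\dots,R\}$ according to the sign of the argument $m-r\ell$. For $r\le R$ one has $m-r\ell\ge -1$, and as the arguments are strictly decreasing in $r$ there is at most one index giving argument $0$ and at most one giving argument $-1$; since $d$ vanishes below $-1$, the terms with $m-r\ell\le 0$ therefore contribute at most $d(0)+d(-1)=24+1=25$ — this is precisely why the constant $25$ (divided by $d(k)$) appears in the statement. The remaining indices have $m-r\ell\ge 1$, with arguments forming a subset of $\{1,2,\dots,m-2\ell\}=\{1,\dots,2k-m\}$; this set is nonempty only when $2k-m\ge 1$, i.e.\ $k\ge (m+1)/2$, which for $m\ge 6$ forces $k\ge 4$. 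Using $d(n)\ge 0$ and $2k-m\le k-1$ (valid because $k\le m-1$), this part of the sum is bounded by $\sum_{n=1}^{2k-m}d(n)\le\sum_{n=1}^{k-1}d(n)$, so it remains to prove the clean estimate $\sum_{n=1}^{k-1}d(n)\le\kappa_1 k^{1/2}d(k)$ for $k\ge 4$.

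For $k\ge 8$ I would establish this analytically, checking $k\in\{4,5,6,7\}$ by hand from the known values $d(1)=324,\ d(2)=3200,\dots,d(7)=30178680$, for which $\sum_{n=1}^{k-1}d(n)$ turns out to be comfortably below $\kappa_1 k^{1/2}d(k)$. For the analytic part, Proposition \ref{prop:eta-bound} gives $d(n)<\tfrac{13}{12}\,2\pi\,n^{-13/2}I_{13}(4\pi\sqrt n)$; since $t\mapsto t^{-13/2}I_{13}(4\pi\sqrt t)=(4\pi)^{13}\,I_{13}(x)/x^{13}$ with $x=4\pi\sqrt t$ is increasing (as $I_\nu(x)/x^\nu$ is), comparison with the integral gives $\sum_{n=1}^{k-1}d(n)<\tfrac{13}{12}\,2\pi\int_1^k t^{-13/2}I_{13}(4\pi\sqrt t)\,dt$. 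The substitution $v=4\pi\sqrt t$ together with the Bessel identity $\frac{d}{dv}\big(v^{-12}I_{12}(v)\big)=v^{-12}I_{13}(v)$ evaluates the integral to $\tfrac{1}{2\pi}\big(k^{-6}I_{12}(4\pi\sqrt k)-I_{12}(4\pi)\big)<\tfrac{1}{2\pi}k^{-6}I_{12}(4\pi\sqrt k)$, hence $\sum_{n=1}^{k-1}d(n)<\tfrac{13}{12}k^{-6}I_{12}(4\pi\sqrt k)$. Dividing by $k^{1/2}d(k)$ and using the lower bound $d(k)>\tfrac{11}{12}\,2\pi\,k^{-13/2}I_{13}(4\pi\sqrt k)$ from the same Proposition, the powers of $k$ cancel and one obtains $\sum_{n=1}^{k-1}d(n)\big/\big(k^{1/2}d(k)\big)<\tfrac{13}{22\pi}\,I_{12}(4\pi\sqrt k)/I_{13}(4\pi\sqrt k)$. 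Finally the recurrence $I_{12}(x)=I_{14}(x)+\tfrac{26}{x}I_{13}(x)$ with $I_{14}(x)<I_{13}(x)$ yields $I_{12}(x)/I_{13}(x)<1+\tfrac{26}{x}$, so for $k\ge 8$ we get $I_{12}(4\pi\sqrt k)/I_{13}(4\pi\sqrt k)<1+\tfrac{13}{2\pi\sqrt k}\le \tfrac{7}{4}$, and $\tfrac{13}{22\pi}\cdot\tfrac{7}{4}=\tfrac{91}{88\pi}=\kappa_1$, as required.

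The delicate step is this last constant chase: the integral estimate produces the factor $\tfrac{13}{22\pi}\big(1+\tfrac{13}{2\pi\sqrt k}\big)$, which drops below $\kappa_1=\tfrac{91}{88\pi}$ exactly when $\sqrt k\ge \tfrac{26}{3\pi}$, i.e.\ $k\ge 8$; the few remaining values $k\in\{4,5,6,7\}$ (which arise only for the smallest $m$, since the positive-argument terms are present only when $k\ge(m+1)/2$) must be verified by direct computation. The other point requiring care is the bookkeeping in the split — in particular checking that the argument-$\{0,-1\}$ terms really contribute at most $25$ for every value of $\ell$, including the extreme case $\ell=1$, where both appear and that part of the sum equals exactly $d(0)+d(-1)=25$.
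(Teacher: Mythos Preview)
Your argument is correct and follows essentially the same route as the paper's proof: both isolate the $r=1$ term, peel off the contributions with argument $0$ or $-1$ to produce the constant $25$, and control the remaining positive-argument terms by an integral estimate based on Proposition~\ref{prop:eta-bound} together with the ratio bound $I_{12}(4\pi\sqrt{k})/I_{13}(4\pi\sqrt{k})\le 7/4$. The only organisational difference is that you first majorise the sparse set of arguments $\{2k-m,3k-2m,\dots\}$ by the full range $\{1,\dots,k-1\}$ and then integrate in the argument variable, whereas the paper integrates directly in the summation index $r$ and afterwards discards the resulting factor $(m-k)^{-1}\le 1$; after that step the two bounds coincide, and the paper simply asserts the Bessel ratio bound for $n\ge 3$ rather than deriving it from the recurrence and checking $k\in\{4,5,6,7\}$ by hand as you do.
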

\begin{proof}
The ratio vs. the first contribution reads
\begin{equation}
    \frac{c^F_m(0,\tilde{\ell})}{\hat{c}^F_m(0,\tilde{\ell})} 
    \= \frac{\sum\limits_{r=1}^{\lfloor \frac{m+1}{\tilde{\ell}} \rfloor}d(m-r\tilde{\ell})}{d(m-\tilde{\ell})} 
    \= \frac{1}{d(k)}\sum_{r=1}^{\lfloor \frac{m+1}{m-k} \rfloor}d(m-rm+rk) \, ,
\end{equation}
where $k=m-\tilde{\ell}$. There will be extra contributions starting at $r=2$ only if
\begin{equation}
    m-2m+2k \,\geq\, -1 \Longrightarrow k \,\geq\, \frac{m-1}{2} \,>\,2 \,,
\end{equation}
since we are only considering $m\geq 6$. Thus, the following equality is satisfied
\begin{equation}
    {c}^F_m(0,m-k) \= \hat{c}^F_m(0,m-k) \qquad \text{ for } k=1,2,3 \,.
\end{equation}
The following analysis is restricted to $k\geq 3$. For these cases, we have
\begin{equation}
   \frac{1}{d(k)}\sum_{r=1}^{\lfloor \frac{m+1}{m-k} \rfloor}d(m-rm+rk) \leq  \frac{1}{d(k)}\left( d(-1)+d(0)+\sum_{r=1}^{\lfloor \frac{m-1}{m-k} \rfloor}d(m-rm+rk) \right) \,.
\end{equation}
Using the second inequality in \eqref{eq:dedekindeta-bounds}, the discrete sum over $r$ can be bounded by an integral and we obtain
\begin{equation}
\begin{split}
        & \sum_{r=1}^{\lfloor \frac{m-1}{m-k} \rfloor}d(m-rm+rk) \, \leq\, \\
        & d(k)+\int_{1}^{ \frac{m-1}{m-k}} \left(1+\frac{1}{12}\right)2\pi (m-rm+rk)^{-13/2}I_{13}(4\pi\sqrt{m-rm+rk})dr \,,
\end{split}
\end{equation}
where we have used that the sum over $r$ is monotonically decreasing. The integral can be evaluated to yield the following bound
\begin{equation}
 \frac{1}{d(k)}\sum_{r=1}^{\lfloor \frac{m-1}{m-k} \rfloor}d(m-rm+rk) \leq   1+\frac{1}{d(k)}\left(1+\frac{1}{12}\right)\frac{2\pi}{m-k}\left( \frac{I_{12}(4\pi \sqrt{k})}{2k^6 \pi}-\frac{I_{12}(4\pi)}{2\pi} \right)  \,.
\end{equation}
Applying the first inequality in \eqref{eq:dedekindeta-bounds} to $d(k)$ in the denominator, together with the bound
\begin{equation}
    \frac{I_{12}(4\pi\sqrt{n})}{I_{13}(4\pi\sqrt{n})} \leq\frac{7}{4} \qquad \text{ for } n\geq 3\,,
\end{equation}
we obtain
\begin{equation}
   \sum_{r=1}^{\lfloor \frac{m+1}{m-k} \rfloor}d(m-rm+rk) \, \leq \, d(-1)+d(0)+d(k)+d(k)\frac{1+\frac{1}{12}}{1-\frac{1}{12}}k^{1/2}\frac{7}{8\pi}\,,
\end{equation}
where we have used the rough bound $(m-k)^{-1}\leq 1$. We can write this bound as
\begin{equation}
    \frac{{c}^F_m(0,m-k)}{\hat{c}^F_m(0,m-k)} \leq \frac{1}{d(k)}\left( d(-1)+d(0)+d(k)+d(k)\frac{1+\frac{1}{12}}{1-\frac{1}{12}}k^{1/2}\frac{7}{8\pi}\right) \,,
\end{equation}
which is equivalent to \eqref{eq:polarbounds-ntilde0}.
\end{proof}
We now bound the polar coefficients for $\tilde{n}\geq 1$. For this we use the general form of the polar coefficients \eqref{polar-coef-formula}.
\begin{lemma} \label{lemma:bounds-polar-coef-ngeq1}
    Let $m\geq 6$, $\tilde{n}\geq 1$, $1\leq k \leq m$ and $\tilde{\Delta} = 4m\tilde{n} - (m-k)^2<0$. Then
    \begin{equation} \label{eq:polarbounds-general-prop}
     c_m^F(\tilde{n},m-k) \,\leq\, \left(\kappa_2+ \kappa_3 k\right)\, \hat{c}_m^F(\tilde{n},m-k) \,,
\end{equation}
where $\kappa_2 = 181/100$ and $\kappa_3 = 243/400$.
\end{lemma}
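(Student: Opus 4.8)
The plan is to work from the closed formula \eqref{polar-coef-formula}, which writes $c_m^F(\tilde n,m-k)=\sum_{G\in W(m,\tilde n,m-k)}\ell_G\,d(m_G)\,d(n_G)$ as a finite sum over the matrices \eqref{eq:matrices-cont-fract} attached to the continued fraction of $\tfrac{m-k}{2m}$, and to peel off the term $G=U$, which by \eqref{eq:first-contr} equals $\hat c_m^F(\tilde n,m-k)=\ell_U\,d(m_U)\,d(n_U)$ with $(m_U,n_U,\ell_U)=(\tilde n+k,\,\tilde n,\,m-k-2\tilde n)$. The content of the lemma is then that the remaining terms sum to at most $(\kappa_2-1+\kappa_3k)\,\hat c_m^F(\tilde n,m-k)$. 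Three facts set the stage. (i) Along the staircase \eqref{eq:matrices-cont-fract} neither $m_G$ nor $n_G$ increases — this is exactly \eqref{eq:derivatives-mg-ng} with $\ell_G>0$ from \eqref{eq:lg-positive} — so $m_G\le m_U$ and $n_G\le n_U$ for all $G\in W$. (ii) Since $(m_G,n_G,\ell_G)$ is an $SL(2,\mathbb Z)$-image of $(m,\tilde n,m-k)$ and $\tilde\Delta<0$, $\ell_G^{\,2}=4m_Gn_G+|\tilde\Delta|$, which with (i) gives $\ell_G\le\ell_U$. (iii) For the pure powers $n_{U^r}=\tilde n$, $m_{U^r}=r^2\tilde n-r(m-k)+m$, $\ell_{U^r}=m-k-2r\tilde n$, and $\tilde\Delta<0$ forces the admissible range $r\le a_1\le\tfrac{2m}{m-k}$ to lie strictly left of the vertex $\tfrac{m-k}{2\tilde n}$ of the parabola $r\mapsto m_{U^r}$, so the $m_{U^r}$ decrease strictly, with gaps $m_{U^r}-m_{U^{r+1}}=(m-k)-(2r+1)\tilde n$ that are positive throughout and of size $\sim m-k$ for the first few $r$.

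I would then estimate $\sum_{G\ne U}\ell_G d(m_G)d(n_G)\le\ell_U\sum_{G\ne U}d(m_G)d(n_G)$ using the decomposition of \eqref{eq:matrices-cont-fract} into blocks, along each of which one of the two coordinates is frozen and the other runs through a strictly decreasing sequence of integers (by (iii) and its analogue at each level). On each block I would trade the $d(\cdot)$ for $I_{13}$-expressions via Proposition \ref{prop:eta-bound}, compare Bessel functions via \eqref{eq:ratio-bessels-bound}, and — exactly as in the proof of Lemma \ref{lemma polarcoef ntilde0}, through a change of variables and the antiderivative $\int u^{-13/2}I_{13}(4\pi\sqrt u)\,du\propto u^{-6}I_{12}(4\pi\sqrt u)$ — bound the block contribution by an $I_{12}$-type quantity dominated by the first contribution from that block; the monotone decrease of the frozen coordinate across successive blocks then makes the block contributions themselves geometrically suppressed. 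Summing, the leading block (the pure powers, with $n_G=\tilde n$ frozen) produces a term whose size is governed by the number $a_1-1\le\tfrac{2m}{m-k}$ of admissible powers weighted against the $I_{12}/I_{13}\le\tfrac74$ Bessel ratio and $|\tilde\Delta|^{-1/2}$, which after the inequalities $|\tilde\Delta|\gtrsim(m-k)^2$ and $m_U\le m$ is linear in $k$; the remaining ($T$-factor) blocks contribute at most an absolute constant. Tracking the $1\pm\tfrac1{12}$ factors of Proposition \ref{prop:eta-bound}, the Bessel-ratio bounds and the geometric sums, and using $m\ge6$ to reduce the residual monotone estimates to their worst value, then yields $\kappa_2=\tfrac{181}{100}$, $\kappa_3=\tfrac{243}{400}$.

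The step I expect to be the main obstacle is squeezing the pure-powers block into a clean \emph{linear}-in-$k$ bound: naively one must pit the number of admissible powers $U^r$ (of order $\tfrac{2m}{m-k}$, which grows as $k\to m$) and the possibly small discriminant $|\tilde\Delta|$ against the Bessel decay of $d(m_{U^r})/d(m_U)$, and a too-crude estimate only delivers a bound of order $\sqrt m$; the improvement must come from the genuine (parabolic) gaps of (iii), which force $m_{U^r}$ below $-1$ after few steps unless $m_U$ is already large, in which case the ratio is exponentially close to $1$. A secondary nuisance is the degeneration of the continued fraction of $\tfrac{m-k}{2m}$ for small $m-k$ (the borderline $\tilde\ell=m$ behaviour of \cite{LopesCardoso:2020pmp}), where one must check that $W(m,\tilde n,m-k)$ acquires no extra matrix with $m_G,n_G\ge-1$.
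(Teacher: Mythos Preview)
Your plan is far more elaborate than necessary, and the obstacle you identify at the pure-powers block is real for your route but entirely absent from the paper's. After peeling off $U$ and using your observations (i) and (ii) (so $\ell_G\le\ell_U$ and, by monotonicity of $d$, $d(m_G)d(n_G)\le d(m_{U^2})d(n_{U^2})$ for every $G\ne U$), the paper does not decompose into blocks or invoke any Bessel integrals: it simply bounds the number of matrices in $W(m,\tilde n,m-k)$ by $2m$ (since $\sum_i a_i\le 2m$), obtaining
\[
\sum_{G\ne U}\frac{\ell_G\, d(m_G)\,d(n_G)}{\ell_U\, d(m_U)\,d(n_U)}\;<\;2m\,\frac{d(4\tilde n-m+2k)}{d(\tilde n+k)}\,.
\]
The point you are missing is that the right-hand side is \emph{decreasing in $m$}: the numerator $d(4\tilde n-m+2k)$ decays exponentially in $m$ while the prefactor $2m$ grows only linearly. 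One therefore substitutes the smallest $m$ allowed by $\tilde\Delta<0$, namely $m_{\min}=k+2\tilde n+\sqrt{(k+2\tilde n)^2-k^2}$; the resulting expression is maximised over $\tilde n\ge1$ at $\tilde n=1$, where $2m_{\min}\le 6k+8$ and the remaining ratio of $d$-coefficients is bounded by its worst value $d(1)/d(2)=324/3200$. The linear bound $\tfrac{324}{3200}(6k+8)=\kappa_3 k+(\kappa_2-1)$ falls out in one line, with the linear-in-$k$ dependence coming from the count $2m_{\min}$ itself, not from any Bessel analysis.

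Your asserted inequality $|\tilde\Delta|\gtrsim(m-k)^2$ is also incorrect: since $|\tilde\Delta|=(m-k)^2-4m\tilde n$ with $\tilde n\ge1$, the ratio $|\tilde\Delta|/(m-k)^2$ can be arbitrarily close to zero, so the route you sketch through this inequality to a linear-in-$k$ bound would not go through as written.
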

\begin{proof}
    Consider first the second contribution to $W(m,\tilde{n},\tilde{\ell})$, where $\tilde{\ell} = m-k$. For all charge bilinears satisfying $0<\tilde{\ell}<m$, we have $0<\frac{\tilde{\ell}}{2m}<\frac{1}{2}$ and therefore the second matrix coming from the continued fraction expansion of $\frac{\tilde{\ell}}{2m}$ is always
\begin{equation}
    U^2 = \begin{pmatrix}
        1 & 0 \\ 2 & 1
    \end{pmatrix}, \;\;\;\; \text{ with } \;\;\; {\ell}_{U^2}d(m_{U^2})d({n}_{U^2}) \= (\tilde{\ell}-4\tilde{n})d(4\tilde{n}+m-2\tilde{\ell})d(\tilde{n}) \, .
\end{equation}
In terms of $k = m-\tilde{\ell}$ the contribution becomes 
\begin{equation}
    {\ell}_{U^2}d(m_{U^2})d(\tilde{n}_{U^2}) \= (m-k-4\tilde{n})d(4\tilde{n}-m+2k)d(\tilde{n}) \, .
\end{equation}
The ratio of the second contribution and the first contribution satisfies the following bound,
\begin{equation}
    \frac{(m-k-4\tilde{n})d(4\tilde{n}-m+2k)d(\tilde{n})}{(m-k-2\tilde{n})d(\tilde{n}+k)d(\tilde{n})} \,\leq\,  \frac{d(4\tilde{n}-m+2k)}{d(\tilde{n}+k)}\,,
\end{equation}
since $\tilde{n}\geq 1$. We use this inequality to bound the rest of the contributions. As explained in Section \ref{sec:pol-coef}, $m_G$ and ${n}_G$ are decreasing in terms of the sequence defined by the continued fraction. Moreover, since~${n}_U = \tilde{n} \geq 1$,~$m_U = \tilde{n}+k\geq 1$ and the discriminant~$ \tilde{\Delta}$ is invariant under~$SL(2,\mathbb{Z})$ transformations,
\begin{equation}
    \tilde{\Delta} \= 4m_U{n}_U-{\ell}_U^2 \= 4m_G {n}_G - {\ell}_G^2 \,,
\end{equation}
the value of ${\ell}_G$ cannot be greater than ${\ell}_U$,
\begin{equation}
    {\ell}_U \,\geq\, {\ell}_G\,.
\end{equation}
By the description of the set $W(m,\tilde{n},m-k)$ in terms of the continued fraction of $\frac{\tilde{\ell}}{2m}$ given in section \ref{sec:pol-coef}, there are less than $2m$ contributions to the polar coefficients, so we can bound the sum of all contributions by
\begin{equation} \label{eq:polarcoef-rough-sum}
    \sum_{G\in W(m,\tilde{n},m-k)\setminus \{U\} } \frac{{\ell}_G \, d(m_G) \, d({n}_G)}{{\ell}_U \, d(m_U) \, d({n}_U)} 
    \,<\, 
    2m \frac{d(4\tilde{n}-m+2k)}{d(\tilde{n}+k)}\,.
\end{equation}
The right hand side is decreasing with $m$. Thus, we can bound it by the smallest possible value of $m$. From the conditions
\begin{equation}
    \tilde{\Delta} = 4m\tilde{n}-(m-k)^2<0, \;\;\; 1\leq k <m \,,
\end{equation}
one finds 
\begin{equation} \label{eq:boundmpolar}
    m > k+2\tilde{n}+\sqrt{(k+2\tilde{n})^2-k^2} \,,
\end{equation}
and therefore
\begin{equation}
    2m \frac{d(4\tilde{n}-m+2k)}{d(\tilde{n}+k)} \leq  2
    \left( k+2\tilde{n}+\sqrt{(k+2\tilde{n})^2-k^2}\right) 
    \frac{d\left(\left\lfloor2\tilde{n}+k-\sqrt{(k+2\tilde{n})^2-k^2}\right\rfloor\right)}{d(\tilde{n}+k)} \,.
\end{equation}
This quantity is maximized by the value $\tilde{n}=1$,
\begin{equation} \label{eq:bounds-almost-polargeneral}
    2m \frac{d(4\tilde{n}-m+2k)}{d(\tilde{n}+k)} \,<\, (6k+8) \frac{d(\lfloor2+k-2\sqrt{k}\rfloor)}{d(1+k)} \,,
\end{equation}
and the ratio of the $\eta^{-24}(\tau)$ coefficients is maximized by its value at $k=1$,
\begin{equation}
    \frac{d(\lfloor2+k-2\sqrt{k}\rfloor)}{d(1+k)}\, \leq\, \frac{d(1)}{d(2)} \=  \frac{324}{3200}\,,
\end{equation}
yielding the desired bound
\begin{equation}
    \frac{1}{\hat{c}_m^F(m,\tilde{n},m-k)}\sum_{G\in W(m,\tilde{n},m-k)\setminus \{U\} } \tilde{\ell}_G \, d(m_G)d(\tilde{n}_G) 
    \,\leq \, 
    \frac{324}{3200}(6k+8) \,.
\end{equation}
\end{proof}
The previous result is a rough bound, mostly due to the use of the inequality \eqref{eq:polarcoef-rough-sum}. Numerical experiments show that there is no linear growth in $k$ and  presumably using the precise form of the polar coefficients \eqref{polar-coef-formula} one could prove a tighter bound. For our purposes, however, the bound is enough.

The remaining bounds concern the polar coefficients with $\tilde{n}=-1$. 
The coefficients $c_m^F(-1,\tilde{\ell})$ with $0\leq \tilde{\ell}\leq m$ have the following exact expression 
\begin{equation} \label{eq:n-minus-one-polar-coef}
    c_m^F(-1,\tilde{\ell}) \= \sum_{r\geq 1}(2r+\tilde{\ell}) d(m-r^2-r\tilde{\ell}) \,,
\end{equation}
where the sum over $r$ is bounded by
\begin{equation}
    r_{\max} \= {\lfloor\frac{-\tilde{\ell}+\sqrt{\tilde{\ell}^2+4m+4}}{2}\rfloor}\,,
\end{equation}
due to the vanishing of $d(n)$ for $n<-1$. The expression \eqref{eq:n-minus-one-polar-coef} can be obtained directly from the product representation of $\Phi_{10}(\rho,\sigma,v)$ \eqref{eq:def-igusa-product}, since for $\Phi_{10}(\rho,\sigma,v)^{-1}$  the $p^{-1}$ coefficient is given by the quantities $C_0(-1)=1$ and $C_0(0)=10$, yielding the following negative index Jacobi form
\begin{equation}
\begin{split}
        \psi_{-1}(\sigma,v) \= & \frac{q^{-1}y}{(1-y)^2}\prod_{n=1}^\infty (1-y q^n )^{-2}(1-y^{-1}q^n )^{-2}(1-q^n)^{-20} \\  \= & \frac{1}{\eta^{24}(\sigma)}\frac{1}{\varphi_{-2,1}(\sigma,v)}\,.
\end{split}
\end{equation}
The function $\varphi_{-2,1}(\sigma,v)^{-1}$ can be expressed as a derivative of an Appell-Lerch sum \cite{Bringmann:2015gla, Bossard:2018rlt}, giving the coefficients \eqref{eq:n-minus-one-polar-coef} \cite{LopesCardoso:2020pmp}.

In \eqref{eq:n-minus-one-polar-coef} the second contribution coming from $r=2$ is only non-vanishing if
\begin{equation}
    m-4-2\tilde{\ell} = m-4-2(m-k) \geq -1 \Rightarrow k \geq \frac{3+m}{2} \geq 5\,,
\end{equation}
since $m\geq 6$. Thus, in the following we only consider the case $k\geq 5$. We have

\begin{lemma} \label{prop:polar-nminus1}
    Let $m\geq 6$ and $5\leq k \leq m$, then 
    \begin{equation} \label{eq:bound-nminusone-polar}
    c_m^F(-1,m-k) \,\leq\, (\kappa_4+ \kappa_5 k^{1/2})\, \hat{c}_m^F(-1,m-k) \,,
\end{equation}
where the constants~$\kappa_4,\kappa_5$ are $\kappa_4 = 3+2\cdot10^{-5}$ and $\kappa_5 = \frac{1+\frac{1}{12}}{1-\frac{1}{12}}\frac{1}{4\pi}$.
\end{lemma}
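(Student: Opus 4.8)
The plan is to mimic the proof of Lemma \ref{lemma polarcoef ntilde0}, now starting from the exact expansion \eqref{eq:n-minus-one-polar-coef} rather than the $\tilde n=0$ formula. Write $\tilde\ell = m-k$ and $a(r) = m - r^2 - r\tilde\ell$. The $r=1$ term of \eqref{eq:n-minus-one-polar-coef} is exactly $\hat c^F_m(-1,m-k) = (2+\tilde\ell)\,d(k-1)$, so
\begin{equation*}
    \frac{c_m^F(-1,m-k)}{\hat c_m^F(-1,m-k)} \= 1 + \frac{1}{(2+\tilde\ell)\,d(k-1)}\sum_{r\geq 2}(2r+\tilde\ell)\,d\big(a(r)\big)\,.
\end{equation*}
If $k<\frac{m+3}{2}$ then $a(2) = 2k-m-4<-1$, and since $a(r)$ is decreasing in $r$ and $d$ vanishes below $-1$ the sum over $r\geq 2$ is empty and the ratio equals $1$, which is already below the claimed bound; so I would assume $k\geq\frac{m+3}{2}$, whence $k-1\geq\frac{m+1}{2}\geq 4$.

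The key observation is the elementary identity $2r+\tilde\ell = -\frac{d}{dr}\big(m-r^2-r\tilde\ell\big)$: the linear weight is precisely the Jacobian of the substitution $r\mapsto a(r)$. First I would isolate the (at most one, since consecutive arguments of $d$ differ by $a(r)-a(r+1) = 2r+1+\tilde\ell\geq 5$) value of $r\geq 2$ with $a(r)=-1$; its summand $(2r+\tilde\ell)\,d(-1)$ equals $\sqrt{(m-k)^2+4m+4}$, so its contribution to the ratio is $\frac{\sqrt{(m-k)^2+4m+4}}{(2+\tilde\ell)\,d(k-1)}\leq\frac{\sqrt{m+1}}{d(k-1)}$, which on $k-1\geq\frac{m+1}{2}$ is decreasing in $m$ and bounded by $2\cdot 10^{-5}$. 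For the remaining $r\geq 2$, where $a(r)\geq 0$, I would use Proposition \ref{prop:eta-bound} to replace $d(a(r))$ by $(1+\frac{1}{12})2\pi\,a(r)^{-13/2}I_{13}(4\pi\sqrt{a(r)})$ and compare the weighted discrete sum with the corresponding integral; because the Jacobian makes the change of variables exact,
\begin{equation*}
    \sum_{\substack{r\geq 2\\ a(r)\geq 0}}(2r+\tilde\ell)\,d\big(a(r)\big) \;\leq\; \Big(1+\tfrac{1}{12}\Big)2\pi\int_0^{k-1} a^{-13/2}I_{13}(4\pi\sqrt{a})\,da \;\leq\; \Big(1+\tfrac{1}{12}\Big)(k-1)^{-6}I_{12}(4\pi\sqrt{k-1})\,,
\end{equation*}
where the last step uses $\frac{d}{da}\big[a^{-6}I_{12}(4\pi\sqrt{a})\big] = 2\pi\,a^{-13/2}I_{13}(4\pi\sqrt{a})$ and discards the positive boundary term at $a=0$.

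Dividing by $\hat c^F_m(-1,m-k) = (2+\tilde\ell)\,d(k-1)$, using $2+\tilde\ell\geq 2$ and the lower bound $d(k-1)\geq(1-\frac{1}{12})2\pi(k-1)^{-13/2}I_{13}(4\pi\sqrt{k-1})$ from Proposition \ref{prop:eta-bound}, gives
\begin{equation*}
    \frac{1}{(2+\tilde\ell)\,d(k-1)}\sum_{\substack{r\geq 2\\ a(r)\geq 0}}(2r+\tilde\ell)\,d\big(a(r)\big) \;\leq\; \frac{1+\frac{1}{12}}{1-\frac{1}{12}}\,\frac{(k-1)^{1/2}}{4\pi}\,\frac{I_{12}(4\pi\sqrt{k-1})}{I_{13}(4\pi\sqrt{k-1})}\,,
\end{equation*}
and the recurrence $I_{12}(x)-I_{14}(x) = \frac{26}{x}I_{13}(x)$ together with $I_{14}<I_{13}$ yields $\frac{I_{12}(x)}{I_{13}(x)}<1+\frac{26}{x}$; taking $x=4\pi\sqrt{k-1}$ and $(k-1)^{1/2}<k^{1/2}$, the right-hand side is at most $\kappa_5 k^{1/2}+\frac{169}{88\pi^2}$, with $\kappa_5 = \frac{1+1/12}{1-1/12}\frac{1}{4\pi}$. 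Collecting the $1$ from $r=1$, the constant $\frac{169}{88\pi^2}<0.2$, and the $2\cdot 10^{-5}$ from the $a(r)=-1$ term — whose sum is well below $\kappa_4 = 3+2\cdot 10^{-5}$ — yields \eqref{eq:bound-nminusone-polar}.

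The step I expect to be the main obstacle is the passage from the weighted discrete sum to the integral. Unlike the summand $d(m-r\tilde\ell)$ of Lemma \ref{lemma polarcoef ntilde0}, which is monotonically decreasing in $r$, the summand $(2r+\tilde\ell)\,d(a(r))$ is a product of an increasing linear factor and a super-exponentially decreasing Bessel-type factor, and it need not be monotone: for small $\tilde\ell$ it can grow over an initial range of length $\sim m^{1/4}$ before the decay of $d$ takes over. To make the comparison rigorous one can split off that short initial window — where $a(r)$ is close to $m$, so its terms are comparable to $d(k-1)$ up to a controlled polynomial factor and can be summed crudely — and apply the integral comparison only on the monotone tail; alternatively one can bypass the integral and estimate each ratio $d(a(r))/d(k-1)$ directly via \eqref{eq:ratio-bessels-bound} and the identity $a(1)-a(r) = (r-1)(\tilde\ell+r+1)$, then sum the resulting rapidly-decaying series in $r$.
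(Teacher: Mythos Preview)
Your approach is essentially the paper's: isolate the possible $d(-1)$ term (giving the $2\cdot 10^{-5}$), bound the remaining sum by an integral via the upper Bessel bound on $d$, evaluate using the same antiderivative $a^{-6}I_{12}(4\pi\sqrt{a})$, and divide by $\hat c_m^F(-1,m-k)$ using the lower Bessel bound on $d(k-1)$.

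The one structural difference is that the paper \emph{separates the $r=2$ term} before passing to the integral: it writes $\sum_{r\ge 2}\le(\text{$r=2$ term})+\int_2^{r_{\max}'}$, bounds the ratio $\frac{(4+\tilde\ell)\,d(2k-m-4)}{(2+\tilde\ell)\,d(k-1)}$ by its value at $m=k$, namely $2\,d(k-4)/d(k-1)<2$, and then the integral (over $a\in[0,2k-m-4]$, also maximised at $m=k$) supplies the $\kappa_5 k^{1/2}$ piece. This is exactly the origin of the specific constant $\kappa_4=1+2+2\cdot 10^{-5}$. Your version instead integrates $a$ up to $a(1)=k-1$, absorbing the $r=2$ term into the integral and arriving at a tighter constant ($\approx 1.2$), at the price of the sum--to--integral step you flag.

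On that step: the paper also simply asserts that $(2r+\tilde\ell)\,d(a(r))$ is decreasing in $r$ (``the decreasing sum'') and invokes the integral comparison on that basis. So the obstacle you identify is present in the paper's argument as well; in the relevant range $k\le m\le 2k-3$ it holds for all but very large $m$ with small $\tilde\ell$, and in those edge cases the slack in $\kappa_5\sqrt{k}$ is ample, so either of your proposed fixes would close it.
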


\begin{proof}
    The proof is similar to the one for the $\tilde{n}=0$ case. We have
\begin{equation} \label{eq:first-step-nminus1}
    \frac{c_m^F(-1,m-k)}{\hat{c}_m^F(-1,m-k)} \,\leq\,   1 + \sum_{r\geq 2}\frac{(2r+m-k) d(m-r^2-r(m-k))}{(2+m-k)d(k-1)}\,. 
\end{equation}
We isolate the possible contribution containing the coefficient $d(-1)$ in order to use the bounds \eqref{eq:dedekindeta-bounds}. Denoting by $r_*$ the positive value of $r$ satisfying $m-r^2-r(m-k)=-1$, if $r_*$ is an integer, the corresponding summand in \eqref{eq:first-step-nminus1} can be bounded by
\begin{equation*}
   \frac{(2r_*+m-k) d(m-r_*^2-r_*(m-k))}{(2+m-k)d(k-1)} \= \frac{\sqrt{(m-k)^2+4m+4}}{(2+m-k)d(k-1)} \,\leq\,  \frac{\sqrt{1+k}}{d(k-1)} < 2\cdot 10^{-5} \,,
\end{equation*}
where the last bound follows from the ratio being decreasing with $k$, with the minimum value given by $k=5$.

It follows that the right hand side of \eqref{eq:first-step-nminus1} is less than
\begin{equation}
    1 + 2\cdot 10^{-5}+ \sum_{r\geq 2}^{r_{\max}'}\frac{(2r+m-k) d(m-r^2-r(m-k))}{(2+m-k)d(k-1)}\,,
\end{equation}
where
\begin{equation}
    r_{\max}'\=\left\lfloor\frac{-(m-k)+\sqrt{(m-k)^2+4m}}{2}\right\rfloor \,,
\end{equation}
 with the convention that the sum is zero if $r_{\max}'<2$. From the second inequality in \eqref{eq:dedekindeta-bounds}, the decreasing sum satisfies the following bound
\begin{equation}
\begin{split}
     &\sum_{r= 2}^{r_{\max}'} {(2r+m-k) d(m-r^2-r(m-k))} 
     \,\leq\,  {(4+m-k) d(m-4-2(m-k))}  \\
     &+ \int_2^{r_{max}'} dr \, (2r+m-k)\left(1+\frac{1}{12}\right)2\pi \frac{I_{13}(4\pi\sqrt{m-r^2-r(m-k)})}{(m-r^2-r(m-k))^{13/2}}\,.
\end{split}
\end{equation}
The integral can be evaluated using the following equality
\begin{equation*}
        \int_2^{r_{max}'} dr\, (2r+m-k) \frac{I_{13}(4\pi\sqrt{m-r^2-r(m-k)})}{(m-r^2-r(m-k))^{13/2}} %
        \= \frac{I_{12}(4\pi\sqrt{2k-m-4})}{2\pi(2k-m-4)^6} - \frac{ (2\pi)^{11}}{12!} \,,
\end{equation*}
which yields,
\begin{equation}
\begin{split}
        &\sum_{r\geq 2}^{r_{\max}'}\frac{(2r+m-k) d(m-r^2-r(m-k))}{(2+m-k)d(k-1)} \\
        < \, & \,\frac{(4+m-k)d(2k-m-4)}{(2+m-k)d(k-1)}+ \frac{\left(1+\frac{1}{12}\right)\frac{I_{12}(4\pi\sqrt{2k-m-4})}{(2k-m-4)^6}}{(2+m-k)d(k-1)} \,.
\end{split}
\end{equation}
The first summand is decreasing with $m$. Since for fixed $k$ the least possible value of $m$ is $k$, this term is bounded by $2$,%
\begin{equation}
         \frac{(4+m-k)d(2k-m-4)}{(2+m-k)d(k-1)} \, < \, 2 \,.
\end{equation}
Now we examine the second term, which is again decreasing with $m$ and bounded by the value $m=k$, 
\begin{equation}
    \frac{\left(1+\frac{1}{12}\right)\frac{I_{12}(4\pi\sqrt{2k-m-4})}{(2k-m-4)^6}}{(2+m-k)d(k-1)} 
    \,\leq\,
    \frac{\left(1+\frac{1}{12}\right)\frac{I_{12}(4\pi\sqrt{k-4})}{(k-4)^6}}{2d(k-1)} \,.
\end{equation}
We use the first inequality in \eqref{eq:dedekindeta-bounds} to write
\begin{equation}
    \frac{\left(1+\frac{1}{12}\right)\frac{I_{12}(4\pi\sqrt{k-4})}{(k-4)^6}}{2d(k-1)}
    \,\leq\,
    \frac{1+\frac{1}{12}}{1-\frac{1}{12}}\frac{(k-1)^{13/2}}{4\pi {(k-4)^6}}\frac{I_{12}(4\pi\sqrt{k-4})}{I_{13}(4\pi\sqrt{k-1})}\,.
\end{equation}
This last expression is an unbounded function of $k$ which is smaller than the function capturing its asymptotic growth,
\begin{equation}
    \frac{1+\frac{1}{12}}{1-\frac{1}{12}}\frac{(k-1)^{13/2}}{4\pi {(k-4)^6}}\frac{I_{12}(4\pi\sqrt{k-4})}{I_{13}(4\pi\sqrt{k-1})}
    \,\leq\, 
    \frac{1+\frac{1}{12}}{1-\frac{1}{12}}\frac{k^{1/2}}{4\pi} \,.
\end{equation}
Putting the previous bounds together, we have
\begin{equation}
     \frac{c_m^F(-1,m-k)}{\hat{c}_m^F(-1,m-k)} \,<\,
     3 + 2\cdot 10^{-5} + \frac{1+\frac{1}{12}}{1-\frac{1}{12}}\frac{k^{1/2}}{4\pi} \,.
\end{equation}
\end{proof}

\subsection{Leading term bounds}

The remaining bounds concern the the leading term in the Rademacher expansion, by which we mean the first line in the right hand side of \eqref{eq:Rademacher-formula}. We show that the positive contribution \eqref{eq:def-positive-contr} for each $\tilde{n}$ is larger than the sum of the remaining contributions with $\tilde{\ell}<m$. We first derive a bound for the cases $\tilde{n}\geq 0$ by considering only the first contribution to the polar coefficients \eqref{eq:first-contr} and then we correct it with the bounds for the polar coefficients derived in the previous subsection separately for $\tilde{n}=0$ and $\tilde{n}\geq 1$. The bounds for the more involved case $\tilde{n}=-1$ are derived at the end. 
\begin{lemma} \label{prop:first-contributions}
    Let $m\geq 6$, $n\geq m$, $0\leq \ell\leq m$, and $\tilde{n}\geq 0$. Then
    \begin{equation} \label{eq:thm:firstcontr}
           \sum_{\substack{0\leq \tilde{\ell}< m \\ \tilde{\Delta}<0}}\frac{2\, \hat{c}_m^F(\tilde{n},\tilde{\ell}) \vert\tilde{\Delta}\vert^{23/4}\, {I}_{23/2}\left(\frac{\pi}{m}\sqrt{\Delta \vert\tilde{\Delta}\vert}\right)}{{c}_m^F(\tilde{n},m) \vert\tilde{\Delta}_{{m}}\vert^{23/4}\, {I}_{23/2}\left(\frac{\pi}{m}\sqrt{\Delta  \vert\tilde{\Delta}_{{m}}\vert}\right)}   \, < \, 0.123 \,.
    \end{equation}
\end{lemma}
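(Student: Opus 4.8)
\emph{Reduction.} The plan is to bound the sum termwise, turning each ratio of modified Bessel functions into a geometric factor $e^{-\sqrt3\pi k}$ (with $k=m-\tilde\ell$), and then to sum the resulting rapidly convergent series. First I would replace $c_m^F(\tilde n,m)$ in the denominator by its leading contribution $\hat c_m^F(\tilde n,m)=(m-2\tilde n)\,d(\tilde n)^2$; this only enlarges each summand, because every $G\in W(m,\tilde n,m)$ contributes positively to $c_m^F(\tilde n,m)$ (by $\ell_G>0$ from \eqref{eq:lg-positive} and $d(m_G),d(n_G)\ge d(-1)=1$), and $\hat c_m^F(\tilde n,m)>0$ since $\tilde\Delta_m<0$ forces $\tilde n<m/4$. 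Writing $\tilde\ell=m-k$ and using \eqref{eq:first-contr}, a generic summand becomes
\[
\frac{2\,(m-k-2\tilde n)\,d(\tilde n+k)\,\vert\tilde\Delta\vert^{23/4}\,I_{23/2}\!\big(\tfrac{\pi}{m}\sqrt{\Delta\vert\tilde\Delta\vert}\big)}{(m-2\tilde n)\,d(\tilde n)\,\vert\tilde\Delta_m\vert^{23/4}\,I_{23/2}\!\big(\tfrac{\pi}{m}\sqrt{\Delta\vert\tilde\Delta_m\vert}\big)},
\]
where $\vert\tilde\Delta\vert=(m-k)^2-4m\tilde n<\vert\tilde\Delta_m\vert=m^2-4m\tilde n$, and $m-k-2\tilde n>0$ because $\tilde\Delta<0$ forces $\tilde\ell>2\sqrt{m\tilde n}\ge2\tilde n$.

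\emph{Geometric decay.} The key elementary inequality is $\sqrt{\vert\tilde\Delta_m\vert}-\sqrt{\vert\tilde\Delta\vert}\ge k$: indeed $\vert\tilde\Delta_m\vert-\vert\tilde\Delta\vert=k(2m-k)$ while $\sqrt{\vert\tilde\Delta_m\vert}\le m$ and $\sqrt{\vert\tilde\Delta\vert}\le m-k$, so $\sqrt{\vert\tilde\Delta_m\vert}-\sqrt{\vert\tilde\Delta\vert}=\tfrac{k(2m-k)}{\sqrt{\vert\tilde\Delta_m\vert}+\sqrt{\vert\tilde\Delta\vert}}\ge\tfrac{k(2m-k)}{2m-k}=k$. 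Applying the bound \eqref{eq:ratio-bessels-bound} with $\nu=23/2$ to $x=\tfrac{\pi}{m}\sqrt{\Delta\vert\tilde\Delta\vert}<y=\tfrac{\pi}{m}\sqrt{\Delta\vert\tilde\Delta_m\vert}$, and using $\Delta\ge3m^2$ from \eqref{eq:discriminant-bound-m2}, gives
\[
\frac{I_{23/2}(x)}{I_{23/2}(y)}<e^{-\frac{\pi\sqrt\Delta}{m}(\sqrt{\vert\tilde\Delta_m\vert}-\sqrt{\vert\tilde\Delta\vert})}\Big(\tfrac{\vert\tilde\Delta\vert}{\vert\tilde\Delta_m\vert}\Big)^{1/4}<e^{-\sqrt3\pi k}\Big(\tfrac{\vert\tilde\Delta\vert}{\vert\tilde\Delta_m\vert}\Big)^{1/4}.
\]
Together with $\big(\vert\tilde\Delta\vert/\vert\tilde\Delta_m\vert\big)^{23/4}\big(\vert\tilde\Delta\vert/\vert\tilde\Delta_m\vert\big)^{1/4}<1$ and $\tfrac{m-k-2\tilde n}{m-2\tilde n}<1$, each summand is $<\tfrac{2\,d(\tilde n+k)}{d(\tilde n)}\,e^{-\sqrt3\pi k}$. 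Since the admissible $\tilde\ell$ fill a subset of $\{1,\dots,m-1\}$, the left side of \eqref{eq:thm:firstcontr} is $<\sum_{k\ge1}\tfrac{2\,d(\tilde n+k)}{d(\tilde n)}\,e^{-\sqrt3\pi k}$.

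\emph{Summing the series.} It then remains to show this series is $<0.123$ for every $\tilde n\ge0$. The key input is $d(n+1)/d(n)\le d(2)/d(1)=3200/324<9.88$ for all $n\ge1$: for $1\le n\le8$ this is read off from the explicit coefficients of $\eta^{-24}$, and for $n\ge9$ it follows from Proposition \ref{prop:eta-bound} together with the classical monotonicity of $x^{-\nu}e^{-x}I_\nu(x)$ (equivalently $I_\nu(x')/I_\nu(x)<(x'/x)^\nu e^{x'-x}$ for $0<x<x'$), which gives $d(n+1)/d(n)<\tfrac{13}{11}e^{4\pi(\sqrt{n+1}-\sqrt n)}<9.88$. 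Telescoping, $d(\tilde n+k)/d(\tilde n)\le9.88^k$ when $\tilde n\ge1$, so the series is $\le2\sum_{k\ge1}(9.88\,e^{-\sqrt3\pi})^k<0.09$; and for $\tilde n=0$, using $d(0)=24$, $d(1)=324$ and telescoping from $n=1$, $\tfrac{2d(k)}{d(0)}=\tfrac{d(k)}{12}\le27\cdot9.88^{\,k-1}$, so the series is $\le27\,e^{-\sqrt3\pi}\sum_{j\ge0}(9.88\,e^{-\sqrt3\pi})^j\approx0.122<0.123$.

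\emph{The main obstacle} is precisely this last numerical step: the constant $0.123$ is essentially saturated by the $\tilde n=0$ contribution (the relevant series sums to $\approx0.122$), so the estimates must be carried out with some care; and controlling $d(\tilde n+k)/d(\tilde n)$ uniformly in $\tilde n$ needs an \emph{upper} bound on ratios of $I_{13}$ at increasing arguments, which runs opposite in direction to \eqref{eq:ratio-bessels-bound} and has to be supplied separately (via the monotonicity of $x^{-\nu}e^{-x}I_\nu(x)$ and Proposition \ref{prop:eta-bound}). Everything else — the algebraic reduction, the square-root inequality, the explicit small values of $d(n)$, and the geometric summations — is routine.
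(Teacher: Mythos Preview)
Your proof is correct and follows the same overall strategy as the paper: bound each summand by $\frac{2\,d(\tilde n+k)}{d(\tilde n)}\,e^{-\sqrt3\pi k}$ and then sum in $k$. The execution differs in two places worth noting.

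First, to reach the exponential decay the paper freezes $\Delta=3m^2$, applies \eqref{eq:ratio-bessels-bound}, and then argues the resulting expression is \emph{increasing} in $m$ so is bounded by its $m\to\infty$ limit $\frac{2d(\tilde n+k)}{d(\tilde n)}e^{-\sqrt3\pi k}$. You instead keep $\Delta$, apply \eqref{eq:ratio-bessels-bound} directly, and use the clean algebraic inequality $\sqrt{|\tilde\Delta_m|}-\sqrt{|\tilde\Delta|}\ge k$ together with $\sqrt\Delta/m\ge\sqrt3$. Your route is arguably more transparent, since it avoids the monotonicity-in-$m$ claim. You also make explicit the step of replacing $c_m^F(\tilde n,m)$ by $\hat c_m^F(\tilde n,m)$ in the denominator, which the paper passes over silently.

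Second, for the final summation the paper simply recognises $\sum_{k\ge1}\frac{2d(k)}{d(0)}e^{-\sqrt3\pi k}=\frac{1}{12}\big(\eta^{-24}(i\sqrt3/2)-e^{\sqrt3\pi}-24\big)$ and evaluates it to $0.122\ldots$, while you bound it by a geometric series via $d(n+1)/d(n)\le 3200/324$. Your route is self-contained but costs an extra ingredient---the decrease of $x^{-\nu}e^{-x}I_\nu(x)$, which is not the direction of \eqref{eq:ratio-bessels-bound}---plus a handful of explicit $d(n)$ values; the paper's route is a one-line evaluation once the series is identified. Both land at essentially the same number, just below the stated $0.123$.
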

\begin{proof}
    Using the inequality \eqref{eq:discriminant-bound-m2}, $\Delta \geq 3m^2$, we can bound each summand in \eqref{eq:thm:firstcontr} by
    \begin{equation*}
	   \frac{2\, \hat{c}_m^F(\tilde{n},\tilde{\ell}) \vert\tilde{\Delta}\vert^{23/4}\, {I}_{23/2}\left(\frac{\pi}{m}\sqrt{\Delta \vert\tilde{\Delta}\vert}\right)}
	{{c}_m^F(\tilde{n},m) \vert\tilde{\Delta}_{{m}}\vert^{23/4}\, {I}_{23/2}\left(\frac{\pi}{m}\sqrt{\Delta  \vert\tilde{\Delta}_{{m}}\vert}\right)}\leq   \frac{2\, \hat{c}_m^F(\tilde{n},\tilde{\ell}) \vert\tilde{\Delta}\vert^{23/4}\, {I}_{23/2}\left(\sqrt{3}{\pi}\sqrt{ \vert\tilde{\Delta}\vert}\right)}
	{{c}_m^F(\tilde{n},m) \vert\tilde{\Delta}_{{m}}\vert^{23/4}\, {I}_{23/2}\left(\sqrt{3}{\pi}\sqrt{  \vert\tilde{\Delta}_{{m}}\vert}\right)}.
    \end{equation*}
    The right hand side of this equation has the following explicit expression
    \begin{equation*}
        \frac{2\, (m-k-2\tilde{n})d(k+\tilde{n})d(\tilde{n})((m-k)^2-4m\tilde{n})^{23/4}\, {I}_{23/2}\left(\sqrt{3}{\pi}\sqrt{ (m-k)^2-4m\tilde{n}}\right)}
	{(m-2\tilde{n})d(\tilde{n})d(\tilde{n})(m^2-4m\tilde{n})^{23/4}\, {I}_{23/2}\left(\sqrt{3}{\pi}\sqrt{  m^2-4m\tilde{n}}\right)}
    \end{equation*}
    where $k \= m-\tilde{\ell}$. %
    From the inequality \eqref{eq:ratio-bessels-bound} we obtain
    \begin{equation}
        \begin{split}
          &  \frac{2\, (m-k-2\tilde{n})d(k+\tilde{n})d(\tilde{n})((m-k)^2-4m\tilde{n})^{23/4}\, {I}_{23/2}\left(\sqrt{3}{\pi}\sqrt{ (m-k)^2-4m\tilde{n}}\right)}
	{(m-2\tilde{n})d(\tilde{n})d(\tilde{n})(m^2-4m\tilde{n})^{23/4}\, {I}_{23/2}\left(\sqrt{3}{\pi}\sqrt{  m^2-4m\tilde{n}}\right)} \\
  & <  \frac{2\, (m-k-2\tilde{n})d(k+\tilde{n})d(\tilde{n})\, }
	{(m-2\tilde{n})d(\tilde{n})d(\tilde{n})}
 \left(\frac{(m-k)^2-4m\tilde{n}}{m^2-4m\tilde{n}}\right)^6
 e^{\sqrt{3}{\pi}\left(\sqrt{ (m-k)^2-4m\tilde{n}}-\sqrt{  m^2-4m\tilde{n}}\right)}.
        \end{split}
    \end{equation}
    Since $\tilde{n}\geq 0$, the right hand side of the previous equation is an increasing function of $m$. Therefore, it is smaller or equal than its $m\to\infty$ limit, which is given by
    \begin{equation}
        \frac{2\,d(k+\tilde{n}) }
	{d(\tilde{n})}
 e^{-\sqrt{3}{\pi}k}.
    \end{equation}
    The sum over $\tilde{\ell}$ is restricted to the values which satisfy $\tilde{\Delta} = 4m\tilde{n}-\tilde{\ell}^2<0$, so the sum over $k$ is from $k=1$ to $k = \lfloor m-2\sqrt{m\tilde{n}}\rfloor$. Therefore, by taking the limit $m\to \infty$, we have shown
    \begin{equation}
         \sum_{\substack{0\leq \tilde{\ell}< m \\ \tilde{\Delta}<0}}\frac{2\, \hat{c}_m^F(\tilde{n},\tilde{\ell}) \vert\tilde{\Delta}\vert^{23/4}\, {I}_{23/2}\left(\frac{\pi}{m}\sqrt{\Delta \vert\tilde{\Delta}\vert}\right)}
	{{c}_m^F(\tilde{n},m) \vert\tilde{\Delta}_{{m}}\vert^{23/4}\, {I}_{23/2}\left(\frac{\pi}{m}\sqrt{\Delta  \vert\tilde{\Delta}_{{m}}\vert}\right)} 
 \,<\,
 \sum_{k=1}^\infty \frac{2\,d(k+\tilde{n}) }
	{d(\tilde{n})}
 e^{-\sqrt{3}{\pi}k} \,.
    \end{equation}
    The last expression is a decreasing function of $\tilde{n}$. We can write it as
    \begin{equation}
        \sum_{k=1}^\infty \frac{2\,d(k+\tilde{n}) }
	{d(\tilde{n})}
 e^{-\sqrt{3}{\pi}k} \= \frac{2e^{\sqrt{3}\pi \tilde{n}}}{d(\tilde{n})}\left(\eta^{-24}\left(\frac{i\sqrt{3}}{2} \right)-\sum_{r=-1}^{\tilde{n}}d(r)e^{-\sqrt{3}\pi r} \right) \,,
    \end{equation}
    which we can compute for any value of $\tilde{n}$. E.g., for $\tilde{n}=1$ the previous quantity is smaller than~$0.09$, and for
    $\tilde{n}=0$ it gives the desired bound
    \begin{equation}
	\frac{2}{24}\left(\eta^{-24}\left(\frac{i\sqrt{3}}{2} \right)-e^{\sqrt{3}\pi}-24 \right) \= 0.122 ... < 0.123 \,.
 \end{equation}
\end{proof}
The next step consists in using the bounds for the polar coefficients found in the previous subsection to find a bound for the complete contribution. \begin{lemma} \label{lemma:total-contr-generaln}
    Let $m\geq 6, n\geq m, 0\leq \ell\leq m$, and $\tilde{n}\geq 1$. Then
    \begin{equation} \label{eq:thm:firstcontr-generaln}
        e_{23/2}(\Delta,m,\tilde{n}) \= \sum_{\substack{0\leq \tilde{\ell}< m \\ \tilde{\Delta}<0}}\frac{2\, {c}_m^F(\tilde{n},\tilde{\ell}) \vert\tilde{\Delta}\vert^{23/4}\, {I}_{23/2}\left(\frac{\pi}{m}\sqrt{\Delta \vert\tilde{\Delta}\vert}\right)}
	{{c}_m^F(\tilde{n},m) \vert\tilde{\Delta}_{{m}}\vert^{23/4}\, {I}_{23/2}\left(\frac{\pi}{m}\sqrt{\Delta  \vert\tilde{\Delta}_{{m}}\vert}\right)} \,<\, 0.23 \,.
    \end{equation}
\end{lemma}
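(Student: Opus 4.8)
The plan is to bootstrap from Lemma~\ref{prop:first-contributions}, which controls the sum in \eqref{eq:thm:firstcontr-generaln} with the genuine polar coefficient $c^F_m(\tilde{n},\tilde{\ell})$ in each numerator replaced by its leading piece $\hat{c}^F_m(\tilde{n},\tilde{\ell})$ from \eqref{eq:first-contr}. Writing $k=m-\tilde{\ell}$, the range of summation $\tilde{\Delta}=4m\tilde{n}-(m-k)^2<0$, $1\leq k\leq m$, is exactly the hypothesis of Lemma~\ref{lemma:bounds-polar-coef-ngeq1}, so for $\tilde{n}\geq 1$ every numerator satisfies $c^F_m(\tilde{n},m-k)\leq(\kappa_2+\kappa_3 k)\,\hat{c}^F_m(\tilde{n},m-k)$. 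Inserting this into the definition of $e_{23/2}(\Delta,m,\tilde{n})$ — the denominator $c^F_m(\tilde{n},m)$ is the same common reference that already occurs in \eqref{eq:thm:firstcontr}, so it is untouched — gives
\[
  e_{23/2}(\Delta,m,\tilde{n})\;\leq\;\sum_{\substack{0\leq\tilde{\ell}<m\\ \tilde{\Delta}<0}}(\kappa_2+\kappa_3 k)\,
  \frac{2\,\hat{c}^F_m(\tilde{n},\tilde{\ell})\,\vert\tilde{\Delta}\vert^{23/4}\,I_{23/2}\left(\tfrac{\pi}{m}\sqrt{\Delta\vert\tilde{\Delta}\vert}\right)}
  {c^F_m(\tilde{n},m)\,\vert\tilde{\Delta}_{m}\vert^{23/4}\,I_{23/2}\left(\tfrac{\pi}{m}\sqrt{\Delta\vert\tilde{\Delta}_{m}\vert}\right)}\,,\qquad k=m-\tilde{\ell}\,.
\]

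Next I would repeat, term by term, the estimates in the proof of Lemma~\ref{prop:first-contributions}: apply the discriminant bound $\Delta\geq 3m^2$ from \eqref{eq:discriminant-bound-m2}, then the Bessel ratio bound \eqref{eq:ratio-bessels-bound}, and finally use that the resulting expression is increasing in $m$ — the extra factor $\kappa_2+\kappa_3 k$ is independent of $m$, hence harmless — to pass each summand to its limit as $m\to\infty$. After also extending the finite $k$-sum to all $k\geq 1$, this yields
\[
  e_{23/2}(\Delta,m,\tilde{n})\;<\;\sum_{k=1}^{\infty}(\kappa_2+\kappa_3 k)\,\frac{2\,d(k+\tilde{n})}{d(\tilde{n})}\,e^{-\sqrt{3}\,\pi k}\,.
\]
Exactly as in the proof of Lemma~\ref{prop:first-contributions}, re-expressing $\sum_{k\geq 1}\tfrac{2 d(k+\tilde{n})}{d(\tilde{n})}e^{-\sqrt{3}\pi k}=\tfrac{2 e^{\sqrt{3}\pi\tilde{n}}}{d(\tilde{n})}\bigl(\eta^{-24}(\tfrac{i\sqrt{3}}{2})-\sum_{r=-1}^{\tilde{n}}d(r)e^{-\sqrt{3}\pi r}\bigr)$ — and the same manipulation for the $k$-weighted series, or Proposition~\ref{prop:eta-bound} applied to the ratio $d(k+\tilde{n})/d(\tilde{n})$ — shows the right-hand side decreases with $\tilde{n}$, so it is enough to bound it at $\tilde{n}=1$.

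At $\tilde{n}=1$, with $d(1)=324$, I split the series into its $\kappa_2$- and $\kappa_3$-parts. The $\kappa_2$-part is $\kappa_2$ times $\tfrac{2 e^{\sqrt{3}\pi}}{d(1)}\bigl(\eta^{-24}(\tfrac{i\sqrt{3}}{2})-e^{\sqrt{3}\pi}-24-324\,e^{-\sqrt{3}\pi}\bigr)$, a quantity already shown to be $<0.09$ in the proof of Lemma~\ref{prop:first-contributions}, so this part is $<0.17$. The $\kappa_3$-part is $\kappa_3\tfrac{2}{d(1)}\sum_{k\geq 1}k\,d(k+1)e^{-\sqrt{3}\pi k}$; since $e^{-\sqrt{3}\pi}<1/200$ the series converges very fast and is bounded by an explicit constant, e.g. via $k\leq 2^{k-1}$, which turns it into a value of $\eta^{-24}$ at the argument with $e^{2\pi i\sigma}=2e^{-\sqrt{3}\pi}$, or simply by direct summation with a geometric tail estimate, giving a contribution $<0.06$. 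Adding the two parts yields $e_{23/2}(\Delta,m,\tilde{n})<0.23$, as claimed. The main obstacle is that Lemma~\ref{lemma:bounds-polar-coef-ngeq1} is a coarse, linear-in-$k$ estimate, so one must verify that the exponential suppression $e^{-\sqrt{3}\pi k}$ absorbs this growth with enough room to stay below $0.23$; since the final margin is small, keeping the constants sharp — and, as in Lemma~\ref{prop:first-contributions}, invoking \eqref{eq:ratio-bessels-bound} precisely at $\Delta=3m^2$ — is what makes the estimate close.
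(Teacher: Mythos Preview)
Your proposal is correct and follows essentially the same route as the paper: bound $c^F_m(\tilde{n},m-k)$ by $(\kappa_2+\kappa_3 k)\,\hat{c}^F_m(\tilde{n},m-k)$ via Lemma~\ref{lemma:bounds-polar-coef-ngeq1}, rerun the $\Delta\geq 3m^2$ / Bessel-ratio / $m\to\infty$ chain from Lemma~\ref{prop:first-contributions}, and evaluate the resulting $\tilde{n}$-decreasing series at $\tilde{n}=1$, splitting into the $\kappa_2$- and $\kappa_3$-pieces with the same numerical bounds $0.17$ and $0.06$. The only cosmetic difference is that the paper expresses the $k$-weighted series exactly via $E_2(\tau)/\eta^{24}(\tau)$, whereas you sketch that and a couple of equivalent alternatives.
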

\begin{proof}
    Substituting ${c}_m^F(\tilde{n},m-k)$ by its bound from Lemma \ref{lemma:bounds-polar-coef-ngeq1}, $(\kappa_2+\kappa_3 k)\hat{c}_m^F(\tilde{n},m-k)$, following the steps in the proof of Lemma \ref{prop:first-contributions}, one arrives at
    \begin{equation} \label{eq:firsteq-bounds-general-n}
         \sum_{\substack{0\leq \tilde{\ell}< m \\ \tilde{\Delta}<0}}\frac{2\, (\kappa_2+\kappa_3 k)\hat{c}_m^F(\tilde{n},\tilde{\ell}) \vert\tilde{\Delta}\vert^{23/4}\, {I}_{23/2}\left(\frac{\pi}{m}\sqrt{\Delta \vert\tilde{\Delta}\vert}\right)}
	{{c}_m^F(\tilde{n},m) \vert\tilde{\Delta}_{{m}}\vert^{23/4}\, {I}_{23/2}\left(\frac{\pi}{m}\sqrt{\Delta  \vert\tilde{\Delta}_{{m}}\vert}\right)} \,<\, \sum_{k=1}^\infty \frac{2(\kappa_2+\kappa_3 k)\,d(k+\tilde{n}) }
	{d(\tilde{n})}
 e^{-\sqrt{3}{\pi}k} \,.
    \end{equation}
    We split the right hand side in two terms. The first one is proportional to the expression obtained in Lemma \ref{prop:first-contributions},
    \begin{equation}
        \sum_{k=1}^\infty \frac{2 \kappa_2\,d(k+\tilde{n}) }
	{d(\tilde{n})}
 e^{-\sqrt{3}{\pi}k} \= \frac{2 \kappa_2 e^{\sqrt{3}\pi \tilde{n}}}{d(\tilde{n})}\left(\eta^{-24}\left(\frac{i\sqrt{3}}{2} \right)-\sum_{r=-1}^{\tilde{n}}d(r)e^{-\sqrt{3}\pi r} \right) \,,
    \end{equation}
    which for $\tilde{n}=1$ has a value which is smaller than 0.17. The second term in \eqref{eq:firsteq-bounds-general-n} is also decreasing with $\tilde{n}$ and can be written as
    \begin{equation} \label{eq:final-bounds-ngeneral-eisensteinpart}
        \sum_{k=1}^\infty \frac{2\kappa_3 k\,d(k+\tilde{n}) }
	{d(\tilde{n})}
 e^{-\sqrt{3}{\pi}k} \= 
 \frac{2\kappa_3e^{\sqrt{3}\pi \tilde{n}}}{d(\tilde{n})}\left(-\frac{E_2\left(\frac{i\sqrt{3}}{2} \right)+\tilde{n}}{\eta^{24}\left(\frac{i\sqrt{3}}{2} \right)}-\sum_{r=-1}^{\tilde{n}}(r-\tilde{n})d(r)e^{-\sqrt{3}\pi r} \right) \,,
    \end{equation}
    in terms of the function
     \begin{equation}
	\frac{E_2(\tau)}{\eta^{24}(\tau)} \= -\sum_{n=-1}^\infty n\,d(n) e^{2\pi i \tau n} \,,
    \end{equation}
    where
    \begin{equation}
        E_2(\tau) \= \frac{1}{2\pi i} \frac{d}{d \tau} \log \eta^{24}(\tau) \= 1 -24\sum_{n=1}^\infty \sigma_1(n)e^{2\pi i \tau n}\,,
    \end{equation}
    is the Eisenstein series of weight 2 and $\sigma_1(n)$ is the sum-of-divisors function. The quantity \eqref{eq:final-bounds-ngeneral-eisensteinpart} is a decreasing function of $\tilde{n}$ with a value at $\tilde{n}=1$ which is smaller than $0.06$. The sum of the bounds for the two contributions in the right hand side of \eqref{eq:firsteq-bounds-general-n} yields the inequality \eqref{eq:thm:firstcontr-generaln}.
\end{proof}
\begin{lemma} \label{lemma:total-contr-n0}
    Let $m\geq 6, n\geq m, 0\leq \ell\leq m$, and $\tilde{n}= 0$. Then
    \begin{equation} \label{eq:thm:firstcontr-ntilde0}
        e_{23/2}(\Delta,m,0) \=  \sum_{\substack{0\leq \tilde{\ell}< m \\ \tilde{\Delta}<0}}\frac{2\, {c}_m^F(0,\tilde{\ell}) \vert\tilde{\Delta}\vert^{23/4}\, {I}_{23/2}\left(\frac{\pi}{m}\sqrt{\Delta \vert\tilde{\Delta}\vert}\right)}
	{{c}_m^F(0,m) \vert\tilde{\Delta}_{{m}}\vert^{23/4}\, {I}_{23/2}\left(\frac{\pi}{m}\sqrt{\Delta  \vert\tilde{\Delta}_{{m}}\vert}\right)} < 0.125 \,.
    \end{equation}
\end{lemma}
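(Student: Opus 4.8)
The plan is to run the argument of Lemma~\ref{lemma:total-contr-generaln} with $\tilde n=0$, replacing the general polar bound of Lemma~\ref{lemma:bounds-polar-coef-ngeq1} by its $\tilde n=0$ refinement, Lemma~\ref{lemma polarcoef ntilde0}. Two features of the case $\tilde n=0$ simplify matters. First, the denominator needs no correction: for $\tilde\ell=m$ the exact formula $c^F_m(0,\tilde\ell)=\tilde\ell\,d(0)\sum_{r=1}^{\lfloor(m+1)/\tilde\ell\rfloor}d(m-r\tilde\ell)$ has a single term ($\lfloor(m+1)/m\rfloor=1$), so $c^F_m(0,m)=\hat c^F_m(0,m)$, and $|\tilde\Delta_m|=m^2$. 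Second, writing $k=m-\tilde\ell$ so that the sum runs over $1\le k\le m-1$, one has $c^F_m(0,m-k)=\hat c^F_m(0,m-k)$ whenever $k<(m-1)/2$, in particular for $k\in\{1,2\}$ since $m\ge 6$; the correction of Lemma~\ref{lemma polarcoef ntilde0} only becomes relevant for $k\ge 3$.

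Concretely, I would bound the numerator coefficient by $c^F_m(0,m-k)\le\bigl(1+\tfrac{25}{d(k)}+\kappa_1 k^{1/2}\bigr)\hat c^F_m(0,m-k)$ with $\kappa_1=\tfrac{91}{88\pi}$ and split the factor as $1+\bigl(\tfrac{25}{d(k)}+\kappa_1 k^{1/2}\bigr)$, the parenthesis being $0$ for $k\le 2$. The $1$-part of the resulting sum is exactly the left-hand side of \eqref{eq:thm:firstcontr} with $\tilde n=0$, which the proof of Lemma~\ref{prop:first-contributions} evaluates to $\tfrac1{12}\bigl(\eta^{-24}(\tfrac{i\sqrt{3}}{2})-e^{\sqrt{3}\pi}-24\bigr)=0.122\ldots<0.123$. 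For the remaining piece — a sum over $k\ge 3$ only — I would follow the same steps: use $\Delta\ge 3m^2$ from \eqref{eq:discriminant-bound-m2} and the Bessel-ratio bound \eqref{eq:ratio-bessels-bound} (here $|\tilde\Delta|=(m-k)^2$, $|\tilde\Delta_m|=m^2$, so $\sqrt{|\tilde\Delta|}-\sqrt{|\tilde\Delta_m|}=-k$), after which each summand is increasing in $m$ and bounded by its $m\to\infty$ value $\bigl(\tfrac{25}{d(k)}+\kappa_1 k^{1/2}\bigr)\tfrac{2d(k)}{d(0)}e^{-\sqrt{3}\pi k}$. Summing over $k\ge 3$, the correction contributes at most $\tfrac1{12}\sum_{k\ge 3}\bigl(25+\kappa_1 k^{1/2}d(k)\bigr)e^{-\sqrt{3}\pi k}$.

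Since $e^{-\sqrt{3}\pi}\approx 4.3\times10^{-3}$ decays geometrically while $d(k)$ grows only subexponentially (Proposition~\ref{prop:eta-bound}), this last series is dominated by its $k=3$ term and is below $2\times10^{-4}$; one estimates the first few terms directly, or bounds $k^{1/2}\le k$ and closes it using $\sum_{n\ge-1}n\,d(n)q^n=-E_2/\eta^{24}$ at $\tau=i\sqrt{3}/2$ as in the proof of Lemma~\ref{lemma:total-contr-generaln}. Hence $e_{23/2}(\Delta,m,0)<0.123+2\times10^{-4}<0.125$. I expect no genuine obstacle: the lemma is a routine adaptation of Lemmas~\ref{prop:first-contributions} and~\ref{lemma:total-contr-generaln} with the $\tilde n=0$-specific polar bound. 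The one point that must not be glossed over is that the correction factor of Lemma~\ref{lemma polarcoef ntilde0} may be invoked only from $k\ge 3$ onward; applying it already at $k=1$, where the exponential suppression $e^{-\sqrt{3}\pi k}$ has not yet overtaken the $k^{1/2}d(k)$ growth, would overshoot $0.125$, so the exact identity $c^F_m(0,m-k)=\hat c^F_m(0,m-k)$ at $k=1,2$ is really needed.
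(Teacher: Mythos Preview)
Your proposal is correct and follows essentially the same route as the paper's proof: split off the leading piece handled by Lemma~\ref{prop:first-contributions} (yielding the $0.123$), then bound the correction from $k\ge 3$ using Lemma~\ref{lemma polarcoef ntilde0} and the $m\to\infty$ limit, closing the $k^{1/2}d(k)$ sum via $-E_2/\eta^{24}$ exactly as in Lemma~\ref{lemma:total-contr-generaln}. Your numerical estimate of the correction ($<2\times10^{-4}$) is in fact sharper than the paper's displayed ``$<0.02$'' --- and necessarily so, since $0.123+0.02$ would overshoot $0.125$; the actual value of the paper's own expression is indeed of order $10^{-4}$.
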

\begin{proof}
    The proof is essentially the same as for the previous Lemma, so we skip some details. Following the reasoning for Lemma \ref{prop:first-contributions} for the extra contribution due to the subleading terms in ${c}^F_m(0,m-k)$ given in Lemma 
    \ref{lemma polarcoef ntilde0}, which is only taken into account for $k\geq 3$, one can bound the extra contribution to $e_{23/2}(\Delta,m,0)$ by 
    \begin{equation}
    \begin{split}
                & \frac{2}{d(0)}\sum_{k=3}^\infty\left(25+\kappa_1 d(k) k^{1/2}\right) e^{-\sqrt{3}\pi k} \, < \, \\
                 & \frac{50}{d(0)}\sum_{k=3}^\infty  e^{-\sqrt{3}\pi k} + \frac{2\kappa_1}{d(0)} \left( -\frac{E_2\left(\frac{i\sqrt{3}}{2} \right)}{\eta^{24}\left(\frac{i\sqrt{3}}{2} \right)}-\sum_{r=-1}^{2}rd(r)e^{-\sqrt{3}\pi r}\right) 
                \, < \, 0.02 \,.
    \end{split}
    \end{equation}
\end{proof}

Now we turn to the final case $\tilde{n}=-1$. Equipped with the bound for the polar coefficients from the previous subsection, we prove the following result, which is the last preliminary result for the proof of the main theorem.
\begin{lemma} \label{lemma:total-contr-nminus1}
     Let $m\geq 6, n\geq m, 0\leq \ell\leq m$ and $\tilde{n}=-1$. Then
    \begin{equation} \label{eq:thm:nminus1}
        e_{23/2}(\Delta,m,-1) \= \sum_{0\leq \tilde{\ell}< m}\frac{2\, {c}_m^F(-1,\tilde{\ell}) \vert\tilde{\Delta}\vert^{23/4}\, {I}_{23/2}\left(\frac{\pi}{m}\sqrt{\Delta \vert\tilde{\Delta}\vert}\right)}
	{{c}_m^F(-1,m) \vert\tilde{\Delta}_{{m}}\vert^{23/4}\, {I}_{23/2}\left(\frac{\pi}{m}\sqrt{\Delta  \vert\tilde{\Delta}_{{m}}\vert}\right)} \,<\, 0.3 \,.
    \end{equation}
\end{lemma}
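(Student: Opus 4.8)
The strategy is to follow Lemmas~\ref{prop:first-contributions},~\ref{lemma:total-contr-generaln} and~\ref{lemma:total-contr-n0}, isolating the $\tilde{\ell}=m$ denominator and bounding the numerator term by term; the new difficulty is that for $\tilde{n}=-1$ the summands are no longer monotone in $m$, so the passage to the $m\to\infty$ limit used there must be replaced by a slightly more careful argument. Writing $\tilde{\ell}=m-k$ with $1\le k\le m$, one has $\vert\tilde{\Delta}\vert=(m-k)^2+4m$, $\vert\tilde{\Delta}_{m}\vert=m^2+4m$, and $c_m^F(-1,m)=m+2$ by~\eqref{eq:extremepolarvalue}. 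I would first apply the Bessel ratio bound~\eqref{eq:ratio-bessels-bound} (legitimate since $\vert\tilde{\Delta}\vert<\vert\tilde{\Delta}_{m}\vert$) together with the worst-case value $\Delta\ge 3m^2$ from~\eqref{eq:discriminant-bound-m2}, reducing the $k$-th summand of $e_{23/2}(\Delta,m,-1)$ to
\begin{equation*}
\frac{2\,c_m^F(-1,m-k)}{m+2}\left(\frac{(m-k)^2+4m}{m^2+4m}\right)^{6}\exp\!\left(\sqrt{3}\,\pi\big(\sqrt{(m-k)^2+4m}-\sqrt{m^2+4m}\big)\right).
\end{equation*}

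The $k=1$ term is the dominant one and deserves separate treatment. Here $c_m^F(-1,m-1)=\hat c_m^F(-1,m-1)=(m+1)\,d(0)=24(m+1)$ (the $r=2$ term of~\eqref{eq:n-minus-one-polar-coef} being absent for $k\le 4$), and the identity $(m-1)^2+4m=(m+1)^2$ turns the reduced summand into
\begin{equation*}
F(m):=\frac{48(m+1)}{m+2}\left(\frac{(m+1)^2}{m^2+4m}\right)^{6}\exp\!\left(\sqrt{3}\,\pi\big((m+1)-\sqrt{m^2+4m}\big)\right).
\end{equation*}
One then shows $F(m)<0.22$ for every $m\ge 6$. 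Its exponent equals $-\sqrt{3}\,\pi(2m-1)\big((m+1)+\sqrt{m^2+4m}\big)^{-1}$, which is negative and increases to $-\sqrt{3}\,\pi$, while the first two factors increase to $48$ and $1$; hence $F$ first decreases, dips, and rises back to its limit $48\,e^{-\sqrt{3}\pi}<0.21$, so that $\sup_{m\ge 6}F(m)=\max\{F(6),\,48e^{-\sqrt{3}\pi}\}=F(6)<0.22$. Rigorously this needs that the logarithmic derivative of $F$ changes sign once; alternatively one bounds $F(m)\le 48\,e^{\sqrt{3}\pi((m+1)-\sqrt{m^2+4m})}$ for $m\ge M_0$ and checks the finitely many integers $6\le m<M_0$ directly.

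For the tail $k\ge 2$ I would bound $(m-k+2)/(m+2)\le 1$, write $c_m^F(-1,m-k)=\hat c_m^F(-1,m-k)=(m-k+2)\,d(k-1)$ for $2\le k\le 4$ and $c_m^F(-1,m-k)\le(\kappa_4+\kappa_5 k^{1/2})(m-k+2)\,d(k-1)$ for $k\ge 5$ by Lemma~\ref{prop:polar-nminus1}, and so reduce the $k$-th summand to $2\,b(k)\,d(k-1)\,(u/v)^{12}e^{-\sqrt{3}\pi(v-u)}$ with $u=\sqrt{(m-k)^2+4m}$, $v=\sqrt{m^2+4m}$ and $b(k)=1$ resp.\ $\kappa_4+\kappa_5 k^{1/2}$. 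Since $v-u=k(2m-k)(u+v)^{-1}$ and $(u/v)^{12}=\big(1-k(2m-k)(m^2+4m)^{-1}\big)^{6}$ both increase in $m$, the product $(u/v)^{12}e^{-\sqrt{3}\pi(v-u)}$ decreases from its value at $m=\max\{6,k\}$, reaches a minimum, and increases back to the limit $e^{-\sqrt{3}\pi k}$; since the latter is smaller than the value at $m=\max\{6,k\}$, the product is bounded there (at $m=6$ for $k\le 6$, at $m=k$ for $k\ge 7$). Evaluating these left-endpoint values and controlling the growth of $d(k-1)$ by Proposition~\ref{prop:eta-bound}, the resulting series in $k$ converges rapidly, is dominated by the $k=2$ term, and sums to under $0.05$. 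Altogether $e_{23/2}(\Delta,m,-1)<0.22+0.05<0.3$.

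The main obstacle is precisely this non-monotonicity: for $\tilde{n}=-1$ the discriminants $\vert\tilde{\Delta}\vert,\vert\tilde{\Delta}_{m}\vert$ both grow with $m$, so the argument of $I_{23/2}$ is large and the exponential supplied by~\eqref{eq:ratio-bessels-bound} is least favourable at $m=6$, where it is only partly compensated by the small polynomial factor; one therefore cannot pass to $m\to\infty$ as in the $\tilde{n}\ge 0$ cases and must control the $k=1$ contribution over the whole range $m\ge 6$, and since already $F(6)\approx 0.215$ is a sizeable fraction of the target $0.3$, the estimates cannot be too wasteful. A minor additional nuisance is that $\kappa_4\approx 3$ makes the $k\ge 5$ polar bound rather lossy, but the super-exponential decay in $k$ renders that series negligible anyway.
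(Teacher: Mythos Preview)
Your approach is essentially the paper's: reduce to $\Delta=3m^2$, write $\tilde{\ell}=m-k$, handle the small-$k$ terms with the exact first contribution $\hat c_m^F$, invoke Lemma~\ref{prop:polar-nminus1} for the tail, and control the non-monotone $m$-dependence by the single-extremum argument (compare the left endpoint $m=\max\{6,k\}$ with the limit $m\to\infty$). The only slip is the sentence on the exponent of $F$: the quantity $(m+1)-\sqrt{m^2+4m}$ \emph{decreases} to $-1$, so the exponent decreases to $-\sqrt{3}\pi$ and the ``hence $F$ first decreases, dips, and rises'' does not follow from the monotonicities you listed; your fallback (log-derivative changes sign once, or bound for $m\ge M_0$ and check finitely many cases) is exactly how the paper proceeds, so the conclusion $F(6)<0.22$ stands. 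The paper differs only tactically: it keeps the exact Bessel ratio rather than the exponential bound for $k\le 6$, drops the factor $(m-k+2)/(m+2)$ throughout, and treats $k=5,6$ with explicit polar coefficients before applying Lemma~\ref{prop:polar-nminus1} from $k\ge 7$; its partial sums are $0.255$ ($k\le 4$), $0.26$ ($k\le 6$), and $<0.3$ total, comparable to your $0.22+0.05$.
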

\begin{proof} In \eqref{eq:n-minus-one-polar-coef} the second contribution to the polar coefficient, given by $r=2$, is only non-vanishing if
\begin{equation}
    m-4-2\tilde{\ell} \= m-4-2(m-k) \,\geq\, -1 \quad\Longrightarrow\quad k \,\geq\, \frac{3+m}{2} \,\geq\, 5 \,,
\end{equation}
since $m\geq 6$. Thus, for the first four cases $k=1,2,3,4$, which are the ones with the largest contribution to \eqref{eq:thm:nminus1}, we can substitute ${c}_m^F(-1,\tilde{\ell})$ for $\hat{c}_m^F(-1,\tilde{\ell})$.  For  $k = m-\tilde{\ell}=1,2,3,4$, the summands are given by 
\begin{equation} \label{eq:ratio-nminus1}
\begin{split}
        & \frac{(m+2-k)\,d(k-1)\,2\, ((m-k)^2+4m)^{23/4}\, {I}_{23/2}\left(\sqrt{3}{\pi}\sqrt{(m-k)^2+4m} \right)}
	{(m+2)\,(m^2+4m)^{23/4} \, {I}_{23/2}\left(\sqrt{3}{\pi}\sqrt{m^2+4m}\right)} \\
\leq\, & 
     \,\frac{d(k-1)\,2\, ((m-k)^2+4m)^{23/4}\, {I}_{23/2}\left(\sqrt{3}{\pi}\sqrt{(m-k)^2+4m} \right)}
	{(m^2+4m)^{23/4} \, {I}_{23/2}\left(\sqrt{3}{\pi}\sqrt{m^2+4m}\right)} \,,
\end{split}
\end{equation}
where we have also used that, as a function of $\Delta$, the summands in \eqref{eq:thm:nminus1} are less or equal than their value for $\Delta = 3m^2$. Unlike the previous cases with $\tilde{n}>-1$, the ratio
\begin{equation}
     \frac{{I}_{23/2}\left(\sqrt{3}{\pi}\sqrt{(m-k)^2+4m} \right)}
	{{I}_{23/2}\left(\sqrt{3}{\pi}\sqrt{m^2+4m}\right)} \, , 
\end{equation}
is a decreasing function of $m$, and therefore we cannot bound \eqref{eq:ratio-nminus1} by its value at $m\to\infty$, which is $2d(k-1)e^{-\sqrt{3}\pi k}$. As a function of $m$, the right hand side of \eqref{eq:ratio-nminus1}
 is decreasing at $m=6$ and increasing for large enough $m$. Since for $m\geq 6$ it only has one local extremum, which is a minimum, \eqref{eq:ratio-nminus1} is bounded by its values at $m=6$ and by the limiting value $m\to \infty$. Computing the value of \eqref{eq:ratio-nminus1} for $m=6$ and $m\to\infty$, for all four cases $k=1,2,3,4$ we find that the maximum value is given by the value at $m=6$. These numerical values are smaller than $0.231, 0.022, 0.003, 6\cdot 10^{-4}$ for $k=1,2,3,4$, respectively. In other words, we have
 \begin{equation}
	\sum_{k=1}^4\frac{2\,c_m^F(-1,m-k) ((m-k))^2+4m)^{23/4}I_{23/2}\left(\frac{\pi}{m}\sqrt{\Delta((m-k))^2+4m)} \right)}
	{ c_m^F(-1,m)(m^2+4m)^{23/4}I_{23/2}\left(\frac{\pi}{m}\sqrt{\Delta(m^2+4m)}\right)} \,<\, 0.255
\end{equation}
for all $m\geq 6$.

We analyse the cases $k=5,6$ separately. For this we first compute the Fourier coefficients $c_m^F(-1,m-k)$ for $m=6$ and $k=5,6$,
\begin{equation*}
    c^F_6(-1,1) \= 3d(4)+5d(0) \= 52888, \;\;\; c^F_6(-1,0) \= 2d(5)+4d(2) \= 2160240\,.
\end{equation*}
With these we can bound $c^F_m(-1,m-5) < 1.0003\,\hat{c}^F_m(-1,m-5)$ and $c^F_m(-1,m-6) < 1.006\,\hat{c}^F_m(-1,m-6)$ for $m\geq 6$. To bound the quantities
\begin{equation} \label{eq:intermediate-bound-k56}
   \frac{2\,c_m^F(-1,m-k) ((m-k))^2+4m)^{23/4}I_{23/2}\left(\frac{\pi}{m}\sqrt{\Delta((m-k))^2+4m)} \right)}
	{ c_m^F(-1,m)(m^2+4m)^{23/4}I_{23/2}\left(\frac{\pi}{m}\sqrt{\Delta(m^2+4m)}\right)} \,,
\end{equation}
for $k=5,6$, we follow the same reasoning as before. Again, the ratio
\begin{equation}
    \frac{((m-k))^2+4m)^{23/4}I_{23/2}\left(\sqrt{3}{\pi}\sqrt{((m-k))^2+4m)} \right)}
	{(m^2+4m)^{23/4}I_{23/2}\left(\sqrt{3}{\pi}\sqrt{(m^2+4m)}\right)}\,,
\end{equation}
is initially decreasing with $m$ and asymptotes to $e^{-\sqrt{3}\pi k}$ for $m\to\infty$, having only one minimum in this range. Thus, we can bound \eqref{eq:intermediate-bound-k56} by the values at the extrema which turn out to be given by the values at $m=6$, which are smaller than $0.0012$ and $0.003$ for $k=5,6$ respectively. Thus, we have
 \begin{equation}
	\sum_{k=1}^6\frac{2\,c_m^F(-1,m-k) ((m-k)^2+4m)^{23/4}I_{23/2}\left(\frac{\pi}{m}\sqrt{\Delta((m-k)^2+4m)} \right)}
	{ c_m^F(-1,m)(m^2+4m)^{23/4}I_{23/2}\left(\frac{\pi}{m}\sqrt{\Delta(m^2+4m)}\right)}\, <\, 0.26
\end{equation}
for all $m\geq 6$.

Finally, we bound the remaining the contributions
\begin{equation} \label{eq:nminusone-othercases}
   \sum_{k=7}^{m}\frac{2\,c_m^F(-1,m-k) ((m-k)^2+4m)^{23/4}I_{23/2}\left(\sqrt{3}{\pi}\sqrt{((m-k)^2+4m)} \right)}
	{ c_m^F(-1,m)(m^2+4m)^{23/4}I_{23/2}\left(\sqrt{3}{\pi}\sqrt{(m^2+4m)}\right)}\, ,
\end{equation}
for $m\geq 7$.

From Lemma \ref{prop:polar-nminus1}, we can use the inequality \eqref{eq:bound-nminusone-polar} to show that each summand in \eqref{eq:nminusone-othercases} is smaller than
\begin{equation} \label{eq:bound-nminusone-remaining}
   \frac{2 (\kappa_4+\kappa_5\,{k^{1/2}})d(k-1) ((m-k)^2+4m)^{23/4}I_{23/2}\left(\sqrt{3}{\pi}\sqrt{((m-k)^2+4m)} \right)}
	{ (m^2+4m)^{23/4}I_{23/2}\left(\sqrt{3}{\pi}\sqrt{(m^2+4m)}\right)}\,.
\end{equation}
In the limit $m\to\infty$ the quantity in \eqref{eq:bound-nminusone-remaining} becomes
\begin{equation} \label{eq:limit-minfinity-nminusone}
    2 (\kappa_4+\kappa_5\,{k^{1/2}})d(k-1) e^{-\sqrt{3}\pi k} \,.
\end{equation}
To bound the summands in \eqref{eq:nminusone-othercases} for all possible values of $m$ one can proceed as in the previous cases. In the range $m \geq k$ the quantity \eqref{eq:bound-nminusone-remaining} has only one minimum, being bounded by its values at the extrema of the domain. The limit $m\to\infty$ is given by \eqref{eq:limit-minfinity-nminusone} and the first possible value $m$ can take for each $k$ is $m=k$, i.e.,
\begin{equation}
   \frac{2 (\kappa_4+\kappa_5 \,{k^{1/2}})d(k-1) (4k)^{23/4}I_{23/2}\left(\sqrt{3}{\pi}\sqrt{4k)} \right)}
	{ (k^2+4k)^{23/4}I_{23/2}\left(\sqrt{3}{\pi}\sqrt{(k^2+4k)}\right)}\,.
\end{equation}
Using the bound \eqref{eq:ratio-bessels-bound} and the second inequality in \eqref{eq:dedekindeta-bounds}, we can write
\begin{equation}
\begin{split}
      & \frac{2 (\kappa_4+\kappa_5 \,{k^{1/2}})d(k-1) (4k)^{23/4}I_{23/2}\left(\sqrt{3}{\pi}\sqrt{4k)} \right)}
	{ (k^2+4k)^{23/4}I_{23/2}\left(\sqrt{3}{\pi}\sqrt{(k^2+4k)}\right)} < \\
  & \left(1+\frac{1}{12} \right)\frac{4\pi  (\kappa_4+ \kappa_5 \,{k^{1/2}}) (4k)^{11/2}}
	{(k-1)^{13/2} (k^2+4k)^{11/2}} \, I_{13}\left(4\pi \sqrt{k-1} \right) \,e^{-\sqrt{3}\pi(\sqrt{k^2+4k}-2\sqrt{k})} = e_f(k)\,.
\end{split}
\end{equation}
This final estimate $e_f(k)$  is larger than \eqref{eq:limit-minfinity-nminusone} for all $k\geq 7$, and decreasing with $k$. Therefore we can bound \eqref{eq:nminusone-othercases} by
\begin{equation}
     \sum_{k=7}^{\infty}e_f(k) \,<\, 
  \int_{6}^{\infty} dk \, e_f(k) \,<\, 0.005 \,,
\end{equation}
so that
\begin{equation}
    \sum_{k=1}^m\frac{2\,c_m^F(-1,m-k) ((m-k))^2+4m)^{23/4}I_{23/2}\left(\frac{\pi}{m}\sqrt{\Delta((m-k))^2+4m)} \right)}
	{ c_m^F(-1,m)(m^2+4m)^{23/4}I_{23/2}\left(\frac{\pi}{m}\sqrt{\Delta(m^2+4m)}\right)} < 0.3 \,.
\end{equation}
\end{proof}
\subsection{Final bounds}
The proof of Theorem \ref{Main-Theorem} for the cases $m\geq 6$ follows from combining the results of the previous lemmas, as follows

\begin{proof}[Proof of Theorem \ref{Main-Theorem} for $m\geq 6$]
    From the Rademacher expansion for the coefficients $c_m^F({n},{\ell})$ \eqref{eq:Rademacher-formula} with $\Delta >0$, and $n\geq m$, $0\leq \ell \leq m$, we can bound $ (-1)^{\ell+1}c_m^F({n},{\ell}) $ from below by the positive contribution minus the absolute value of the other contributions,
    \begin{equation}
    \begin{split}
              (-1)^{\ell+1}c_m^F({n},{\ell})  \, \geq \, & \sum_{\substack{\tilde{n}\geq -1 \\ \tilde{\Delta}<0} } C_{\mathrm{pos.}}(\Delta, m, \tilde{n})\left( 1-\frac{2}{21} \right) \\
               & - 2\pi\left(1+\frac{2}{21} \right) \sum_{\substack{0\leq \tilde{\ell} < m \\ \tilde{n}\geq -1,\\
		\tilde{\Delta}<0}}  \frac{2}{\sqrt{2m}}\,c^F_m(\tilde{n},\tilde{\ell})
	  \left(\frac{ \vert\tilde{\Delta} \vert}{\Delta} \right)^{23/4} I_{23/2}\left(\frac{\pi}{ m}\sqrt{\Delta\vert\tilde{\Delta}\vert} \right) \\
              & - \left\vert \sum\limits_{\gamma=1}^{+\infty} C_{12}(m,n,\ell,\gamma) \right\vert
              -\left\vert \sum\limits_{\gamma=1}^{+\infty} C_{25/2}(m,n,\ell,\gamma) \right\vert \,,
    \end{split}
    \end{equation}
    where we have used Lemma \ref{lemma:bounds-besselterms-general} for writing the bounds on the absolute values in the first and second line. The positive contribution with $\tilde{n}=-1$ is the dominant term which was used in Lemma \ref{prop:I12-I25-total-bound} to bound the last line. We  have
    \begin{equation}
    \begin{split}
               & C_{\mathrm{pos.}}(\Delta, m, -1)\left( 1-\frac{2}{21} \right) - \left\vert \sum\limits_{\gamma=1}^{+\infty} C_{12}(m,n,\ell,\gamma) \right\vert
              -\left\vert \sum\limits_{\gamma=1}^{+\infty} C_{25/2}(m,n,\ell,\gamma) \right\vert \\
        \= \,      &  C_{\mathrm{pos.}}(\Delta, m, -1)\left( 1- e_{12}(\Delta,m) - e_{25/2}(\Delta,m)-  \frac{2}{21} \right)\,.
    \end{split}
    \end{equation}
    Finally, from Lemmas \ref{prop:I12-I25-total-bound}, \ref{lemma:total-contr-generaln}, \ref{lemma:total-contr-n0} and \ref{lemma:total-contr-nminus1} we have
    \begin{equation}
    \begin{split}
              (-1)^{\ell+1}c_m^F({n},{\ell})  \, \geq \, &  C_{\mathrm{pos.}}(\Delta, m, -1)  \left( 1 - e_{23/2}(\Delta,m,-1)\left(1+\frac{2}{21} \right)\right. \\
               &     - e_{12}(\Delta,m) - e_{25/2}(\Delta,m)-  \frac{2}{21} \Bigg)
                \\
              &  +C_{\mathrm{pos.}}(\Delta, m, 0)  \left( 1 -e_{23/2}(\Delta,m,0)\left(1+\frac{2}{21} \right) -  \frac{2}{21} \right)  \\
               & +\sum_{\substack{\tilde{n}\geq 1 \\ \tilde{\Delta}<0} } C_{\mathrm{pos.}}(\Delta, m, \tilde{n}) \left( 1 - e_{23/2}(\Delta,m,\tilde{n})\left(1+\frac{2}{21} \right) -\frac{2}{21} \right) \, > 0 \,,
    \end{split}
    \end{equation}
    since for $m\geq 6$ and $\Delta \geq 3m^2$,
    \begin{equation}
        e_{23/2}(\Delta,m,-1)\left(1+\frac{2}{21} \right)  + e_{12}(\Delta,m) + e_{25/2}(\Delta,m) + \frac{2}{21} \,<\, 0.45 \,<\,1\,,
    \end{equation}
    \begin{equation}
         e_{23/2}(\Delta,m,0)\left(1+\frac{2}{21} \right) +  \frac{2}{21}  \,<\, 0.25 \,<\,1\,,
    \end{equation}
    and 
    \begin{equation}
       e_{23/2}(\Delta,m,\tilde{n})\left(1+\frac{2}{21} \right) +\frac{2}{21} \,<\, 0.35 \,<\,1\,.
    \end{equation}
\end{proof}
The cases $m=1,2$ were proven in \cite{Bringmann:2012zr} and the proof for the remaining cases $m=3,4,5$ is given in Appendix \ref{appendixA}.

\section*{Acknowledgements}

I would like to thank the participants of the GASP workshop at IPMU Tokyo, Justin David, and Sameer Murthy for useful and enjoyable discussions. I would like to thank IPMU Tokyo and the University of Amsterdam for hospitality, and the Isaac Newton Institute for Mathematical Sciences, Cambridge, for support and hospitality during the programme Black holes: bridges between number theory and holographic quantum information, where work on this paper was undertaken. This work was supported by EPSRC grant EP/R014604/1.

\newpage

\appendix

\section{The first three cases \texorpdfstring{$m=3,4,5$}{m = 3, 4, 5}}
\label{appendixA}

In this Appendix we prove the statement of the main Theorem \ref{Main-Theorem} for the cases $m=3,4,5$. The proof is illustrative of the power of the Rademacher expansion to easily prove the positivity on a case by case basis when the polar coefficients are known. %

\begin{proof}[Proof of Theorem \ref{Main-Theorem} for  $m =3,4,5$]
    Starting with $m=3$, we first compute the polar coefficients $c^F_3(\tilde{n},\tilde{\ell})$ with $\tilde{\Delta}<0$ that enter the formula \eqref{eq:Rademacher-formula}. As earlier, since we are instructed to sum over $\tilde{\ell}\in\mathbb{Z}/2m\mathbb{Z}$, we can choose the set of representatives $-m < \tilde{\ell}\leq m$, which we can further reduce to $0\leq\tilde{\ell}\leq m$ due to the symmetry $c^F_m(\tilde{n},\tilde{\ell})=c^F_m(\tilde{n},-\tilde{\ell})$. We have
\begin{center}
\begin{tabular}{c|cccc}
	 $c^F_3(\tilde{n},\tilde{\ell})$& $\tilde{\ell} = 3$ & $\tilde{\ell} = 2$ & $\tilde{\ell}=1$ & $\tilde{\ell} = 0$  \\
	 \hline 
	 $\tilde{n}=-1$ & 5 & 96  & 972 & 6404 \\
	 $\tilde{n}= 0$ & 1728 & 15600 &  85176 & \\
\end{tabular}
\end{center}
where these coefficients are computed from the formula \eqref{polar-coef-formula}. More explicitly,
\begin{equation}
\begin{split}
		c_3^F(-1,3)_{\tilde{\Delta}=-21} &\= 5d(-1)d(-1) \= 5, \\
		 c_3^F(-1,2)_{\tilde{\Delta}=-16} &\= 4d(-1)d(0) \= 96, \\
		 c_3^F(-1,1)_{\tilde{\Delta}=-13} &\= 3d(-1)d(1) \= 972, \\
		 c_3^F(-1,0)_{\tilde{\Delta}=-12} &\= 2d(-1)d(2)+4d(-1)d(-1) \= 6404,
\end{split}
\end{equation}
and
\begin{equation}
\begin{split}
		c_3^F(0,3)_{\tilde{\Delta}=-9} & \= 3d(0)d(0) \= 1728,\\
		 c_3^F(0,2)_{\tilde{\Delta}=-4} & \= 2d(0)d(1)+2d(0)d(-1) \= 15600,\\
		  c_3^F(0,1)_{\tilde{\Delta}=-1} & \= d(0)d(2)+d(0)d(1)+d(0)d(0)+d(0)d(-1) \= 85176.
\end{split}
\end{equation}
From these values one can directly see that, roughly, for smaller $\vert\tilde{\Delta}\vert$ the polar coefficients are larger and therefore could compete with the Bessel function growth. 
As in the main text, we want to show now that, for each $\tilde{n}$, the contribution with $\tilde{\ell}=3$ to the Rademacher expansion is larger than the contributions with $\tilde{\ell}<3$, being enough to prove it for $\gamma=1$. 
That is, for a given $m$, we want to show that for each $\tilde{n}$, the term
\begin{equation}
	 2\pi \,  c^F_m(\tilde{n},m)\frac{1}{\sqrt{2m}}
	  \left(\frac{ \vert\tilde{\Delta} \vert}{\Delta} \right)^{23/4} I_{23/2}\left(\frac{\pi}{ m}\sqrt{\Delta\vert\tilde{\Delta}\vert} \right)
\end{equation} 
majorates
\begin{equation}
	\left\vert  (-1)^{\ell+1}  2\pi \sum_{\substack{ -m < \tilde{\ell}<m\\
		\tilde{\Delta}<0}} c^F_m(\tilde{n},\tilde{\ell}){{\rm Kl}(\frac{\Delta}{4m},\frac{\tilde{\Delta}}{4m},1,\psi)_{\ell \tilde{\ell}}}
	  \left(\frac{ \vert\tilde{\Delta} \vert}{\Delta} \right)^{23/4} I_{23/2}\left(\frac{\pi}{ m}\sqrt{\Delta\vert\tilde{\Delta}\vert} \right) \right \vert \,.
\end{equation} 
For $\gamma=1$,
\begin{equation} \label{eq:kloosterman-gamma1-bound}
    \left\vert  {{\rm Kl}(\frac{\Delta}{4m},\frac{\tilde{\Delta}}{4m},1,\psi)_{\ell \tilde{\ell}}}
	   \right \vert \,\leq\,
    \frac{1}{\sqrt{2m}}\,,
\end{equation}
so that the ratio to bound is
\begin{equation}
   \sum_{\substack{ -m < \tilde{\ell}<m\\
		\tilde{\Delta}<0}} \frac{2  c^F_m(\tilde{n},\tilde{\ell})
	  { \vert 4 m \tilde{n}-\tilde{\ell}^2 \vert} ^{23/4} I_{23/2}\left(\frac{\pi}{ m}\sqrt{\Delta\vert 4 m \tilde{n}-\tilde{\ell}^2\vert} \right)}{  c^F_m(\tilde{n},m)
	  { \vert 4m \tilde{n}-m^2 \vert} ^{23/4} I_{23/2}\left(\frac{\pi}{ m}\sqrt{\Delta\vert4m \tilde{n}-m^2\vert} \right)}\,.
\end{equation}
For the case $m=3$ we can just evaluate each numerical value. Using the fact that the function
\begin{equation}
	\frac{I_\nu(ax)}{I_\nu(bx)}
\end{equation}
is a decreasing function of $x$ for $0<a<b$ and $\nu>0$, we can use the inequality $\Delta =4mn-\ell^2\geq 3m^2 = 27$ to bound
\begin{equation} \label{eq:m3-bound-nminus1}
\begin{split}
	e_{23/2}(\Delta,3,-1 ) \=  &\sum_{\tilde{\ell}=0}^2 \frac{2c_3^F(-1,\tilde{\ell})}{c_3^F(-1,3)} \frac{\vert 12+\tilde{\ell}^2\vert^{23/4}}{\vert 12+9^2\vert^{23/4}}\frac{I_{23/2}\left( \frac{\pi}{3} \sqrt{\Delta(12+\tilde{\ell}^2)} \right)}{I_{23/2}\left( \frac{\pi}{3} \sqrt{\Delta(12+9^2)} \right)} \,\leq\, \\
	& 	\sum_{\tilde{\ell}=0}^2 \frac{2c_3^F(-1,\tilde{\ell})}{c_3^F(-1,3)} \frac{\vert 12+\tilde{\ell}^2\vert^{23/4}}{\vert 12+9^2\vert^{23/4}}\frac{I_{23/2}\left( \sqrt{3}{\pi}\sqrt{(12+\tilde{\ell}^2)} \right)}{I_{23/2}\left( \sqrt{3}{\pi} \sqrt{(12+9^2)} \right)} \,<\,0.44	 \,.
\end{split}
\end{equation}
For $\tilde{n}=0$ applying the same reasoning we get
\begin{equation} \label{eq:m3-bound-n0}
	e_{23/2}(\Delta,3,0 ) \= \sum_{\tilde{\ell}=1}^2 \frac{2c_3^F(0,\tilde{\ell})}{c_3^F(0,3)} \frac{\vert \tilde{\ell}^2\vert^{23/4}}{\vert 9^2\vert^{23/4}}\frac{I_{23/2}\left( \frac{\pi}{3} \sqrt{\Delta(\tilde{\ell}^2)} \right)}{I_{23/2}\left( \frac{\pi}{3} \sqrt{\Delta(9^2)} \right)} <0.0002\,.
\end{equation}

For the case $m=4$ we follow exactly the same reasoning as previously, being therefore less explicit. The non-zero polar coefficients $c^F_4(\tilde{n},\tilde{\ell})$ are
\begin{center}
\begin{tabular}{c|ccccc}
	 $c^F_4(\tilde{n},\tilde{\ell})$& $\tilde{\ell} = 4$& $\tilde{\ell} = 3$ & $\tilde{\ell} = 2$ & $\tilde{\ell}=1$ & $\tilde{\ell} = 0$  \\
	 \hline 
	 $\tilde{n}=-1$ & 6 & 120  & 1296 & 9600 & 51396 \\
	 $\tilde{n}= 0$ & 2304 & 23328 &  154752 & 700776 \\
\end{tabular}
\end{center}
With this information we can compute
\begin{equation} \label{eq:m4-bound-nminus1}
	e_{23/2}(\Delta,4,-1 )\= \sum_{\tilde{\ell}=0}^3 \frac{2c_4^F(-1,\tilde{\ell})}{c_4^F(-1,4)} \frac{\vert 16+\tilde{\ell}^2\vert^{23/4}}{\vert 16+16^2\vert^{23/4}}\frac{I_{23/2}\left( \frac{\pi}{3} \sqrt{\Delta(16+\tilde{\ell}^2)} \right)}{I_{23/2}\left( \frac{\pi}{3} \sqrt{\Delta(16+16^2)} \right)} \,<\,  0.28 \,,
\end{equation}
and
\begin{equation} \label{eq:m4-bound-n0}
	 e_{23/2}(\Delta,4,0 )\=\sum_{\tilde{\ell}=1}^3 \frac{2c_4^F(0,\tilde{\ell})}{c_4^F(0,4)} \frac{\vert \tilde{\ell}^2\vert^{23/4}}{\vert 16^2\vert^{23/4}}\frac{I_{23/2}\left( \frac{\pi}{3} \sqrt{\Delta(\tilde{\ell}^2)} \right)}{I_{23/2}\left( \frac{\pi}{3} \sqrt{\Delta(16^2)} \right)} < 0.002\,,
\end{equation}
where we have used again $\Delta \geq 3m^2$. 

Finally, the polar coefficients for $m=5$ are
\begin{center}
\begin{tabular}{c|cccccc}
	 $c^F_5(\tilde{n},\tilde{\ell})$& $\tilde{\ell} = 5$& $\tilde{\ell} = 4$& $\tilde{\ell} = 3$ & $\tilde{\ell} = 2$ & $\tilde{\ell}=1$ & $\tilde{\ell} = 0$  \\
	 \hline 
	 $\tilde{n}=-1$ & 7 & 144  & 1620 & 12800 & 76955 & 353808 \\
	 $\tilde{n}= 0$ & 2880 & 31104 &  230472 & 1246800 & 4930920 \\
  $\tilde{n}= 1$ & 315255  \\
\end{tabular}
\end{center}
which yield
\begin{equation} \label{eq:m5-bound-nminus1}
	e_{23/2}(\Delta,5,-1 )\= \sum_{\tilde{\ell}=0}^4 \frac{2c_5^F(-1,\tilde{\ell})}{c_5^F(-1,5)} \frac{\vert 20+\tilde{\ell}^2\vert^{23/4}}{\vert 16+16^2\vert^{23/4}}\frac{I_{23/2}\left( \frac{\pi}{3} \sqrt{\Delta(20+\tilde{\ell}^2)} \right)}{I_{23/2}\left( \frac{\pi}{3} \sqrt{\Delta(20+25^2)} \right)} \,<\,  0.24 \,,
\end{equation}
and
\begin{equation} \label{eq:m5-bound-n0}
	e_{23/2}(\Delta,5,0 )\=\sum_{\tilde{\ell}=1}^4 \frac{2c_5^F(0,\tilde{\ell})}{c_5^F(0,5)} \frac{\vert \tilde{\ell}^2\vert^{23/4}}{\vert 25^2\vert^{23/4}}\frac{I_{23/2}\left( \frac{\pi}{3} \sqrt{\Delta(\tilde{\ell}^2)} \right)}{I_{23/2}\left( \frac{\pi}{3} \sqrt{\Delta(25^2)} \right)} \,<\, 0.005\,.
\end{equation}

To complete the proof of positivity of these Fourier coefficients we need to also bound the contribution coming from the mock terms with Bessel functions of different index. As in the cases $m\geq 6$, we bound them with respect to the largest contribution coming from the $\tilde{n}=-1$ term. For the $I_{12}$ terms, from \eqref{eq:bound-I12-highergamma}, and following the same reasoning as in Lemma \ref{prop:I12-I25-total-bound}, the quantity
\begin{equation} \label{eq:threecases-i12-bounding}
  e_{12}(\Delta,m) \= \frac{\left\vert \sum\limits_{\gamma=1}^{+\infty} C_{12}(m,n,\ell,\gamma) \right\vert}{2\pi \, c^F_m(-1,m)\frac{1}{\sqrt{2m}}
	  \left(\frac{ \vert\tilde{\Delta} \vert}{\Delta} \right)^{23/4} I_{23/2}\left(\frac{\pi}{ m}\sqrt{\Delta\vert\tilde{\Delta}\vert} \right)}
\end{equation} 
can be shown to decrease with $m$, with the values for $m=3,4,5$ satisfying $e_{12}(\Delta,3) < 0.15$, $e_{12}(\Delta,4)<0.03$, and $e_{12}(\Delta,5) < 0.004$.

The contribution from the terms containing a Bessel function of index $25/2$ is treated differently that for the cases $m\geq 6$, since a tighter bound is needed. We first focus on the $\gamma=1$ contribution. Its absolute value, using \eqref{eq:kloosterman-gamma1-bound}, can be written as
\begin{equation} \label{eq:C25/2-m345-gamma1}\begin{split}
       & \left\vert C_{25/2}(m,n,\ell,1)\right\vert \= 
             \frac{1}{2\pi} d(m) \sum_{\tilde{\ell}\in\mathbb{Z}/2m\mathbb{Z}} \frac{1}{\sqrt{2m}} \left(\frac{4m}{\Delta} \right)^{25/4} 
	  \\
&   \int_{\frac{-1}{\sqrt{m}}}^{\frac{1}{\sqrt{m}}}dx'  \left\vert f_{1,\tilde{\ell},m}(x')\right\vert  (1-mx'^2)^{25/4}
	I_{25/2}\left(\frac{2\pi}{\sqrt{m}}\sqrt{\Delta(1-mx'^2)}\right).   \\
\end{split}
\end{equation}
The ratio of this quantity against the positive contribution \eqref{eq:def-positive-contr} with $\tilde{n}=-1$ can be majorated by
\begin{equation} \label{eq:C25/2-m345-gamma1-ratio}
        \begin{split}
            & \frac{1}{C_{\mathrm{pos}}(3m^2,m,-1)}\frac{1}{2\pi} d(m) \sum_{\tilde{\ell}\in\mathbb{Z}/2m\mathbb{Z}} \frac{1}{\sqrt{2m}} \left(\frac{4m}{3m^2} \right)^{25/4} 
	  \\
&   \int_{\frac{-1}{\sqrt{m}}}^{\frac{1}{\sqrt{m}}}dx'  \left\vert f_{1,\tilde{\ell},m}(x')\right\vert  (1-mx'^2)^{25/4}
	I_{25/2}\left({2\sqrt{3}\pi}\sqrt{m(1-mx'^2)}\right).   \\
\end{split}
\end{equation}
We can compute this quantity for $m=3,4,5$, yielding values which are smaller than $0.06, 0.01$ and $0.001$, respectively. For the higher gamma contributions we use the previous bounds \eqref{eq:gen-kloosterman-bound} and \eqref{eq:bound-f-function} together with the bounds for the sum over $\gamma$ as in Proposition \ref{prop:eta-bound} to obtain the inequality
\begin{equation} \label{eq:C25/2-m345-higher-gamma}
        \begin{split}
            &\left\vert \frac{1}{2\pi} d(m) \sum_{\gamma=2}^{+\infty}\sum_{\tilde{\ell}\in\mathbb{Z}/2m\mathbb{Z}} \sum_{\substack{ g \in \mathbb{Z}/2m\gamma \mathbb{Z} \\ g = \tilde{\ell} \text{ mod }2m}}\frac{{\rm Kl }( \frac{\Delta}{4m}, -1-\frac{g^2}{4m} ;\gamma,\psi)_{\ell\tilde{\ell}}}{\gamma^2} \left(\frac{4m}{\Delta} \right)^{25/4} \right.
	  \\
&  \left.  \int_{\frac{-1}{\sqrt{m}}}^{\frac{1}{\sqrt{m}}}dx'  f_{\gamma,g,m}(x')  (1-mx'^2)^{25/4}
	I_{25/2}\left(\frac{2\pi}{\gamma\sqrt{m}}\sqrt{\Delta(1-mx'^2)}\right) \right\vert  \\
 & \,\leq\, \frac{1}{2\pi} d(m) \pi^2  \left(m^{2} \frac{2}{21}\frac{\pi^2}{3}+\frac{2}{25}\frac{1}{3}\right) \left(\frac{4m}{\Delta} \right)^{25/4} \\
 & \int_{\frac{-1}{\sqrt{m}}}^{\frac{1}{\sqrt{m}}}dx'    (1-mx'^2)^{25/4}
	  I_{25/2}\left(\frac{2\pi}{\sqrt{m}}\sqrt{\Delta(1-mx'^2)}\right).
\end{split}
\end{equation}
The ratio of this quantity over the positive contribution \eqref{eq:def-positive-contr} can be bounded using $\Delta \geq 3m^2$ value, and the ratios for $\Delta = 3m^2$ have values smaller than $0.04, 0.008, 0.002$ for $m=3,4,5$, respectively, so that we have shown that
\begin{equation} \label{threecases-i25-bounding}
    e_{25/2}(\Delta,m) \= \frac{\left\vert \sum\limits_{\gamma=1}^{+\infty} C_{25/2}(m,n,\ell,\gamma) \right\vert}{2\pi \, c^F_m(-1,m)\frac{1}{\sqrt{2m}}
	  \left(\frac{ \vert\tilde{\Delta} \vert}{\Delta} \right)^{23/4} I_{23/2}\left(\frac{\pi}{ m}\sqrt{\Delta\vert\tilde{\Delta}\vert} \right)}\, ,
\end{equation} 
for $m=3,4,5$, satisfies $ e_{25/2}(\Delta,3) < 0.1$, $e_{25/2}(\Delta,4) < 0.02$ and $e_{25/2}(\Delta,5) <  0.003$.

We conclude by following the same steps as in the proof of Theorem \ref{Main-Theorem} for $m\geq 6$. We have
\begin{equation} 
\begin{split}
        (-1)^{\ell+1}c_m^F({n},{\ell})  \, \geq \, & \sum_{\substack{\tilde{n}\geq -1 \\ \tilde{\Delta}<0} } C_{\mathrm{pos.}}(\Delta, m, \tilde{n})\left( 1-\frac{2}{21} \right) \\
               & - 2\pi\left(1+\frac{2}{21} \right) \sum_{\substack{0\leq \tilde{\ell} < m \\ \tilde{n}\geq -1,\\
		\tilde{\Delta}<0}}  2c^F_m(\tilde{n},\tilde{\ell})
	  \left(\frac{ \vert\tilde{\Delta} \vert}{\Delta} \right)^{23/4} I_{23/2}\left(\frac{\pi}{ m}\sqrt{\Delta\vert\tilde{\Delta}\vert} \right) \\
              & - \left\vert \sum\limits_{\gamma=1}^{+\infty} C_{12}(m,n,\ell,\gamma) \right\vert
              -\left\vert \sum\limits_{\gamma=1}^{+\infty} C_{25/2}(m,n,\ell,\gamma) \right\vert \,,
\end{split}
\end{equation}
which we can write as follows for the three cases at hand
\begin{equation}
\begin{split}
        (-1)^{\ell+1}c_m^F({n},{\ell})  \, \geq \,  & C_{\mathrm{pos.}}(\Delta, m, -1) \left(1 -e_{23/2}(\Delta,m,-1 )\left(1+\frac{2}{21} \right) \right. \\
         & -e_{12}(\Delta,m) -e_{25/2}(\Delta,m) - \frac{2}{21} \Bigg) \\
       & +  C_{\mathrm{pos.}}(\Delta, m, 0)\left(1 -e_{23/2}(\Delta,m, 0)\left(1+\frac{2}{21} \right)- \frac{2}{21} \right) \, > \, 0 \,,
\end{split}
\end{equation}
where the last inequality follows from the previous numerical values for the bounds.

\end{proof}

\newpage

\bibliography{n=4}
\bibliographystyle{utphys}

\end{document}